\newcommand{\bs}{\boldsymbol}
\newcommand{\eps}{{\varepsilon}}
\newcommand{\beq}{\begin{equation}}
\newcommand{\eeq}{\end{equation}}
\newcommand{\beqs}{\begin{eqnarray}}
\newcommand{\eeqs}{\end{eqnarray}}
\newcommand{\beql}{\begin{equation} \label}
\newcommand{\lel}{ \ {\scriptstyle \le} \ }
\def\Xint#1{\mathchoice
{\XXint\displaystyle\textstyle{#1}}%
{\XXint\textstyle\scriptstyle{#1}}%
{\XXint\scriptstyle\scriptscriptstyle{#1}}%
{\XXint\scriptscriptstyle\scriptscriptstyle{#1}}%
\!\int}
\def\XXint#1#2#3{{\setbox0=\hbox{$#1{#2#3}{\int}$ }
\vcenter{\hbox{$#2#3$ }}\kern-.55\wd0}}
\def\dashint{\Xint-}
\numberwithin{equation}{section}
\begin{document}

\begin{center}
{ \Large Mathematical modeling of magnetostrictive nanowires for sensors/devices with application to Galfenol}  \\
\vspace{0.2in}
{\it Krishnan Shankar \\Aerospace Engineering and Mechanics, University of Minnesota 
}
\vspace*{2mm}
\end{center}

\begin{abstract}
Magnetostrictive wires of diameter in the nanometer scale have been proposed for application as acoustic sensors \cite{downey2008}, \cite{yang2006distant}. The sensing mechanism is expected to operate in the bending regime. In this work we derive a variational theory for the bending of magnetostrictive nanowires starting from a full 3-dimensional continuum theory of magnetostriction. We recover a theory which looks like a typical Euler-Bernoulli bending model but includes an extra term contributed by the magnetic part of the energy. The solution of this variational theory for an important, newly developed magnetostricitve alloy called Galfenol $ \big( \mbox{cf. } \cite{clark2000} \big)$ is compared with the result of experiments on actual nanowires $ \big( \mbox{cf. } \cite{downey2008thesis} \big)$ which shows agreement.   
\end{abstract}
{\centering \section{Introduction} }
\label{Sec-Introduction}

Magnetostrictive solids are those in which reversible elastic deformations are caused by changes in the magnetization. These materials have a coupling of  ferromagnetic energies with elastic energies. Typically magnetostriction is a small effect in the range of 20-200 ppm for commonly occurring ferromagnetic materials like Fe, Co and Ni alloys. In the 1970's giant magnetostrictive alloys like $ Tb_{0.3}Dy_{0.7}Fe_2 $ were developed. This alloy called Terfenol has high magnetostriction of the order $ \sim 2000 $ ppm, but is very brittle, and has low tensile strength of the order $ \sim 100 $ MPa. For this reason in most sensor/actuator applications it is used under compressive strain. Recent research by Clark et al. \cite{clark2000} has led to the development of a new alloy called Galfenol with formula $ Fe_{100-x}Ga_{x} $ where $ x $ ranges from $ 10\%-30\% $. These alloys have relatively high magnetostriction $ \sim 400 $ ppm and high tensile strengths $ \sim 400 MPa$.
 
In recent years a lot of new experimental techniques have been developed to manufacture ferromagnetic wires of nanometer diameter such as electron-beam lithography, step growth electro-deposition, and template-assisted electro-deposition. A possible application of these nanosize wires is in making acoustic sensors. The inspiration for this application comes from the structure of the human ear. The inner ear has fine cilia like hair whose response to impinging acoustic waves is transmitted by the nervous system to the brain. Such biologically inspired devices have been proposed to detect acoustic, fluid flow and tactile inputs (cf. \cite{yang2006distant} ). One possible arrangement of galfenol nanowires is in the form of an array depicted in Fig~\ref{sensor} . Here impinging acoustic waves are expected to change the magnetization of the wire array by inducing bending deformation.
\begin{figure}[!h]
\begin{center}
\includegraphics[scale=0.40]{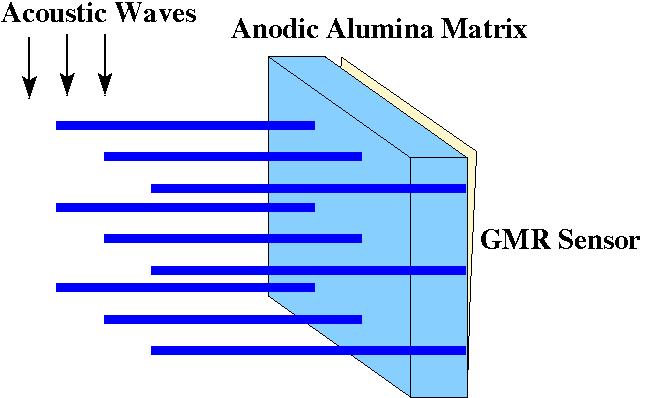}
\caption{Proposed model device using nanowires of Galfenol}
\label{sensor}
\end{center}
\end{figure}

The models of a vibrating string and the bending of a beam are important models in elasticity which are known to approximate the full 3-D behavior of a deformable body in the linear strain regime. Starting in the 80's rigorous mathematical methods based on the theory of $\Gamma$-convergence were used to justify these 1-D models as the correct approximation of 3-D elasticity, loosely speaking under asymptotic conditions as the diameter of the 3-D body approaches zero. The basic references for these results are \cite{acerbi1991variational} and \cite{anzellotti1994dimension}, while reference for $\Gamma$ -convergence can be found in \cite{braides2002gamma}.

Meanwhile in the micromagnetics literature there has been extensive use of $\Gamma$-convergence based methods to derive reduced dimension models for ferromagnetic thin films. The earliest results in this direction are  \cite{gioia1997micromagnetics} and \cite{carbou2001thin}.
Since our nanowires are expected to be used for the proposed sensor application in the bending deformation regime, the main goal of this paper is to combine the ideas of the references cited above from the elasticity and micromagnetics literature to derive similar asymptotic models for magnetostrictive nanowires in bending. The nanowires we are modeling have diameters in the 10-100nm range with lengths in the range 2-5$\mu$m. We will show that the bending behavior of a magnetostrictive nanowire resembles the classical Euler-Bernoulli bending model with an extra term which comes from the magnetic part of the energy.

\S \ \ref{Sec-micromagnetics} gives a brief review of the continuum theory of magnetostriction and defines the classical energy $ \mathcal E(\widetilde{\bs m \,}, \widetilde{\bs u \,}) $ as a function of the magnetization-deformation pair $ (\widetilde{\bs m \,}, \widetilde{\bs u \,})$. The section \S \ \ref{Sec-Scaling} gives a simple heuristic argument to show the various scales of elastic and magnetic energy relevant to the final result. In \S \ \ref{Rescaling} we start with the energy $ \mathcal E(\widetilde{\bs m \,}, \widetilde{\bs u \,}) $ defined on a wire of diameter $ \eps $ and on rescaling the wire to have unit diameter, recover a new energy $ \mathcal I^\eps(\bs m,\bs u) $ which equals the energy $ \mathcal E(\widetilde{\bs m \,}, \widetilde{\bs u \, }) $ per unit wire cross-sectional area, and depends on a rescaled magnetization-deformation pair $(\bs m,\bs u)$ now defined on the wire with unit diameter.  
Starting with minimizers $ (\bs m^\eps, \bs u^\eps)$ of the energy $ \mathcal I^\eps(\bs m,\bs u) $ in \S \ \ref{Secfirstgammalimit} we derive the first variational limit problem which physically represents the magnetoelastic equivalent to the elastic theory of an extensible string. \S \ \ref{Secsecondgammalimit}  gives the next order correction to the first variational problem which only involves magnetic terms. \S \ \ref{Secthirdgammalimit} gives the following order variational problem which is the main result of this paper and describes the bending behavior of the magnetostrictive nanowires. Here we show that we can extract a deformation $ \bs w^\eps \ \big(\mbox{cf.} \ \eqref{chi} \big) $ from the energy minimizing pair $ (\bs m^\eps,\bs u^\eps) $ which itself minimizes an energy $ \mathcal I^o_2 \ \big(\mbox{cf.} \ \eqref{I0defn.} \big) $ where $ \mathcal I^o_2 $ is an energy which resembles the classical Euler-Bernoulli bending energy with some correction terms depending on the magnetization. The method of proof involves the idea of convergence of minimizers, and we do not use the more abstract $ \Gamma$-convergence method. The Appendix \ref{Sec-Linear-gamma} treats the magnetostatic energy separately.

Basic notation: $ \alpha,\beta,\gamma,\cdots $ are scalars; $ \bs a, \bs u,\bs m,\cdots $ denote vectors in $ \mathbb R^3$; $\bs A, \bs B, \bs E, \cdots$ are tensors in $ \mathbb R^{3 \times 3}$ and $ S^2 \subset \mathbb R^3$ represents the surface of the unit ball in $ \mathbb R^3$. Components of any vector $ \bs m $ are denoted by either $ m_1,m_2,m_3 $ or $ m_x,m_y,m_y$. For any matrix $ \bs A $, $ \bs A^{T} $ denotes the transpose of the matrix. We use standard function space notation of $L^2(\Omega,\mathbb R^3)$, $H^1(\mathbb R^3,\mathbb R^3)$, $H_0^1(\Omega,\mathbb R^3)$; for details refer \cite{adams2009sobolev}. By Young's inequality we mean $ 2 a b \le \delta^{-1} a^2 + \delta b^2 \ $ for $ \ \mathbb R \ni \delta > 0 $, \ a variation of the classical Young's inequality.

{\centering \section{Micromagnetics} \label{Sec-micromagnetics}  }

The initial model for ferromagnetic solids was proposed in \cite{landau1935} where they also derived a model for magnetization dynamics. The continuum theory of ferromagnetic materials was  developed in the work of Brown \cite{brown1963} which was subsequently expanded to a theory for magnetostriction in \cite{brown1966}, where a variational model for magnetostriction with small strain is developed. We give a brief presentation of Brown's work relevant to magnetostriction in this section.

Let $ \Omega_\eps $ be a smooth bounded reference configuration in $ \mathbb R^3 $ depending on a parameter $\eps$. In the following sections we fill specify this dependence. Let $ \widetilde{\bs m \,}( \bs y) $ be the magnetization vector at a point $ \bs y \in \Omega_\eps $. Below the Curie temperature, the magnetization is constrained to have constant euclidean norm i.e., 
\begin{align} |\widetilde{\bs m \,}( \bs y )| = m_s \quad  a.e. \qquad \bs y \in \Omega_\eps. \nonumber \end{align}
For a bounded domain, this constraint implies $ \widetilde{\bs m \,} \in L^{ p }(\Omega_\eps,m_s S^2), \ \forall \ 1 \le p \le \infty$. We extend $ \widetilde{\bs m \,} $ by $ 0 $ outside $ \Omega_\eps $ whenever necessary and denote it by $ \widetilde{\bs m \,} \chi_{\Omega_\eps} = \widetilde{\bs m \,}(\bs y) \,\chi_{\Omega_\eps}(\bs y) $ which as a result gives $ \widetilde{\bs m \,} \chi_{\Omega_\eps} \in L^{ p }(\mathbb{R}^3,\mathbb{R}^3), \ \forall \ 1 \le p \le \infty $. We denote by $ \widetilde{\bs u \,} \in H^1( \Omega_\eps, \mathbb R^3 ) $ the displacement map. The infinitesimal strain corresponding to $ \widetilde{\bs u \,}( \bs y) $ is, $\big( \nabla^{\bm y} \mbox{ is gradient w.r.t. } \bs y \big)$
\begin{align} \widetilde{\bs E}[\widetilde{\bs u \,}](\bs y) & = \frac{1}{2} \big( \nabla^{\bm y} \widetilde{\bs u \,}( \bs y) + \nabla^{\bm y} \widetilde{\bs u \,}( \bs y)^T \big). \end{align}

Interaction of the magnetization with the crystalline structure of a magnetic solid generates an interaction energy modeled by a function, $ \varphi : m_s S^2 \rightarrow [0, \infty) $. This energy has a finite number of wells (say $N$) along a set of constant magnetization vectors $ \big\{ \widetilde{\bs m \,}^{(k)} \big\} \in m_s S^2$ where the index $ k \in \big\{ \, 1,2,\cdots N \, \big\} $ and on which without loss of generality we can set $ \varphi \big( \widetilde{\bs m \,}^{(k)} \big)= 0 $. 
The anisotropy energy thus becomes,
\begin{align*} E_{anis} &=  \int_{\Omega_\eps} \varphi \big( \, \widetilde{\bs m \,}( \bs y) \big) \ \bs d  \bs y. \end{align*}
For cubic materials $ \varphi( \widetilde{\bs m \,} ) = \frac{\Pi_1}{m_s^4}( \widetilde{m}_1 ^2\widetilde{m}_2 ^2 + \widetilde{m}_1 ^2 \widetilde{m}_3 ^2 + \widetilde{m}_2 ^2\widetilde{m}_3 ^2 ) + \frac{\Pi_2}{m_s^6} (\widetilde{m}_1^2 \widetilde{m}_2 ^2 \widetilde{m}_3 ^2) $, which along with the constraint $ | \, \widetilde{\bs m \,} \, |=m_s $ gives that $0 \le \varphi ( \widetilde{\bs m \,} ) \le K_1$. Thus
\begin{align}
0 \le E_{anis}= \int_{\Omega} \varphi \big( \widetilde{\bs m \,}( \bs y) \big) \ \bs d  \bs y \, \le \, K_1 |\,\Omega_\eps|. \label{Anisotroybound}
\end{align}
The exchange energy penalizes variations in the magnetization in a body and thus tends to prefer constant magnetizations. It is modeled as follows,
\begin{align*} E_{exc} = d \int_{\Omega_\eps} \big| \nabla^{\bm y} \widetilde{\bs m \,} \big|^2 \, \bs d \bs y. \end{align*}
Here $ d $ is called the exchange constant. Magnetized bodies generate a magnetic self field in all of $ \mathbb R^3$. This field $ \widetilde{\bs h}^\eps_{\widetilde{\bs m}}(\bs y) $ is  given by the following equation,
\begin{align*} & \nabla^{\bm y} \cdot \big( -\nabla^{\bm y}  \widetilde \phi^\eps(\bs y) + 4\pi \widetilde{\bs m \,}(\bs y) \, \big) = 0 \qquad \forall  \bs y \in \mathbb R^3, \\
& \widetilde{\bs h}^\eps_{\widetilde{\bs m}}(\bs y) = -\nabla^{\bm y} \widetilde \phi^\eps(\bs y), \\
& \big[ \big| \nabla^{\bm y} \widetilde \phi^\eps \cdot \widetilde{\bs n} \big| \big] = \big[ \big| -\widetilde{\bs h}^\eps_{\widetilde{\bs m}} \cdot \widetilde{\bs n} \big| \big] = 4 \pi \ \widetilde{\bs m \,} \cdot \widetilde{\bs n} \qquad \mbox{on } \partial \Omega_\eps. \end{align*}
$ [| \cdot |] $ represents the jump of a quantity across any oriented surface with unit normal $ \tilde{\bs n} $. The demagnetization energy is generated by the interaction of the magnetization $\widetilde{\bs m \,} $ with $ \widetilde{\bs h}^\eps_{\widetilde{\bs m}} $ and equals
\begin{align} E_{demag}(\widetilde{\bs m \,}) = \frac{1}{8 \pi} \int_{ \mathbb{R}^3 } \big| \widetilde{\bs h}^\eps_{\widetilde{\bs m}} (\bs y )  \big|^2  \, \bs d \bs y = - \frac{1}{2} \int_{\Omega_\eps} \widetilde{\bs h}^\eps_{\widetilde{\bs m}}( \bs y ) \cdot \widetilde{\bs m \,}( \bs y) \, \bs d \bs y. \label{demagdefnition}\end{align}
A standard upper and lower bound for $ E_{demag} $ is given by
\begin{align} 0 \le  \ \ E_{demag}(\widetilde{\bs m \,}) =\frac{1}{8 \pi} \int_{ \mathbb{R}^3 } \big| \widetilde{\bs h}^\eps_{\widetilde{\bs m}} (\bs y )  \big|^2  \, \bs d \bs y \ \ \le \frac{1}{2} \int_{ \Omega_\eps} |\widetilde{\bs m \,}( \bs y)|^2 \, \bs d \bs y = \frac{1}{2} \, |\, \Omega_\eps| \, m_s^2, \label{demagstandardbound} \end{align}
since $|\widetilde{\bs m \,}|=m_s$. The energy of interaction between an external applied field $ \widetilde{\bs h_a} \in L^2(\Omega,\mathbb R^3)$ and the magnetization over the body is modeled by the following,
\begin{align*} E_{app}(\widetilde{\bs m \,}) = -\int_{\Omega_\eps} \widetilde{\bs h_a}(\bs y) \cdot \widetilde{\bs m \,}(\bs y) \,\bs d \bs y. \end{align*}
which along with H$\ddot{\mbox{o}}$lder's inequality gives
\begin{align}
& -K_2 \le E_{app}(\widetilde{\bs m \,}) \le K_2, &  K_2= \big\Vert \widetilde{\bs h_a} \big\Vert_{L^2(\Omega_\eps)} \big\Vert \widetilde{\bs m \,} \big\Vert_{L^2(\Omega_\eps)}. \label{Zeemanbound}
\end{align} 
The elastic energy for the magnetoelastic solid for small strains is given by,
\begin{align*} E_{el} = \int_{\Omega_\eps} \frac{1}{2} \big( \, \widetilde{\bs E}[\widetilde{\bs u}] - \widetilde{\bs E}_s( \widetilde{\bs m \,} ) \, \big) : \mathbb C \big[ \widetilde{\bs E}[\widetilde{\bs u}]- \widetilde{\bs E}_s( \widetilde{\bs m \,} ) \big] \, \bs d \bs y. \end{align*}
In this paper by an abuse of notation, we write the above integrand as 
\begin{align*} \big( \, \widetilde{\bs E}[\widetilde{\bs u}] - \widetilde{\bs E}_s( \widetilde{\bs m \,} ) \, \big) : \mathbb C \big[ \widetilde{\bs E}[\widetilde{\bs u}] - \widetilde{\bs E}_s( \widetilde{\bs m \,} ) \big] = \mathbb C \big[ \widetilde{\bs E}[\widetilde{\bs u}] - \widetilde{\bs E}_s( \widetilde{\bs m \,} ) \big]^2. \end{align*} 
$\mathbb C$ is a positive definite fourth order tensor. Here $ {\widetilde{\bs E}_s( \widetilde{\bs m \,} ) } $ is the spontaneous strain due to magnetization
\begin{align} \widetilde{\bs m \,} \mapsto {\widetilde{\bs E}_s( \widetilde{\bs m \,} ) } \in \bs M^{3 \times 3 }_{sym}, \nonumber \end{align}
where $ \bs M^{3 \times 3 }_{sym} $ denotes the set of symmetric matrices of $ 3 \times 3 $ dimension. For cubic materials it's form is
\[ \widetilde{\bs E}_s (\widetilde{\bs m \,}) = \frac{3}{2 m_s^2} \displaystyle{ \begin{bmatrix} \lambda_{100} \, \widetilde{m_1}^2  &  \lambda_{111} \, \widetilde{m_1} \widetilde{m_2} & \lambda_{111} \, \widetilde{m_1}\widetilde{m_3} \\ \lambda_{111} \, \widetilde{m_1}\widetilde{m_2} & \lambda_{100} \, \widetilde{m_2}^2 & \lambda_{111} \, \widetilde{m_2}\widetilde{m_3} \\ \lambda_{111} \, \widetilde{m_1}\widetilde{m_3} & \lambda_{111} \, \widetilde{m_2}\widetilde{m_3} & \lambda_{100} \, \widetilde{m_3}^2 \end{bmatrix} } - \frac{\lambda_{100}}{2} \, I  \] 
where $ I $ is the Identity matrix in $ \mathbb R^{3 \times 3} $. The form for $ \widetilde{\bs E}_s (\widetilde{\bs m \,})$ and $|\widetilde{\bs m \,}|=m_s$ gives
\begin{align}
&\big|\widetilde{\bs E}_s (\widetilde{\bs m \,})\big| \le K_3, && \mbox{where } \big|\widetilde{\bs E}_s (\widetilde{\bs m \,})\big|^2=tr\big(\widetilde{\bs E}_s (\widetilde{\bs m \,})^T \widetilde{\bs E}_s (\widetilde{\bs m \,})\big)\label{elasticbound1}. \end{align}
The norm defined used above is the standard Frobenius norm, i.e. for any matrix $\bs M$, $\big| \bs M \big|^2$ is defined as the trace of $\bs M^T \bs M$. Also $ \mathbb C $ being symmetric positive definite $4^{th} $ order tensor gives for some $ \gamma, \Gamma > 0$,
\begin{align}
& \Gamma \, \big| \bs M^2 \big| \, \ge \, \mathbb C \big[ \bs M \big]^2 \, \ge \, \gamma \, \big| \bs M^2 \big| \, , && \forall \bs M \in \bs M_{sym}^{3 \times 3 }.  \label{elasticbound2}
\end{align}
In addition to these, energy due to external force acting on the body in the form of body force or surface traction is included in the general energy. However since these terms are lower order in deformation $\widetilde{\bs u}$, they do not affect the final form of the limit problem. For our investigation in this paper, we neglect this term to reduce the length of the computation. Thus the full energy functional for magnetostriction is,
\begin{align} \hspace{-6mm} \mathcal E^\eps( \widetilde{\bs m \,},\widetilde{\bs u}) &= E_{exc} + E_{anis} + E_{app} + E_{el} + E_{demag} \nonumber \\
&= \int_{\Omega_\eps} \Big\{ d \big| \nabla^{\bm y} \widetilde{\bs m \,} \big|^2 +\varphi( \widetilde{\bs m \,} ) - \widetilde{\bs h_a} \cdot \widetilde{\bs m \,} + \frac{\mathbb C}{2} \big[ \widetilde{\bs E}[\widetilde{\bs u}] - \widetilde{\bs E}_s( \widetilde{\bs m \,} ) \big]^2 \, \Big\} \bs d \bs y +\frac{1}{8 \pi}\int_{ \mathbb R^3 } \big| \widetilde{\bs h}^\eps_{\widetilde{\bs m}} \big|^2 \bs d \bs y.  \label{mathcale}
\end{align}
{ \centering
\section{Heuristic Scaling of energy} \label{Sec-Scaling}
}
In \S \ \ref{Sec-Scaling1} and \S \ \ref{Sec-Scaling2}, we start with a cylindrical domain with radius $ \eps $ and length 1. We then show how both an isotropic linear elastic energy and the magnetostatic energy defined in equation \eqref{demagdefnition} scale with respect to $ \eps$. The scaling of the linear elastic energy has been know for long in the engineering literature, but a rigorous derivation starting from a three-dimensional linear elastic theory is relatively recent. 
\subsection{Linear Elastic Energy} \label{Sec-Scaling1}
Let $ \Theta=\big\{ (y_1,y_2) \in B_\eps(0) \ , \ y_3 \in (0,1) \big\} $ be a cylindrical domain of radius $ \eps $ centered at the origin and length $ 1 $ with axis aligned along the $ y_3 $ axis. Let $ Y $ be the Young's Modulus, $ A =\pi\eps^2$ is the cross-sectional area,  and $ I =\frac{\pi}{4} \eps^4$ be the second moment of area of the cross section. Let $ (\widetilde{u}_1,\widetilde{u}_2,\widetilde{u}_3) $ be the displacements in $ (y_1,y_2,y_3) $ directions. From the engineering literature we know that the extensional energy of a rod along its axis is given as
\begin{align}
\int^1_0 Y \, A \, \big| \partial_3 \widetilde{u}_3 \big|^2 d y_3= Y \pi \eps^2 \int^1_0 \big| \partial_3 \widetilde{u}_3 \big|^2 d y_3 \approx O(\eps^2),
\end{align}  
where $ \widetilde{u}_3 $ is the extension of the rod along its axis. From the Euler-Bernoulli model for a beam bending in the direction of the $ x_1 $ axis, the bending energy is
\begin{align}
\int^1_0 Y \, I \, \big| \partial_{33} \widetilde{u}_1 \big|^2 d y_3= Y \frac{\pi\eps^4}{4} \int^1_0 \big| \partial_{33} \widetilde{u}_1 \big|^2 d y_3 \approx O(\eps^4).
\end{align}  
The different scaling of the two energies with respect to $ \eps$ suggests to us that a linear elastic isotropic energy of the form 
\begin{align}
\mathcal W^\eps(\widetilde{\bs u})=\int_{\Theta} \Big\{ \, \mu \big| \bs E(\widetilde{\bs u}) \big|^2+\frac{\lambda}{2} \big| \mbox{ tr}\big(\bs E( \widetilde{\bs u}) \, \big) \, \big|^2 \ \Big\} \bs d \bs y
\end{align}
should factor into terms which are of different orders in powers of $\eps$. Using $\Gamma$-convergence this factorization into orders of powers of $\eps$ has been proven in \cite{anzellotti1994dimension}. They have shown that,
\begin{align} 
\mathcal W^\eps(\widetilde{\bs u}) = \eps^2 \mathcal W_1(\widehat{u}_3) + \eps^4 \mathcal W_2(\widehat{u}_1,\widehat{u}_2,\widehat{u}_4)+\mbox{ higher order terms}
\end{align}
where $\widehat{\bs u \,}(y_3) \equiv (\widehat{u}_1,\widehat{u}_2,\widehat{u}_3)(y_3)$ and $ \ \widehat{\bs u \,}(y_3)= \mbox{\fontsize{9}{8}\selectfont $\displaystyle{ \frac{1}{|B_\eps(0)|} \int_{B_\eps(0)}}$} \widetilde{\bs u}(\bs y) \ d y_1 d y_2 \ $ is the averaged cross-sectional displacement and $ \ \widehat{u}_4(y_3)=\mbox{\fontsize{9}{8}\selectfont $\displaystyle{ \frac{2}{\eps^2|B_\eps(0)|} \ \int_{B_\eps(0)}}$} \big( \, y_2 \widetilde{u}_1-y_1 \widetilde{u}_2 \, \big) d y_1 d y_2 \ $ gives the torsional component.
\subsection{Magnetostatic energy} \label{Sec-Scaling2}
For an ellipsoidal body it is well known cf. \cite{james2007} that the demagnetization field $\widetilde{\bs h}^\eps_{\widetilde{\bs m}} $ and the corresponding demagnetization $ E_{demag} $ for a constant magnetization $ \widetilde{\bs m \,} $ are,  
\begin{align} 
& \widetilde{\bs h}^\eps_{\widetilde{\bs m}}=-4\pi\bs D \widetilde{\bs m \,}, & E_{demag}=2\pi \ \big( \mbox{ Volume of body} \big) \times \bs D \widetilde{\bs m \,} \cdot \widetilde{\bs m \,}
\end{align} 
where $ \bs D \in \mathbb R^{3\times3} $ is called demagnetization tensor. $\bs D$ is independent of position $ \bs y $, and has trace $1$. For non-ellipsoidal bodies supporting a constant magnetization $ \widetilde{\bs m \,} $, it still is true that $\widetilde{\bs h}^\eps_{\widetilde{\bs m}}=-4\pi\bs D \widetilde{\bs m \,}$. However the demagnetization tensor $\bs D $ (with trace still 1) now depends on position $ \bs y $. The magnetostatic energy is now given by $ E_{demag}= \ 2 \pi \big( \mbox{ Volume of body} \big) \times \widehat{\bs D} \widetilde{\bs m \,} \cdot \widetilde{\bs m \,},
$ where $ \widehat{\bs D} $ is the volumetric average of $ \bs D $. For our cylindrical domain $ \Theta=B_\eps(0) \times (0,1) $, $ \widehat{\bs D} $ is a diagonal matrix with entries
\begin{align}
&\widehat{D}_{33}=\frac{8\eps}{3\pi}-\frac{\eps^2}{2}+O(\eps^4), & \widehat{D}_{11}=\widehat{D}_{22}=\frac{1}{2}-\frac{4\eps}{3\pi}+\frac{\eps^2}{4}+O(\eps^4). \nonumber
\end{align}
See \cite{joseph1966ballistic} for a simple derivation of this result. The demagnetization energy for a constant magnetization $ \widetilde{\bs m \,}=(\widetilde{m \,}_1,\widetilde{m \,}_2,\widetilde{m \,}_3) $ is given by
\begin{align} \hspace{-5mm}
E_{demag}&=2\pi^2 \Big[ \eps^2 \frac{\widetilde{m \,}_1^2+\widetilde{m \,}_2^2}{2} - \eps^3 \frac{4}{3\pi} \Big( \ \big( \, \widetilde{m \,}_1^2+\widetilde{m \,}_2^2 \, \big) -2 \widetilde{m \,}_3^2  \Big) + \frac{\eps^4}{2} \Big( \frac{\widetilde{m \,}_1^2+\widetilde{m \,}_2^2}{2} - \widetilde{m \,}_3^2 \Big) \Big] \nonumber \\
&=\pi^2 \, \big( \, \widetilde{m \,}_1^2+\widetilde{m \,}_2^2\, \big) \ \eps^2+\eps^3 Q_1(\widetilde{\bs m})+\eps^4 Q_2(\widetilde{\bs m}), \label{magnetostaticM}
\end{align}
where $Q_1(\widetilde{\bs m}):=\frac{8\pi}{3}\Big( \ \big( \, \widetilde{m \,}_1^2+\widetilde{m \,}_2^2 \, \big) -2 \widetilde{m \,}_3^2 \Big)$ and $Q_2(\widetilde{\bs m}):=\frac{\pi^2}{2}\Big( \ \big( \, \widetilde{m \,}_1^2+\widetilde{m \,}_2^2 \, \big) -2 \widetilde{m \,}_3^2 \Big)$.
Thus for a cylindrical domain $\Theta$ with constant magnetization we can already see the presence of various orders of scales in the magnetostatic and elastic energy. 


{\centering \section{Rescaling} \label{Rescaling} }

In this section we rescale the domain $\Omega_\eps $ depending on a parameter $\eps$ to a fixed domain $\Omega $. The space variable in the original domain $\Omega_\eps $ is either denoted by $\bs y $ or $\bs z$ and in the rescaled domain by $\bs x$. The gradient operators w.r.t. $\bs y$ and $\bs z$ are denoted by $\nabla^{\bm y} $ and $\nabla^{\bm z} $ respectively and gradient w.r.t. $ \bs x$ is denoted as just $\nabla$. All variables in the original domain $\Omega_\eps$ come with the tilde notation, for e.g. $\widetilde{\bs m}$ while variables in the rescaled domain are plain e.g. $\bs m$. For any vector $\bs v \in \mathbb R^3$, we will write $ \bs v =(v_1,v_2,v_3) = (\bs v_p,v_3)$ where $p=1,2$ and $\bs v_p$ denotes the planar component of $\bs v$. Analogously the gradient operator may be denoted by $\nabla = (\nabla_p, \partial_3)$.

 Let $ \Omega_\eps := \big[ \, \bs y_p \in \omega_\eps, \, y_3 \in (0,1) \, \big] \ \ $ be a domain with cross-section $ \omega_\eps \subset \mathbb R^2$ where $\omega_\eps $ is any Lipschitz domain in 2-dimensions. While the results of all the subsequent sections in this paper hold for any arbitrary cross-section $ \omega_\eps$, however for the sake of simplicity we set 
\begin{align}\omega_\eps = B_\eps(0) \subset \mathbb R^2  \end{align} 
a ball of radius $ \eps $ in 2-D. We rescale the domain $\Omega_\eps $ to $ \Omega$ by the following one-to-one map  
\begin{align}
x_1=\frac{y_1}{\eps} & & x_2=\frac{y_2}{\eps} & & x_3=y_3. 
\end{align}
By the rescaling $ \Omega= \big[ \, \bs x_p \in \omega, \, x_3 \in (0,1) \, \big]$ where $\omega $ is now a ball with unit radius in 2 dimensions. We rescale the fields $ \widetilde{\bs m}(\bs y)$, $ \widetilde{\bs u}(\bs y) $, $ \widetilde{\bs h_a}(\bs y)$, and $ \widetilde{\bs h}^\eps_{\widetilde{\bs m}}(\bs y)$ using the one-to-one maps
\begin{align}
\bs m(\bs x) = \widetilde{\bs m}(\bs y), && \bs u(\bs x) = \widetilde{\bs u}(\bs y), && \bs h_a(\bs x)=\widetilde{\bs h_a}(\bs y), && \bs h^\eps_{\bs m}(\bs x) = \widetilde{\bs h}^\eps_{\widetilde{\bs m}}(\bs y). \label{vectorscale}
\end{align}

The map $\bs m(\bs x) = \widetilde{\bs m}(\bs y)$ being one-to-one means that we can invert the rescaled magnetization $\bs m(\bs x)$ back to the unscaled magnetization $\widetilde{\bs m}(\bs y)$. Also while the pair $\big( \widetilde{\bs h}^\eps_{\widetilde{\bs m}},\widetilde{\bs m} \big)$ satisfies Maxwell's equation on $\Omega_\eps$, the rescaled pair $\big(\bs h^\eps_{\bs m},\bs m \big)$ does not satisfy Maxwell's equation on $\Omega$. However unscaling the pair $\big(\bs h^\eps_{\bs m},\bs m \big)$ to $\big( \widetilde{\bs h}^\eps_{\widetilde{\bs m}},\widetilde{\bs m} \big)$, solves Maxwell's equation on $\Omega_\eps$. Hence the $\eps$ superscript on $\bs h^\eps_{\bs m}$.

The gradient operator $\nabla^{\bm y} = (\nabla^{\bm y}_p, \partial^{\bm y}_3)$ operating on $ \widetilde{\bs m}(\bs y) $ or $ \widetilde{\bs u}(\bs y) $ correspondingly scales as,
\begin{align}
\nabla^{\bm y}_p \,\widetilde{\bs m}(\bs y) = \frac{1}{\eps} \nabla_p \, \bs m(\bs x), && \partial_3^{\bm y} \,\widetilde{\bs m}(\bs y) = \partial_3 \, \bs m(\bs x). \nonumber
\end{align}
Using the scaling of gradients, we rescale the strain $\widetilde{\bs E}[\widetilde{\bs u}](\bs y)$ to get a new field $ \bs \kappa^\eps(\bs x) $ as
\begin{align} \widetilde{\bs E}[\widetilde{\bs u}](\bs y) &= \frac{1}{2} \, \big[ \ \nabla^{\bm y} \widetilde{\bs u}(\bs y) + \nabla^{\bm y} \widetilde{\bs u}(\bs y)^T \ \big] \nonumber \\
&= \begin{bmatrix} \frac{1}{\eps}\partial_1  u_1(\bs x) &  \frac{1}{2 \eps} \big( \, \partial_1 u_2 + \partial_2  u_1\, \big)(\bs x) & \frac{1}{2} \big( \, \frac{1}{\eps} \partial_1 u_3 + \partial_3 u_1\, \big)(\bs x)  \\ \frac{1}{2 \eps}\big( \,\partial_1 u_2 + \partial_2  u_1\, \big)(\bs x) &  \frac{1}{\eps} \partial_2 \ u_2(\bs x) &  \frac{1}{2}\big( \, \frac{1}{\eps} \partial_2 u_3 + \partial_3 u_2\, \big)(\bs x)  \\ \frac{1}{2}\big( \, \frac{1}{\eps}  \partial_1 u_3 + \partial_3 u_1\, \big)(\bs x)  & \frac{1}{2}\big( \, \frac{1}{\eps} \partial_2 u_3 + \partial_3 u_2 \, \big)(\bs x) & \partial_3 u_3(\bs x) \end{bmatrix} \nonumber \\
& =: \bs \kappa^\eps[\bs u](\bs x). \label{kappa}
\end{align} 
Substituting the above transformations into equation \eqref{mathcale} we get
\begin{align} \mathcal E^\eps ( \widetilde{\bs m},\widetilde{\bs u} ) &=  \eps^2 \int_\Omega \Big[ \ \frac{d}{\eps^2} \big| {\nabla_p \bs m} \big|^2 + d \, \big|\partial_3 \bs m\big|^2 +  \varphi(\bs m ) - \bs h_a \cdot \bs m + \frac{\mathbb C}{2} \big[ \bs \kappa^\eps[\bs u] - \bs E_s( \bs m ) \big]^2 \, \Big] \bs d \bs x \nonumber \\
& \qquad + \frac{\eps^2}{8 \pi}\int_{ \mathbb R^3 } \big| \bs h^\eps_{\bs m}(\bs x) \big|^2 \, \bs d \bs x. \nonumber
\end{align} 
Dividing above by $\eps^2$ and defining a new energy $ \mathcal I ^\eps ( \bs m, \bs u ) := \eps^{-2} \mathcal E ( \widetilde{\bs m},\widetilde{\bs u} ) $ we get, 
\begin{align} 
\mathcal I ^\eps ( \bs m, \bs u ) = \int_\Omega & \Big\{ \frac{d}{\eps^2} \big| {\nabla_p \bs m} \big|^2 + d \ \big|\partial_3 \bs m\big|^2 +  \varphi(\bs m ) - \bs h_a \cdot \bs m + \frac{\mathbb C}{2} \big[ \bs \kappa^\eps[\bs u] - \bs E_s( \bs m ) \big]^2 \, \Big\} \bs d \bs x \nonumber \\
& +\mathcal E^\eps_d(\bs m),
\nonumber \end{align}
where $\mathcal E^\eps_d(\bs m)$ is defined and bounded by rescaling the standard demag bound in equation  \eqref{demagstandardbound}
\begin{align}
& 0 \ \le \ \mathcal E^\eps_d(\bs m) := \frac{1}{8 \pi} \int_{ \mathbb R^3 } \big| \bs h^\eps_{\bs m}(\bs x) \big|^2 \bs d \bs x \ \le \ \frac{1}{2} \int_{\Omega} |\bs m(\bs x) |^2 \bs d \bs x  = \frac{1}{2} \big|\Omega\big| m_s^2. \label{demagstandardbound1}
\end{align} 
We investigate the asymptotic nature of the problem
\begin{align} ( \mathcal{P}^\eps ) \ \ \inf_{ \mathcal A_\eps} \mathcal I ^\eps ( \bs m, \bs u ), \quad \mathcal A_\eps=\Big\{ \ ( \bs m, \bs u ) \in H^1(\Omega,m_s S^2) \times H_{\sharp}^1(\Omega,\mathbb{R}^3) \ \Big\}  \end{align}
where $ H_{\sharp}^1(\Omega,\mathbb{R}^3) = \Big\{ \bs u(\bs x) \in H^1(\Omega,\mathbb{R}^3) \ | \ \bs u(x_1,x_2,0) = \bs 0, \ \ \forall (x_1,x_2) \in \omega \Big\} $ enforces Dirichlet boundary conditions at the base $x_3=0$. For the subsequent sections we also use the notation $H_{\sharp}^1 \big((0,1),\mathbb{R}\big)$ defined as
\begin{align}
H_{\sharp}^1 \big((0,1),\mathbb{R}\big) = \Big\{ \ w(x_3) \in H^1((0,1),\mathbb{R}^3) \ | \ w(x_3=0) = 0 \ \Big\}.  \end{align}

In the next section, we will start with a sequence of minimizers $(\bs m^\eps,\bs u^\eps)$ of $ \mathcal I ^\eps ( \bs m, \bs u )$ and show that we can extract a subsequence whose limit relates to the minimizers of a simpler lower dimensional problem $\mathcal{I}^o$.

{ \centering
\section{First variational limiting problem}
\label{Secfirstgammalimit} }
Let $( \widehat{ \ \ \cdot \ \ } )$ denote the cross-sectional average of any scalar/vector, i.e. for any field $\bs a(\bs x)$ set
\begin{align} \widehat{\bs a \,}(x_3) =\dashint_{\omega}\bs a(\bs x_p,x_3) \bs d \bs x_p. \label{csavg} \end{align}

For $ \eps $ fixed, let $ (\bs  m^\eps, \bs u^\eps)$ be a minimizer of $ \mathcal I ^\eps ( \bs m, \bs u )$. We look at the behavior of $ \mathcal I ^\eps(\bs m^\eps, \bs u^\eps)$ as $ \eps \rightarrow 0 $. For that we will first show that $\mathcal I ^\eps(\bs m^\eps, \bs u^\eps)$ is bounded above and below independent of $\eps$. Then we will show that from the sequence $(\bs m^\eps,\bs u^\eps)$, we can extract a subsequence(unrelabeled) such that $(\bs m^\eps,\widehat{u_3}^\eps) $ on the subsequence converges weakly to some $(\bs m^o,v^o)$ in an appropriate space. This convergence will be improved to strong and the limit $(\bs m^o,v^o) $ will be shown to minimize a new functional $\mathcal I^o$ in Theorem \ref{theorem2.1.1} . 
\subsection{Boundedness of \texorpdfstring{$ \mathcal I ^\eps(\bs m^\eps,\bs u^\eps) $}{boundedness}}
For an upper bound on $ \mathcal I ^\eps ( \bs m^\eps, \bs u^\eps ) $ we compare its energy with a test function $ (\bs m,\bs 0)$ with $ \bs m $ any constant vector on $ m_s S^2 $ and $ \bs u = 0 $ to get,
\begin{align} \mathcal I ^\eps ( \bs m^\eps, \bs u^\eps ) \le \mathcal I ^\eps ( \bs m, \bs 0 ) &=\int_{\Omega} \Big[ \phi(\bs m) + \frac{1}{2} \mathbb C\big[\bs E_s(\bs m)\big]^2 -\bs h_a \cdot \bs m \Big] \, \bs d \bs x + \mathcal E^\eps_d(\bs m)\nonumber \\
& \le  K_4 + \frac{m_s^2}{2} \, \big| \, \Omega \, \big|, \label{sequenceboundedenergy}
\end{align} 
where the anisotropy, elastic, Zeeman and magnetostatic terms are bounded using equations \eqref{Anisotroybound}, \eqref{elasticbound1}, \eqref{elasticbound2}, \eqref{Zeemanbound} and \eqref{demagstandardbound1}.
The positivity of all terms in $\mathcal I ^\eps ( \bs m^\eps, \bs u^\eps )$ except possibly of the Zeeman energy along with
equation \eqref{Zeemanbound} gives the lower bound,
\begin{align}
\mathcal I ^\eps ( \bs m^\eps, \bs u^\eps ) &\ge -\int_\Omega \bs h_a \cdot \bs m \, \bs d \bs x  \ge -K_2. \nonumber \end{align}
\subsection{Weak compactness of minimizers \texorpdfstring{$(\bs m^\eps,\bs u^\eps)$ as $ \eps \rightarrow 0 $}{weak convergence} }
The upper and lower bound on $ \mathcal I ^\eps ( \bs m^\eps, \bs u^\eps ) $ gives, 
\begin{align} K_5 & \ > \ \int_{\Omega} \Big\{ \ \frac{d}{\eps^2} \big| {\nabla_p \bs m^\eps} \big|^2 + d \ \big| \ \partial_3 { \bs m^\eps} \big|^2 \ \Big\} \bs d \bs x \ \ge \ d \int_{\Omega} | \nabla \bs m^\eps |^2 \bs d \bs x \label{ge1}.
\end{align}
Then for some unrelabeled subsequence 
\begin{align}
\big\Vert \nabla_p \bs m^\eps (\bs x) \big\Vert^2_{L^2(\Omega)} \le \frac{K_5}{d} \eps^2, & & \bs m^\eps \rightarrow \bs m^o \mbox{ in } L^2(\Omega), & & \nabla \bs m^\eps \rightharpoonup \nabla\bs m^o \mbox{ in } L^2(\Omega).
\label{magconvergence}\end{align}
By the weak convergence of $ \nabla \bs m^\eps (\bs x) $ to $ \nabla \bs m^o (\bs x) $ and the lower semi-continuity of norm operator $ \big\Vert ( \ \cdot \ ) \big\Vert_{L^2(\Omega)}$ w.r.t. weak convergence we have using equation \eqref{magconvergence}
\begin{align} & \big\Vert \nabla_p \bs m^o (\bs x) \big\Vert_{L^2(\Omega)} \le \ \liminf_{\eps \rightarrow 0} \ \big\Vert \nabla_p \bs m^\eps (\bs x) \big\Vert_{L^2(\Omega)} \le \ \liminf_{\eps \rightarrow 0} \sqrt{\frac{K_5}{d}} \eps = 0, \nonumber \end{align}
which implies
\begin{align}
& \lim_{\eps \rightarrow 0} \ \bs m^\eps(\bs x)=\bs m^o(\bs x) = \bs m^o(x_3) \qquad \mbox{ in } L^2(\Omega). \label{magconvergence1} \end{align}
Strong convergence of $ \bs m^\eps $ to $ \bs m^o $ in $L^2(\Omega) $ gives convergence pointwise $a.e.$ for a (unrelabelled) subsequence. The cross-sectional average of this subsequence $\widehat{\bs m \,}^\eps(x_3)=\mbox{\fontsize{8}{8}\selectfont $\displaystyle{ \dashint_{\omega}} $ } \bs m^\eps(\bs x) \bs d \bs x_p \ $ then converges pointwise $a.e.$ to $ \mbox{\fontsize{8}{8}\selectfont $\displaystyle{ \dashint_{\omega}} $ } \bs m^o(x_3) \bs d \bs x_p = \bs m^o(x_3)$. Since from Jensen's inequality we have $\ |\widehat{\bs m \,}^\eps(x_3)| \, \le \, |\bs m^\eps| =m_s $, the pointwise a.e. convergence of the unrelabeled subsequence $\widehat{\bs m \,}^\eps(x_3)$ to $\bs m^o(x_3)$ gives on using $L^p$ Dominated convergence theorem
\begin{align}
& \widehat{\bs m \,}^\eps(x_3) \rightarrow \bs m^o(x_3) \qquad \mbox{ in } L^p(0,1) \mbox{ as } \eps \rightarrow 0, \quad \  \forall 1 \le p \le \infty. \label{magconvergence2} \end{align}
Also the strong convergence of $\bs m^\eps$ to $\bs m^o$ in $L^2(\Omega)$ gives convergence of $|m_i^\eps|^2$ to $|m_i^o|^2$ in $L^2(\Omega)$ because of the fact that $\big| \, | m_i^\eps|^2 - |m_i^o|^2 \, \big|^2=|m_i^\eps-m_i^o|^2 \ | m_i^\eps+m_i^o|^2$ and domination of $|m_i^\eps|^2$ and $| m_i^o|^2$ by $m_s^2$ and $i \in \big\{ 1,2,3 \big\}$ denoting any of the 3 components of $\bs m^\eps$. Then using the same argument as above we can derive an unrelabeled subsequence such that 
\begin{align}
& \widehat{|m_i^\eps|^2}(x_3) \rightarrow \widehat{ |m_i^o|^2}(x_3)=|m_i^o|^2(x_3) \qquad \mbox{ in } L^2(0,1) \mbox{ as } \eps \rightarrow 0. \label{magconvergence3} \end{align} 
We will need equation \eqref{magconvergence3} for showing convergence of the elastic energy in Theorem \ref{theorem2.1.1}.
Next we prove a proposition which we need for extracting weak compactness of the elastic terms.
\newtheorem{prop}{Proposition}[section]
\begin{prop}
\label{proposition2.1.1}
Given $ \widehat{\bs m \,}^\eps \in H^1(\Omega)$ and $ \widehat{\bs u \,}^\eps \in H_\sharp^1(\Omega)$ using eqn. \eqref{csavg}, we have the following,
\begin{align}
& \big\Vert \widehat{\bs m \,}^\eps \big\Vert^2_{L^2(\Omega)} =|\omega| \ \big\Vert \widehat{\bs m \,}^\eps \big\Vert^2_{L^2(0,1)} \le \big\Vert \bs m^\eps \big\Vert^2_{L^2(\Omega)} , \nonumber \\
& \big\Vert \partial_3 \widehat{\bs m \,}^\eps \big\Vert^2_{L^2(\Omega)} =|\omega| \ \big\Vert \partial_3 \widehat{\bs m \,}^\eps \big\Vert^2_{L^2(0,1)} \le \big\Vert \partial_3 \bs m^\eps \big\Vert^2_{L^2(\Omega)}, \nonumber \\
& \big\Vert \partial_3 \widehat{\bs u \,}^\eps \big\Vert^2_{L^2(\Omega)} =|\omega| \ \big\Vert \partial_3 \widehat{\bs u \,}^\eps \big\Vert^2_{L^2(0,1)} \le \big\Vert \partial_3 \bs u^\eps \big\Vert^2_{L^2(\Omega)} , \nonumber 
\end{align}
and for $ i=\{ 1,2,3 \}$
\begin{align}
\big\Vert \widehat{u_i}^\eps \big\Vert^2_{L^2(\Omega)} \le K_6 \big\Vert \partial_3 \widehat{u_i}^\eps \big\Vert^2_{L^2(\Omega)} \le K_6 \big\Vert \partial_3 u^\eps_i \big\Vert^2_{L^2(\Omega)}.  \nonumber
\end{align}
\end{prop}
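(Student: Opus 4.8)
The plan is to reduce all four assertions to two elementary facts: that each hatted quantity depends on $x_3$ alone, so integration over $\Omega$ factors as a product with $|\omega|$ by Fubini; and that $\bs a \mapsto |\bs a|^2$ is convex, so Jensen's inequality bounds the cross-sectional average of a square by the square of the cross-sectional average. Throughout I take $\bs m^\eps \in H^1(\Omega,m_sS^2)$ and $\bs u^\eps \in H_\sharp^1(\Omega,\mathbb R^3)$ as in $\mathcal A_\eps$, so that the averages defined by \eqref{csavg} inherit $H^1$ regularity and, for $\bs u^\eps$, the condition $\widehat{\bs u\,}^\eps(x_3{=}0)=\bs 0$.

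First I would dispatch the three equalities. Since $\widehat{\bs m\,}^\eps$, $\widehat{\bs u\,}^\eps$, and their $\partial_3$-derivatives are functions of $x_3$ only, Fubini gives for any such $f(x_3)$ that $\int_\Omega |f(x_3)|^2\,\bs d\bs x=\int_0^1\big(\int_\omega 1\,\bs d\bs x_p\big)|f(x_3)|^2\,dx_3=|\omega|\int_0^1|f(x_3)|^2\,dx_3$, which is precisely the identity $\|\cdot\|_{L^2(\Omega)}^2=|\omega|\,\|\cdot\|_{L^2(0,1)}^2$ appearing in the first three lines.

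Next come the inequalities. For the first line, fix $x_3$; convexity of $|\cdot|^2$ and the definition \eqref{csavg} give $|\widehat{\bs m\,}^\eps(x_3)|^2=\big|\dashint_\omega \bs m^\eps\,\bs d\bs x_p\big|^2\le \dashint_\omega|\bs m^\eps|^2\,\bs d\bs x_p$; multiplying by $|\omega|$ and integrating in $x_3$ yields $|\omega|\,\|\widehat{\bs m\,}^\eps\|_{L^2(0,1)}^2\le\|\bs m^\eps\|_{L^2(\Omega)}^2$. The second and third lines are the same estimate applied to $\partial_3\bs m^\eps$ and $\partial_3\bs u^\eps$ respectively, once one knows that $\partial_3$ commutes with cross-sectional averaging, i.e. $\partial_3\widehat{\bs m\,}^\eps=\widehat{\partial_3\bs m^\eps}$ and $\partial_3\widehat{\bs u\,}^\eps=\widehat{\partial_3\bs u^\eps}$.

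Finally, the last display is a one-dimensional Poincaré inequality in the axial variable chained with the third line. The boundary condition gives $\widehat{u_i}^\eps(0)=0$, so the absolutely continuous representative satisfies $\widehat{u_i}^\eps(x_3)=\int_0^{x_3}\partial_3\widehat{u_i}^\eps(s)\,ds$; Cauchy--Schwarz then gives $|\widehat{u_i}^\eps(x_3)|^2\le\int_0^1|\partial_3\widehat{u_i}^\eps|^2\,ds$, hence $\|\widehat{u_i}^\eps\|_{L^2(0,1)}^2\le K_6\,\|\partial_3\widehat{u_i}^\eps\|_{L^2(0,1)}^2$ with $K_6$ the Poincaré constant on $(0,1)$. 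Multiplying by $|\omega|$ produces the first inequality of the last line, and the already established third line gives the second. I expect the only nonroutine point to be justifying $\partial_3\widehat{\bs m\,}^\eps=\widehat{\partial_3\bs m^\eps}$ for $H^1$ fields, i.e. differentiation under the cross-sectional integral in the weak sense; I would verify it by testing against $\psi\in C_c^\infty(0,1)$, viewing $\psi$ as a test function constant in $\bs x_p$ and using Fubini to move the cross-sectional integral past the duality pairing defining the weak $x_3$-derivative of $\bs m^\eps$.
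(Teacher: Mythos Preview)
Your proposal is correct and follows essentially the same argument as the paper: Fubini for the equalities, Jensen's inequality for the cross-sectional average bounds (after commuting $\partial_3$ with the averaging operator), and the one-dimensional Poincar\'e inequality on $(0,1)$, exploiting the Dirichlet condition at $x_3=0$, for the final estimate. The paper simply writes the commutation $\partial_3\widehat{m_i}^\eps=\widehat{\partial_3 m_i^\eps}$ without further comment, whereas you explicitly flag and justify it; otherwise the two proofs are identical.
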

\begin{proof} The first result is easily seen using Jensen's inequality
\begin{align} \int^1_0 \big| \widehat{\bs m \,}^\eps \big|^2 dx_3 &= \int^1_0 \Big| \, \frac{1}{|\omega|} \int_{\omega} \bs m^\eps \bs d \bs x_p \Big|^2 dx_3 \le \int^1_0 \frac{1}{|\omega|} \int_{\omega} | \bs m^\eps |^2  \bs d \bs x_p dx_3 = \frac{1}{|\omega|} \int_{\Omega} \big| \bs m^\eps \big|^2 \, \bs d \bs x. \nonumber
\end{align}
To see the second result, note for $i \in \{ 1,2,3 \}$ using Jensen's inequality
\begin{align} \int^1_0 \big| \partial_3 \widehat{m_i}^{\eps} \big|^2 dx_3 &= \int^1_0 \Big| \ \partial_3 \, \Big\{ \frac{1}{|\omega|} \int_{\omega} m_i^{\eps} \, \bs d \bs x_p \Big\} \ \Big|^2 dx_3 =\int^1_0 \frac{1}{|\omega|^2} \ \, \Bigg| \int_{\omega} \partial_3 m_i^{\eps} \, \bs d \bs x_p \Bigg|^2 dx_3 \nonumber \\
& \le \int^1_0 \frac{1}{|\omega|} \int_{\omega} \big| \partial_3 m_i^{\eps} \big|^2 \ \bs d \bs x_p \ dx_3 = \frac{1}{|\omega|} \int_{\Omega} \big| \partial_3 m_i^{\eps} \big|^2 \, \bs d \bs x. \nonumber
\end{align}
Integrating over $ \omega $ and summing over $ i $ gives us the first result. Similar calculation with $ \bs u^\eps $ replacing $ \bs m^\eps $ gives the third result. Noting the Dirichlet Boundary conditions on $ \widehat{\bs u \,}^\eps(x_3) $ at $ x_3=0 $ we get using 1-D Poincar\'e inequality over $ (0,1) $, 
\begin{align} \int^1_0 |\widehat{u_i}^\eps|^2 dx_3 \ \le \ K_6 \int^1_0 |\partial_3 \widehat{u_i}^{\eps} |^2 dx_3 \ \le \ K_6\int_{\Omega} \frac{1}{|\omega|} \ \big| \partial_3 u_i^{\eps} \big|^2 \bs d \bs x \nonumber
\end{align}
where $ K_6 $ is the Poincar\'e constant on $(0,1)$. Integrating over $ \omega $ gives the result. 
\end{proof}
Using positive definiteness of $\mathbb C$ in \eqref{elasticbound2}, \eqref{elasticbound1} and bounds on $\mathcal I^\eps(\bs m^\eps,\bs u^\eps)$, we get
\begin{align} \hspace{-5mm} \int_{\Omega} \big|\bs \kappa^{\eps}[\bs u^\eps]\big|^2 &= \int_{\Omega} \big|\bs \kappa^{\eps}[\bs u^\eps] - \bs E_s( \bs m^\eps ) + \bs E_s( \bs m^\eps )\big|^2 \le 2 \int_{\Omega} \Big\{ \big|\bs \kappa^{\eps}[\bs u^\eps] - \bs E_s( \bs m^\eps )\big|^2 + \big|\bs E_s( \bs m^\eps )\big|^2 \Big\} \nonumber \\
& \le \frac{2}{\gamma} \int_{\Omega} \mathbb C  \big[\bs \kappa^{\eps}[\bs u^\eps] - \bs E_s( \bs m^\eps )  \big]^2 + 2 \int_{\Omega} \big|\bs E_s( \bs m^\eps )\big|^2 \ \le \ K_7. \nonumber
\end{align}
Combining this with the fourth result in Proposition \ref{proposition2.1.1} we have
\begin{align}  \int_{\Omega} \big|\widehat{u_3}^\eps\big|^2 dx_3 \, \le \, K_6 \int_{\Omega} \big|\partial_3 \widehat{u_3}^\eps\big|^2 dx_3 \, \le \, K_6 \int_{\Omega} \big|\partial_3 u^\eps_3 \big|^2 \bs d \bs x \, \le \, K_6 \int_{\Omega} \big| \bs \kappa^{\eps}[\bs u^\eps]\big|^2 \bs d \bs x \, < \, K_8. \nonumber
\end{align}
Thus $ \big\Vert \widehat{u_3}^\eps \big\Vert_{H^1(0,1)} \le \infty $ and due to Dirichlet conditions on $ \bs u^\eps $ we get $ \widehat{u_3}^\eps \in H_{\sharp}^1(0,1) $. For an unrelabeled subsequence we have,
\begin{align}
& \widehat{u_3}^\eps(x_3) \rightarrow v^o(x_3) \mbox{ in } L^2(0,1), & \partial_3 \widehat{u_3}^\eps(x_3) \rightharpoonup \partial_3 v^o(x_3) \mbox{ in } L^2(0,1) \label{elconvergence} \end{align}
Already from the fact that $ \bs m^o $ and $ v^o $ depends only on the $ x_3 $ space variable the 1-D nature of the limit problem becomes evident. 
The magnetostatic estimate in equation \eqref{gammafirstdemag1} gives 
\begin{align}
\mathcal E^\eps_d(\bs m^\eps)- \pi |\omega| \int^1_0 \big|\widehat{\bs m \,}^\eps_p(y_3) \big|^2 dy_3=O(\eps) +O(\eps^{3/4}) \nonumber
\end{align}
which implies on using strong convergence of $\widehat{\bs m \,}^\eps$ to $\bs m^o$ in $L^2(0,1)$ in equation \eqref{magconvergence2},
\begin{align} \lim_{\eps \rightarrow 0} \ \mathcal E^\eps_d(\bs m^\eps) =\lim_{\eps \rightarrow 0} \ \pi |\omega| \int^1_0 \big| \widehat{\bs m \,}_p^\eps(x_3) \big|^2 d x_3 =\pi |\omega| \int^1_0 \big| \bs m^o_p(x_3) \big|^2 d x_3. \
\end{align}  
The magnetostatic estimate in equation \eqref{gammafirstdemag1} and Remark \ref{RemA3} also gives 
\begin{align} \mathcal E^\eps_d(\bs m^o) = \pi |\omega| \int^1_0 \big| \bs m_p^o(x_3) \big|^2 d x_3 +O(\eps) +O(\eps^{3/4}) =\pi |\omega| \int^1_0 \big| \bs m^o_p(x_3) \big|^2 d x_3  \label{moedemag}
\end{align}
and thus,
\begin{align}
\lim_{\eps \rightarrow 0} \ \mathcal E^\eps_d(\bs m^\eps) = \lim_{\eps \rightarrow 0} \ \mathcal E^\eps_d(\bs m^o)=\pi |\omega| \int^1_0 \big| \bs m^o_p(x_3) \big|^2 d x_3. \label{edepsconvergence}
\end{align}
\subsection{Strong compactness of \texorpdfstring{ $(\bs m^\eps,\bs u^\eps)$}{Strong convergence} and variational problem}
\begin{align}
\hspace{-8mm} \mbox{Set \quad } f_0(s) \ := \ \min \Big[ \mathbb C [\bs E]: \bs E \ ; \ \bs E \in \bs M_{sym}^{3 \times 3} \mbox{ and } E_{33}=s \Big].  \label{f0definition}
\end{align}
Note that $ f_0 $ defined above in \eqref{f0definition} can be evaluated as
\begin{align}  f_0(s) &= c_{11}|s|^2 - 2\sigma c_{12} |s|^2 := Y |s|^2, &  \int^1_0 f_0(s(x_3)) dx_3=Y \big\Vert s(x_3)\big\Vert^2_{L^2(0,1)} \label{youngs-modulus} \end{align}
where $\displaystyle{ \sigma = \Big( \frac{c_{12}}{c_{11}+c_{12}} \Big) } \ $ is the Poisson's ratio and $ Y =\big( \, c_{11}- 2\sigma c_{12} \, \big)$ is the Young's modulus.
We now state the main result of this section.
\newtheorem{thm1}{Theorem}[section]
\begin{thm1}
\label{theorem2.1.1}
There exists a subsequence $(\bs m^\eps,\bs u^\eps)$ not relabeled such that $ \bs m^\eps \rightarrow \bs m^o$ strongly in $ H^1(\Omega,\mathbb R^3)$, $\widehat{u_3}^\eps \rightarrow v^o \mbox{ strongly in } H_\sharp^1((0,1),\mathbb R) $ and $(\bs m^o,v^o)$ minimizes $\mathcal{I}^o(\bs m,v) $ in $ \mathcal A_o =\Big\{ (\bs m(x_3),$ $ v(x_3)) \in H^1((0,1),m_sS^2) \times H^1_{\sharp}((0,1),\mathbb R) \Big\} $ where $ \mathcal{I}^o(\bs m,v) $ is defined as 
\begin{align} \mathcal{I}^o(\bs m,v) &= \int^1_0 d \big|\partial_3 \bs m \big|^2 + \varphi(\bs m )+\pi| \bs m_p |^2 + \frac{1}{2} f_0 \big( \, \partial_3 v- E_{s_{33}} (\bs m) \, \big) - \bs h_a \cdot \bs m.  \label{firstsgamma} \end{align} 
\end{thm1}
\begin{proof}
Comparing energy of $\mathcal I^\eps $ at its minimizer $(\bs m^\eps,\bs u^\eps) $ with the test function $(\bs m^o,\bs u^\eps) $ we get $\mathcal I^\eps(\bs m^\eps,\bs u^\eps) \le \mathcal I^\eps(\bs m^o,\bs u^\eps)$ which expands out as  
\begin{align} 
& \int_\Omega \Big[ \, \frac{d}{\eps^2} \big| {\nabla_p \bs m^\eps} \big|^2 + d \, \big| \partial_3 \bs m^\eps \big|^2 +  \varphi(\bs m^\eps ) - \bs h_a \cdot \bs m^\eps + \frac{\mathbb C}{2} \big[ \bs \kappa^\eps[\bs u^\eps] - \bs E_s( \bs m^\eps ) \big]^2 \, \Big] \bs d \bs x + \mathcal E_d^\eps(\bs m^\eps) \nonumber \\
&  \le \int_\Omega \Big[  d \, \big| \, \partial_3 \bs m^o\big|^2 + \varphi(\bs m^o ) - \bs h_a \cdot \bs m^o +\frac{\mathbb C}{2} \big[ \bs \kappa^\eps[\bs u^\eps] - \bs E_s( \bs m^o ) \big]^2 \Big] + \mathcal E_d^\eps(\bs m^o). \nonumber \end{align}  
Equation \eqref{edepsconvergence} gives that $\lim_{\eps \rightarrow 0} \ \mathcal E^\eps_d(\bs m^\eps) = \lim_{\eps \rightarrow 0} \ \mathcal E^\eps_d(\bs m^o)$. Then taking lim-sup of both sides w.r.t. $\eps$, canceling common terms, and noting that $ \bs m^\eps(\bs x) \rightarrow \bs m^o(x_3) $ strongly in $L^2(\Omega)$, we can simplify the above equation to get
\begin{align} 
 \limsup_{\eps \rightarrow 0} \int_\Omega & \Big[ \, \frac{d}{\eps^2} \big| {\nabla_p \bs m^\eps} \big|^2 + d \, \big| \, \partial_3 \bs m^\eps\big|^2 \, \Big]\, \bs d \bs x \ \le \ \int_\Omega  d \, \big| \, \partial_3 \bs m^o \big|^2  \bs d \bs x. \nonumber \end{align} 
But weak convergence of $\nabla \bs m^\eps $ in equation \eqref{magconvergence} implies $ \liminf_{\eps \rightarrow 0} \mbox{\fontsize{9}{8}\selectfont $\displaystyle{ \int_\Omega}$} \big| \, \partial_3 \bs m^o\big|^2 \le \mbox{\fontsize{9}{8}\selectfont $\displaystyle{ \int_\Omega}$} \big|\, \partial_3 \bs m^\eps\big|^2 $ which combined with the $\lim\sup $ condition above gives the strong convergence,
\begin{align}
& \partial_3 \bs m^\eps \rightarrow \partial_3 \bs m^o \mbox{ in } L^2(\Omega),
&\frac{1}{\eps} {\nabla_p \bs m^\eps} \rightarrow \bs 0 \mbox{ in } L^2(\Omega).  \label{magstrongconvergence}
\end{align}    
Now we show strong convergence of the elastic terms. Set $ s^\eps(\bs x) $ and $ \widehat{s}^{\ \eps}(x_3) $ as
\begin{align} \hspace{-4mm} & s^\eps(\bs x):= \partial_3 u_3^\eps(\bs x) - E_{s_{33}}(\bs m^\eps),& & & \widehat{s}^{\ \eps}(x_3) =  [\widehat{\partial_3 u_3^\eps - E_{s_{33}}(\bs m^\eps)}](x_3)=\widehat{\partial_3 u_3^\eps}(x_3) - \widehat{E_{s_{33}}(\bs m^\eps)}
\nonumber \end{align}
where $ \widehat{s}^{\ \eps}(x_3) $ is defined using equation \eqref{csavg}. Noting $f_0(s)=Y|s|^2$, using Jensen's inequality
\begin{align}
\hspace{-7mm} \int_{\Omega} f_0(\partial_3 \widehat{u_3}^\eps-\widehat{E_{s_{33}}(\bs m^\eps)}) \bs d \bs x =\int^1_0\int_{\omega} Y |\widehat{s}^{\ \eps}|^2 \bs d \bs x &\le \int^1_0 Y \Big[ \int_{\omega} |s^\eps|^2 \bs d \bs x_p \Big] d x_3 = \int_{\Omega} f_0(s^\eps(\bs x))\bs d \bs x \nonumber \\ 
&\le \int_{\Omega} \mathbb C \big[ \bs \kappa^\eps[\bs u^\eps] - \bs E_s( \bs m^\eps ) \big]^2 \bs d \bs x \label{Jensen2} \end{align}
where in the last step we have used the definition of $ f_0$ from equation \eqref{f0definition}. \\
By definition $E_{s_{33}}(\bs m^\eps)=\frac{3}{2m^2_s} \big( |m^\eps_3|^2 - \frac{1}{3}\big) $ which using $\widehat{|m_3^\eps|^2} \rightarrow |m_3^o|^2 $ in $ L^2(0,1) $ from equation \eqref{magconvergence3} gives 
\begin{align}
\widehat{ E_{s_{33}}(\bs m^\eps)} \rightarrow \bs E_{s_{33}}(\bs m^o) \quad \mbox{ in  } L^2(0,1). \nonumber
\end{align}
The above combined with weak convergence $\partial_3 \widehat{u_3}^\eps \rightharpoonup \partial_3 v^o $ in $ L^2(0,1) $ in eqn. \eqref{elconvergence}  gives
\begin{align}
\partial_3 \widehat{u_3}^\eps-\widehat{E_{s_{33}}(\bs m^\eps)} \rightharpoonup \partial_3 v^o-E_{s_{33}}(\bs m^o) \mbox{ in $ L^2(0,1) $ }. \nonumber
\end{align}
Then noting from eqn. \eqref{youngs-modulus} that $\mbox{\fontsize{9}{8}\selectfont $\displaystyle{ \int_{\Omega} } $ } f_0(s(x_3)) \bs d \bs x=Y |\omega| \ \big\Vert s(x_3) \big\Vert^2_{L^2(0,1)} $ and weak lower semi-continuity of norm in $ L^2(0,1) $ gives
\begin{align}
\int_{\Omega} f_0\big( \partial_3 v^o-E_{s_{33}}(\bs m^o) \big) \bs d \bs x & \le \liminf_{\eps \rightarrow 0} \int_{\Omega} f_0(\partial_3 \widehat{u_3}^\eps-\widehat{E_{s_{33}}(\bs m^\eps)}) \bs d \bs x \nonumber \\ 
&\le \liminf_{\eps \rightarrow 0} \int_{\Omega} \mathbb C \big[ \bs \kappa^\eps[\bs u^\eps] - \bs E_s( \bs m^\eps ) \big]^2 \bs d \bs x \label{Jensen22} \end{align}
where in the last step we use eqn. \eqref{Jensen2}. 

To get strong convergence we will show the converse inequality of equation \eqref{Jensen22}. For that we need to compare the energy $ \mathcal I^\eps(\bs m^\eps,\bs u^\eps)$ with some test function based on $ \bs m^o$ and $ v^o $. But the lack of regularity of $ v^o \in H^1(0,1) $ requires a mollification procedure. Let  $ v^h(x_3) \in \mathcal D(0,1) $ and $ v^h(x_3) \rightarrow v^o(x_3)$ in $ H^1(0,1)$ as $ h \rightarrow 0$. Set $ s^h(x_3):=\big(\partial_3 v^h(x_3)-E_{s_{33}}(\bs m^o)\big)$. Note $ \lim_{h \rightarrow 0} s^h = \big(\partial_3 v^o(x_3)-E_{s_{33}}(\bs m^o)\big)$. Define   
\begin{align} \bs v_h^\eps(\bs x) := \begin{bmatrix} \eps E_{s_{11}} (\bs m^o)x_1 + \eps E_{s_{12}}(\bs m^o) x_2 - \eps \sigma \ s^h(x_3) x_1 \\ \eps E_{s_{22}}(\bs m^o) x_2 + \eps E_{s_{12}}(\bs m^o) x_1- \eps \sigma \ s^h(x_3) x_2  \\ v^h(x_3) + 2 \eps \big( E_{s_{13}}(\bs m)x_1 + 2E_{s_{23}}(\bs m)x_2 \big) \end{bmatrix}. \label{bmvepsdef} 
\end{align}
For $ \bs v_h^\eps $ defined above, $\bs \kappa^\eps[\bs v_h^\eps] -\bs E_s(\bs m^o) $ is given by
\begin{align}
\bs \kappa^\eps[\bs v_h^\eps] -\bs E_s(\bs m^o) =\begin{bmatrix} -\sigma s^h & 0 & -\frac{\eps}{2} \big( \sigma x_1 \ \partial_3 s^h +x_1 \ \partial_3 E_{s_{11}}(\bs m^o) + x_2 \ \partial_3 E_{s_{12}}(\bs m^o) \big) \\ \cdot & -\sigma s^h & -\frac{\eps}{2} \big( \sigma x_2 \ \partial_3 s^h +x_2 \ \partial_3 E_{s_{22}}(\bs m^o) +x_1 \ \partial_3 E_{s_{12}}(\bs m^o) \big) \\ \cdot & \cdot  & s^h+2\eps\big( x_1 \ \partial_3 E_{s_{13}}(\bs m^o) +x_2 \ \partial_3 E_{s_{23}}(\bs m^o) \big)  \nonumber \end{bmatrix},
\end{align}
where we have left out terms below the diagonal due to symmetry. A straight forward computation gives, $\big($ Recall $Y=c_{11}-\sigma c_{12} $ from eqn. \eqref{youngs-modulus} $\big)$ 
\begin{align}
\int_{\Omega} \mathbb C \big[ \bs \kappa^\eps[\bs v_h^\eps] -\bs E_s(\bs m^o) \big]^2 \bs d \bs x &= \int_{\Omega} f_0(s^h) \ \bs d \bs x + O(\eps^2). \label{Calpha}
\end{align}
Then comparing energy of the test function $(\bs m^o,\bs v_h^\eps)$ with $ (\bs m^\eps,\bs u^\eps) $ gives 
\begin{align} 
\mathcal I^\eps(\bs m^\eps,\bs u^\eps) \ \le \ \mathcal I^\eps(\bs m^o,\bs v_h^\eps). \nonumber \end{align}
Fixing $ h $ and taking lim-sup of both sides w.r.t. $\eps$, 
using strong convergence in \eqref{magstrongconvergence}, and equation \eqref{Calpha}, the above simplifies to 
\begin{align} \limsup_{\eps \rightarrow 0} \int_\Omega \frac{\mathbb C}{2} \big[ \bs \kappa^\eps[\bs u^\eps] - \bs E_s( \bs m^\eps ) \big]^2 \bs d \bs x & \ \le \ \limsup_{\eps \rightarrow 0} \int_\Omega \frac{\mathbb C}{2} \big[ \bs \kappa^\eps(\bs v_h^\eps) - \bs E_s( \bs m^o ) \big]^2 \, \bs d \bs x \nonumber \\
& \ = 
 \int_\Omega \frac{1}{2} \, f_0( s^h )\, \bs d \bs x. \nonumber 
\end{align} 
Now taking $\lim_{h \rightarrow 0}$ of L.H.S. and noting that $ \lim_{h \rightarrow 0} s^h = (\partial_3 v^o(x_3)-E_{s_{33}}(\bs m^o))$ gives
\begin{align} \limsup_{\eps \rightarrow 0} \int_\Omega \frac{\mathbb C}{2} \big[ \bs \kappa^\eps[\bs u^\eps] - \bs E_s( \bs m^\eps ) \big]^2 \bs d \bs x & \ \le \ \int_\Omega \frac{1}{2} \, f_0 \big( \partial_3 v^o(x_3)-E_{s_{33}}(\bs m^o) \big)\, \bs d \bs x.\label{C-f0Vtest2}
\end{align} 
Then \eqref{Jensen22} and \eqref{C-f0Vtest2} together give along with eqn. \eqref{f0definition}, the strong convergence
\begin{subequations}\label{elstrongconvergence}
\begin{align}
& \lim_{\eps \rightarrow 0 } \ \partial_3 \widehat{u_3}^\eps \rightarrow \partial_3 v^o \mbox{ in } L^2(0,1) \label{elstrongconvergence1}, \\ 
& \lim_{\eps \rightarrow 0 } \int_\Omega \frac{\mathbb C}{2} \big[ \bs \kappa^\eps[\bs u^\eps] - \bs E_s( \bs m^\eps ) \big]^2 \, \bs d \bs x \rightarrow
\int_\Omega \frac{1}{2} f_0\big( \, \partial_3 v^o-E_{s_{33}} (\bs m^o) \, \big) \, \bs d \bs x. \label{elstrongconvergence2}
\end{align}
\end{subequations}
Finally its easy to see that the strong convergence from \eqref{magstrongconvergence} and \eqref{elstrongconvergence} together gives,
\begin{align}
\lim_{\eps \rightarrow 0} \, \mathcal I^\eps(\bs m^\eps,\bs u^\eps) = |\omega|\mathcal I^o(\bs m^o,v^o).\nonumber
\end{align} 
Now we show that $\big( \bs m^o,v^o \big)$ minimizes $\mathcal I^o(\bs m,v)$ in $\mathcal A_o $. For any $\big( \bs m,v \big) $ with $\bs m \in H^1((0,1),m_sS^2)$ and $v \in C^\infty(0,1) \cap \{ v(0)=0 \}$, let us define as in eqn. \eqref{bmvepsdef} 
\begin{align} \bs V^\eps(\bs x) := \begin{bmatrix} \eps E_{s_{11}} (\bs m)x_1 + \eps E_{s_{12}}(\bs m) x_2 - \eps \sigma \ \big( \partial_3 v -E_{s_{33}}(\bs m)\big) x_1 \\ \eps E_{s_{22}}(\bs m) x_2 + \eps E_{s_{12}}(\bs m) x_1- \eps \sigma \ \big( \partial_3 v -E_{s_{33}}(\bs m)\big) x_2  \\ v(x_3) + 2 \eps \big( E_{s_{13}}(\bs m)x_1 + 2E_{s_{23}}(\bs m)x_2 \big) \end{bmatrix}. \nonumber
\end{align}
Then comparing energy of $\mathcal I^\eps(\bs m^\eps,\bs u^\eps)$ with $\mathcal I^\eps(\bs m,\bs V^\eps)$ we get $\mathcal I^\eps(\bs m^\eps,\bs u^\eps) \le \mathcal I^\eps(\bs m,\bs V^\eps)$. Taking $\lim_{\eps}$ of the inequality we note that L.H.S converges from above $\lim_{\eps \rightarrow 0} \, \mathcal I^\eps(\bs m^\eps,\bs u^\eps) = |\omega|\mathcal I^o(\bs m^o,v^o)$. Using equations \eqref{moedemag} and \eqref{Calpha} with $\bs m$ and $\bs V^\eps$ replacing $\bs m^o$ and $\bs v^\eps_h$ in the respective equations it is easy to check that the R.H.S converges as $\lim_{\eps \rightarrow 0} \, \mathcal I^\eps(\bs m,\bs V^\eps) = |\omega|\mathcal I^o(\bs m,v)$. We thus get our minimizing principle on noticing that $v \in C^\infty(0,1) \cap \{ v(0)=0 \}$ is dense in $H^1_\sharp(0,1)$
\end{proof}

\subsection{Minimization of limit problem} 
\label{Sec-example}
The minimization of $\mathcal I^o(\bs m,v)$ is a substantially simpler problem than the original one. One can see that if the applied field is a constant over the domain, the terms $ \varphi(\bs m )+\pi| \bs m_p |^2 - \bs h_a \cdot \bs m$ behaves like an ``effective anisotropy". If this is minimized over constant vector $ \bs m \in m_sS^2 $ to give $ \bs m^o $, then its easy to see that $ (\bs m^o,E_{s_{33}}(\bs m^o)x_3) $ minimizes $ \mathcal I^o(\bs m,v)$. 

For a large class of ferromagnetic materials, the largest energy in the ``effective anisotropy" for typical applied fields is the demagnetization term $\pi| \bs m_p |^2 $ which finds its minimum if $ \bs m^o $ is an axial magnetization $(0,0,m_s)$. In particular for our nanowires of Galfenol this is true. Experimentally produced nanowires of Galfenol of 30-100 nanometer diameter show strong alignment of magnetization along the axis in the absence of applied fields and need large applied fields in transverse direction to alter this state. Experimental verification of these results for Galfenol wires can be seen from Magnetic Force Microscopy (MFM) scans in Figures \ref{fig12} taken from \cite{downey2008thesis}. 
\begin{figure}
\centering
\mbox{ \subfigure{\includegraphics[scale=0.6]{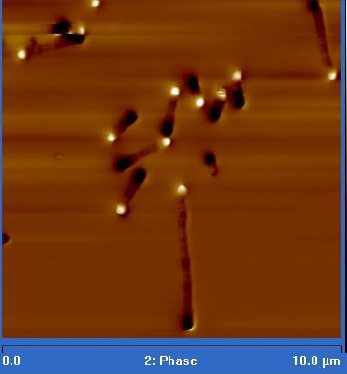} }\qquad \subfigure{
\includegraphics[scale=0.43]{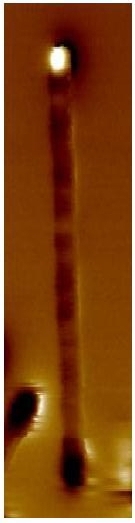} } }
\caption{Left: MFM scan for several Galfenol nanowires, Right: Magnified scan of single nanowire, scale of the wires shown in bottom of left figure (Scans courtesy of Downey \cite{downey2008thesis} ). }  \label{fig12}
\end{figure}

These scans are done for wires with 100 nanometer diameter and $< \hspace{-1mm}110\hspace{-1mm}>$ crystallographic orientation with no applied field. For cubic anisotropy, $< \hspace{-1mm}110\hspace{-1mm}>$ is a local minimum of the anisotropy energy and gives zero magnetostatic energy contribution  making it a global minimum of the ``effective anisotropy". The uniformity of the scan along the wire length depicts a uniform state of magnetization and the bright and dark spots at the two ends are interpreted to be the field lines due to an axial magnetization producing net positive and negative poles at the ends.

With these observations in mind, for the following sections we will assume that the field $\bs h_a$ is constant. 
This assumption simplifies the calculation in the following sections without effecting the main presentation of the asymptotic limiting problem. Let us then set
\begin{align}
Q_0 = \int_{\Omega} \varphi(\bs m^o )+\pi| \bs m^o_p |^2 - \bs h_a \cdot \bs m^o
\end{align}
where $\bs m^o$ minimizes $ \varphi(\bs m )+\pi| \bs m_p |^2 - \bs h_a \cdot \bs m$ in $m_sS^2$. Then $(\bs m^o,v^o)$ minimizes $\mathcal I^o$ where $v^o:=E_{s_{33}}(\bs m^o)x_3$. Set
\begin{align} \bs u^o(\bs x) := \begin{bmatrix} \eps E_{s_{11}} (\bs m^o)x_1 + \eps E_{s_{12}}(\bs m^o) x_2 \\ \eps E_{s_{22}}(\bs m^o) x_2 + \eps E_{s_{12}}(\bs m^o) x_1 \\ E_{s_{33}}(\bs m^o)x_3 + + 2 \eps \big( x_1 E_{s_{13}}(\bs m^o) + x_2 E_{s_{23}}(\bs m^o) \big) \end{bmatrix} \nonumber 
\end{align}
where we have abused notation a little as $\bs u^o(\bs x)$ depends on $ \eps$ but does not reflect that. Then it is easy to check that $ \bs \kappa^\eps[\bs u^o] = \bs E_s( \bs m^o )$ since $\bs m^o $ is constant. Using eqn. \eqref{magnetostaticM} ,
\begin{subequations}  \label{C-f0Vtest}
\begin{align}
& \int_\Omega \frac{\mathbb C}{2} \big[ \bs \kappa^\eps[\bs u^o] - \bs E_s( \bs m^o ) \big]^2 \, \bs d \bs x =
\int_\Omega \frac{1}{2} f_0\big( \, \partial_3 v^o-E_{s_{33}} (\bs m^o) \, \big) \, \bs d \bs x =0  \label{C-f0Vtesta} \\
& \mathcal I^\eps(\bs m^o,\bs u^o)=Q_0+\mathcal E^\eps_d(\bs m^o)-\int_{\Omega} \pi| \bs m^o_p |^2=Q_0+\eps Q_1(\bs m^o)+\eps^2Q_2(\bs m^o)  \label{C-f0Vtestb} \\
& |\omega| \ \inf_{\mathcal A_o} \mathcal I^o(\bs m,v^o) = Q_0.  \label{C-f0Vtestc}
\end{align}\end{subequations}

{\centering
\section{Second order variational limit problem}
\label{Secsecondgammalimit} } 
\S \ \ref{Secfirstgammalimit} gives a rigorous derivation of the first order variational approximation  $ \mathcal I^o(\bs m,v) $ in the sense that for a sequence of minimizers $(\bs m^\eps,\bs u^\eps)$ of $ \mathcal I^\eps(\bs m,\bs u) $, $ \lim_{\eps \rightarrow 0} \mathcal I^\eps(\bs m^\eps,\bs u^\eps) = |\omega| \mathcal I^o(\bs m^o,v^o)+o(\eps)$ with $(\bs m^o,v^o)$ minimizing $\mathcal I^o(\bs m,v)$ in an appropriate space. Correctors to this approximation come up as higher order theories which involve an expansion of the $o(\eps)$ term. These higher terms can be understood as an asymptotic $\Gamma$- series of variational problems in the sense of \cite{anzellotti1993asymptotic}. 

With this in mind we define $ \mathcal{I}_1^{ \eps}(\bs m^\eps,\bs u^\eps) := \eps^{-1} ( \mathcal{I}^\eps(\bs m^\eps,\bs u^\eps) - Q_0) $. We look at the limit minimization problem corresponding to $ \mathcal{I}_1^{ \eps}(\bs m^\eps,\bs u^\eps) $. For this we first show that $ \mathcal{I}_1^{ \eps}(\bs m^\eps,\bs u^\eps) $ is bounded above and below independently of $ \eps $ so that its limit $ \eps \rightarrow 0 $ makes sense. We then show that a limit exists as $ \eps \rightarrow 0 $ for the quantity $ \mathcal{I}_1^{ \eps}(\bs m^\eps,\bs u^\eps) $. Note that
\begin{align}
\mathcal{I}^{\eps}(\bs m^\eps,\bs u^\eps) - Q_0 &= \Bigg[ \int_\Omega \frac{d}{\eps^2} | {\nabla_p \bs m^\eps} |^2 \bs d \bs x \Bigg] + \Big[ \mathcal E^\eps_d(\bs m^\eps)-\int_\Omega \pi |\bs m^\eps_p |^2 \bs d \bs x \Big] + \Bigg[ \int_\Omega \Big\{ \, d \, |\partial_3 {\bs m}^\eps|^2 \nonumber \\ 
& \hspace{6mm} + \varphi(\bs m^\eps) - \bs h_a \cdot \bs m^\eps + \pi| \bs m^\eps_p |^2 + \frac{\mathbb C}{2} \big[ \bs \kappa^{\eps}[\bs u^\eps] - \bs E_s(\bs m^\eps) \big]^2 \ \Big\} \bs d \bs x- Q_0  \Bigg]\nonumber \\ &= \mathfrak{A}^\eps+\mathfrak{B}^\eps+\mathfrak{C}^\eps \label{A+B+C}
\end{align}
where $ \mathfrak{A}^\eps,\mathfrak{B}^\eps$ and $\mathfrak{C}^\eps $ are the terms in the big square brackets. 
\subsection{Bounds for \texorpdfstring{$\mathcal{I}_1^{ \eps}(\bs m^\eps,\bs u^\eps)$}{I2bound}}
Since $(\bs m^\eps,\bs u^\eps)$ minimizes $ \mathcal{I}^\eps(\bs m,\bs u) $, we have $\mathcal{I}^{ \eps}( \bs m^\eps,\bs u^\eps) \lel \mathcal{I}^{ \eps}( \bs m^o,\bs u^o)$ which on using equation \eqref{C-f0Vtestb} along with the definition of $\mathcal{I}_1^{ \eps}( \bs m,\bs u) $ gives us the inequality 
\begin{align}
\mathcal{I}_1^{ \eps}( \bs m^\eps,\bs u^\eps) &\le \mathcal{I}_1^{ \eps}( \bs m^o,\bs u^o) =\frac{1}{\eps} \big(\eps Q_1(\bs m^o)+\eps^2 Q_2(\bs m^o) \big) \le \ K_{9}.\label{I2upper}
\end{align}
The lower bound for $ \mathcal{I}_1^{ \eps}(\bs m^\eps,\bs u^\eps) $ requires the following technical condition. See Result 8.2 in \cite{bhattacharya1999theory} and Definition 5.2 from \cite{le2005modeling} to see other contexts where such a condition is necessary to get lower bound estimates.
\newtheorem{mydef}{Definition}[section]
\begin{mydef}
We say that a minimizer $(\bs m^o,v^o)$ of $\mathcal I^o(\bs m,v) \ \big( \mbox{cf.} \,\mbox{Eqn.} \, \eqref{firstsgamma} \big)$,  satisfies the strong second variation condition if for any $(\bs m(x_3),v(x_3)) \in \mathcal A_o $ there exists $\Lambda > 0$ such that,
\begin{align} \mathcal I^o(\bs m,v) - \mathcal I^o(\bs m^o,v^o)=\mathcal I^o(\bs m,v) - Q_0 &\ge  \Lambda  \int^1_0 \Big\{ \ \big| \partial_3 \bs m(x_3) -\partial_3 \bs m^o (x_3)\big|^2 + \big| \bs m-\bs m^o\big|^2 \nonumber\\
& \qquad \qquad +\big| \partial_3 v (x_3) - \partial_3 v^o(x_3)\big|^2  \Big\} d x_3. \end{align} 
provided $ \big\Vert \bs m - \bs m^o \big\Vert_{H^1(0,1)} < K_{10} \eps$ and $ \big\Vert v - v^o \big\Vert_{H^1(0,1)} < K_{11} \eps$ for some $\eps >0 $ sufficiently small and $ K_{10},K_{11}$ arbitrary constants independent of $\eps$.
\end{mydef}
\begin{align} \hspace{-8mm} \mbox{Set } \hspace{8mm} \bs M^\eps := \bs m^\eps - \bs m^o. \label{defnMeps}
\end{align}
Using the hypothesis that $ \mathcal I^o(\bs m,v^o) $ satisfies strong second variation condition let us show the following Lemma,
\newtheorem{lem2}{Lemma}[section]
\begin{lem2}
\label{lemma2.2.1}
For $ \mathfrak{C}^\eps $ defined as in equation \eqref{A+B+C},
\begin{align} \mathfrak{C}^\eps \ \ge \ \int_{\omega} \mathcal I^o(\bs m^\eps,u_3^\eps) \bs d \bs x - Q_0 \ \ge \ \Lambda \Big( \  \big\Vert \partial_3 \bs M^\eps \big\Vert^2_{L^2(\Omega)} + \big\Vert \bs M^\eps \big\Vert^2_{L^2(\Omega)} \ \Big). \nonumber
\end{align}
\end{lem2}
\begin{proof}
For fixed $ \bs x_p =(x_1,x_2) \in \omega $ we define $ \bm{\mathfrak{M}}^\eps(x_3) := \bs m^\eps(\bs x_p,x_3) $, $ \bm{\mathfrak{V}}^\eps(x_3) := \bs u^\eps(\bs x_p,x_3) $. Then using the strong second variation condition on $ \mathcal I^o $ we get
\begin{align}
\mathcal I^o(\bm{\mathfrak{M}}^\eps(x_3),\mathfrak V_3^\eps(x_3))- \mathcal I^o(\bs m^o,v^o) \ge \Lambda \int^1_0 \Big\{ \, \big|\partial_3\big(\mathfrak{M}^\eps-\bs m^o \big)\big|^2+\big|\mathfrak{M}^\eps-\bs m^o\big|^2 \, \Big\} dx_3. \nonumber
\end{align} 
For fixed $\bs x_p \in \omega, \ \bs M^\eps(\bs x_p,x_3)=\mathfrak{M}^\eps(x_3)-\bs m^o(x_3) $. Integrating above result over $ \bs x_p \in \omega $ gives
\begin{align} \int_{\omega} \mathcal I^o(\bm{\mathfrak{M}}^\eps,\mathfrak V_3^\eps) \, \bs d \bs x_p- Q_0 =\int_{\omega} \mathcal I^o(\bs m^\eps,u_3^\eps) \, \bs d \bs x- Q_0= \Lambda \big( \big\Vert \partial_3 \bs M^\eps \big\Vert^2_{L^2(\Omega)}+ \big\Vert \bs M^\eps \big\Vert^2_{L^2(\Omega)} \big) \nonumber
\end{align}
noting $Q_0=\mbox{\fontsize{9}{8}\selectfont $\displaystyle{ \int_{\omega} } $ } \hspace{-1mm}\mathcal I^o(\bs m^o,v^o) \bs d \bs x_p $. From eqn. \eqref{f0definition}, $\mathbb C \big[ \bs \kappa^{\eps}[\bs u^\eps] - \bs E_s(\bs m^\eps) \big]^2 \ge f_0\big( \partial_3 u_3^\eps - \bs E_{s_{33}}(\bs m^\eps) \big)$ which gives $ \mathfrak{C}^\eps \ \ge \ \mbox{\fontsize{9}{8}\selectfont $\displaystyle{ \int_{\omega} } $ } \mathcal I^o(\bs m^\eps,u_3^\eps) \bs d \bs x - Q_0 \ $ and our final result. 
\end{proof}
We use the result of Lemma \ref{lemma2.2.1} above and Proposition \ref{proposition3.1.8} to get
\begin{align} \hspace{-3mm}
\mathcal{I} & ^{\eps}( \bs m^\eps ,\bs u^\eps) - Q_0 = \mathfrak{A}^\eps+\mathfrak{B}^\eps+\mathfrak{C}^\eps \nonumber \\
&\ge \frac{d}{\eps^2} \big\Vert \nabla_p \bs m^\eps \big\Vert^2_{L^2(\Omega)}+\big[ \mathcal E_d^\eps(\bs m^\eps) - \int_{\Omega} \pi \big| \bs m^\eps_p \big|^2 \bs d \bs x \big]+\Lambda \Big( \  \big\Vert \partial_3 \bs M^\eps \big\Vert^2_{L^2(\Omega)} + \big\Vert \bs M^\eps \big\Vert^2_{L^2(\Omega)} \ \Big) \nonumber \\
&\ge \frac{d}{2\eps^2} \big\Vert \nabla_p \bs M^\eps \big\Vert^2_{L^2(\Omega)}+\frac{\Lambda}{2} \Big( \  \big\Vert \partial_3 \bs M^\eps \big\Vert^2_{L^2(\Omega)} + \big\Vert \bs M^\eps \big\Vert^2_{L^2(\Omega)} \ \Big) +\eps Q_1(\bs m^o) + \eps^2 Q_2(\bs m^o) -D_{18}\eps^2 \label{2lb} \\
&\ge -K_{12}\eps \nonumber
\end{align}
where $Q_1(\bs m^o)=\frac{8\pi}{3} \Big( \ |\bs m^o_p|^2 -2 |m^o_3|^2 \Big)$ and $Q_2(\bs m^o)=\pi^2 \Big( \frac{|\bs m^o_p|^2}{2} - |m^o_3|^2 \Big)$ as in equation \eqref{magnetostaticM}.
\subsection{Convergence of \texorpdfstring{$\mathcal{I}_1^{ \eps}(\bs m^\eps,\bs u^\eps)$}{I2convergence}}
\begin{thm1}
We have the following convergence,
\begin{align}
\lim_{\eps \rightarrow 0}\mathcal{I}_1^{ \eps}(\bs m^\eps,\bs u^\eps)=Q_1=\frac{16 \pi}{3} |m^o_3|^2 - \frac{8 \pi}{3} |\bs m^o_p|^2.
\nonumber \end{align}
\end{thm1}
\begin{proof}
Dividing equation \eqref{2lb} by $ \eps $ gives
\begin{align}
\mathcal{I}_1 ^{\eps}(\bs m^\eps,\bs u^\eps) &\ge \frac{1}{\eps} \big( \eps Q_1 + \eps^2 Q_2 \big) =\frac{8\pi}{3} \Big( \ |\bs m^o_p|^2 -2 |m^o_3|^2 \Big)+\eps \pi^2 \Big( \frac{|\bs m^o_p|^2}{2} - |m^o_3|^2 \Big)\nonumber 
\end{align}
Taking $ \liminf_{\eps \rightarrow 0} $ above to get,
\begin{align}
\liminf_{\eps \rightarrow 0} \mathcal{I}_1 ^{\eps}(\bs m^\eps,\bs u^\eps) &\ge\frac{8\pi}{3} \Big( \ |\bs m^o_p|^2 -2 |m^o_3|^2 \Big). \nonumber \end{align} 
To get the reverse inequality we take divide eqn. \eqref{I2upper} by $\eps$ and then take $ \limsup$ to get,
\begin{align}
\limsup_{\eps \rightarrow 0} \mathcal{I}_1(\bs m^\eps,\bs u^\eps) &\le \limsup_{\eps \rightarrow 0} \frac{1}{\eps} \Big( \eps Q_1 + \eps^2 Q_2 \Big)=Q_1=\frac{8\pi}{3} \Big( \ |\bs m^o_p|^2 -2 |m^o_3|^2 \Big). \nonumber
\end{align}
The $ \limsup $ and $ \liminf $ inequality together gives our result.
\end{proof} 
\newtheorem{Rem7}[theorem]{Remark}
\begin{Rem7} \label{remark6.1}
In the limit we get that $\mathcal{I}_1^{ \eps}( \bs m^\eps,\bs u^\eps)$ converges to a fixed quantity $Q_1(\bs m^o)$ depending only on $ \bs m^o$. $ Q_1(\bs m^o) $ consists of the mutual interaction of the poles generated by $ \bs m^o$ on one end $\omega(0) $ of the wire domain $\Omega $ with the other end $\omega(1) $ giving the term $\frac{16}{3} \pi |m^o_3|^2 $, and the self-interaction of the poles created by $ \bs m^o$ on the curved surface $ \partial \omega \times (0,1) $ giving the term $- \frac{8}{3} \pi |\bs m^o_p|^2$. 
\end{Rem7}


{ \centering\section{Third variational limit problem} \label{Secthirdgammalimit}  }
As in the previous section we first define $ \mathcal{I}_2^{ \eps}(\bs m^\eps,\bs u^\eps) :=\eps^{-2} \big( \mathcal{I}^\eps(\bs m^\eps,\bs u^\eps) - Q_0 -\eps Q_1(\bs m^o) \big)$. We will show that $ \mathcal{I}_2^{ \eps}(\bs m^\eps,\bs u^\eps)$ is bounded above and below independent of $ \eps$. Then we define $\bs w^\eps$ in \eqref{chi} and prove a weak compactness result for it. The convergence is improved to strong in Theorem \ref{theorem5.1} where we also define a new variational problem $ \mathcal I^o_2$ and show its relation with $\mathcal I^\eps_2(\bs m^\eps,\bs u^\eps)$. \\
Recalling $\mathfrak{A}^\eps,\mathfrak{B}^\eps$ and $\mathfrak{C}^\eps$ from equation \eqref{A+B+C} in \S \ \ref{Secsecondgammalimit} \,, we note 
\begin{align}
& \mathcal{I} ^{\eps}(\bs m^\eps, \bs u^\eps ) - Q_0 -\eps Q_1(\bs m^o) \nonumber \\
& \ = \Big[ \int_\Omega  \frac{d}{\eps^2} | {\nabla_p \bs m^\eps} |^2 \bs d \bs x \Big] + \Big[ \mathcal{E}^\eps_d(\bs m^\eps)-\pi \int_\Omega |\bs m^\eps_p|^2 \bs d \bs x -\eps Q_1 \Big] + \Big[ \int_\Omega \Big\{  d \ |\partial_3 {\bs m}^\eps|^2  + \varphi(\bs m^\eps) \nonumber \\ 
& \qquad - \bs h_a \cdot \bs m^\eps + 2| \bs m^\eps_p |^2 + \frac{1}{2} \mathbb C [ \bs \kappa^{\eps}[\bs u^\eps] - \bs E_s(\bs m^\eps) ]^2 \Big\} \bs d \bs x - Q_0 \Big] \nonumber \\
& \ = \mathfrak{A}^\eps+\big( \, \mathfrak{B}^\eps-\eps Q_1(\bs m^o)\big)+\mathfrak{C}^\eps.\nonumber
\end{align}
\subsection{ Boundedness of \texorpdfstring{$ \mathcal I_2^\eps(\bs m^\eps, \bs u^\eps) $}{bound on I2} }
To get an upper bound on $ \mathcal{I}_2^{ \eps}(\bs m^\eps,\bs u^\eps) $ we use the upper bound on $ \mathcal{I}_1^{ \eps}(\bs m^\eps,\bs u^\eps) $ from eqn. \eqref{I2upper} and subtract $ Q_1(\bs m^o) $ from both sides to get
\begin{align}
\mathcal{I}_2^{ \eps} ( \bs m^\eps ,\bs u^\eps) =\frac{1}{\eps} \big[ \mathcal{I}_1^{ \eps} ( \bs m^\eps ,\bs u^\eps) -\eps Q_1 \big] &\le \frac{1}{\eps} \big[ \mathcal{I}_1^{ \eps} ( \bs m^o ,\bs u^o) -\eps Q_1 \big]= \ Q_2(\bs m^o) \ \le \ K_{12}. \label{I2ub}\end{align}
To get a lower bound on $\mathcal{I}_2^{\eps}( \bs m^\eps ,\bs u^\eps)$, we subtract $\eps Q_1(\bs m^o)$ from the lower bound in the previous section in equation \eqref{2lb} to get
\begin{align}
\mathcal{I}^{\eps}( & \bs m^\eps ,\bs u^\eps) - Q_0 -\eps Q_1(\bs m^o) = \mathfrak{A}^\eps+\big( \, \mathfrak{B}^\eps-\eps Q_1(\bs m^o)\big)+\mathfrak{C}^\eps \nonumber \\
&\ge \frac{d}{2\eps^2} \big\Vert \nabla_p \bs M^\eps \big\Vert^2_{L^2(\Omega)}+\frac{\Lambda}{2} \Big( \  \Vert \partial_3 \bs M^\eps \Vert^2_{L^2(\Omega)} + \Vert \bs M^\eps \Vert^2_{L^2(\Omega)} \ \Big) +\eps^2 Q_2(\bs m^o) -D_{18}\eps^2 \label{3lb} \\
&\ge -K_{13}\eps^2. \nonumber
\end{align}
The upper bound \ref{I2ub} and lower bound in \ref{3lb} together give with Sobolev inequality on $\Omega$ 
\begin{align}
K_{14} \eps^2 &\ge \frac{d}{2\eps^2} \big\Vert \nabla_p \bs M^\eps \big\Vert^2_{L^2(\Omega)}+\frac{\Lambda}{2} \Big( \  \Vert \partial_3 \bs M^\eps \Vert^2_{L^2(\Omega)} + \Vert \bs M^\eps \Vert^2_{L^2(\Omega)} \ \Big) \nonumber \\&\ge \frac{\Lambda}{2} \big\Vert \bs M^\eps \big\Vert^2_{H^1(\Omega)} \ge C_q \big\Vert \bs M^\eps \big\Vert^2_{L^q(\Omega)}, \qquad \forall 1 \le q \le 6 \label{boundonMeps}
\end{align}
with $ C_q $ being the appropriate Sobolev constant.
\subsection{Weak convergence of \texorpdfstring{$ \bs w^\eps$}{weak conv I2}}
In this subsection we will extract some energy terms from the elastic energy and define a new variable $\bs w^\eps$ from the extracted terms. For this we need an improvement on Lemma \ref{lemma2.2.1} . For that first note that using a truncated Taylor Expansion we write $\bs E_s(\bs m^\eps)$ as
\begin{align}
\bs E_s(\bs m^\eps)-\bs E_s(\bs m^o)=\bs E_s^\prime(\bs m^o) \cdot \bs M^\eps+\frac{1}{2}\bs E_s^{\prime\prime}(\bs m^o)\bs M^\eps \cdot \bs M^\eps+o(|\bs M^\eps|^2):=\Delta(\bs M^\eps) \label{trunctaylor}
\end{align}
where we recall $\bs m^\eps=\bs m^o+\bs M^\eps$ and $\bs E_s^{\prime}(\bs m)$ and $\bs E_s^{\prime\prime}(\bs m) $ are the $1^{\mbox{st}}$ and $2^{\mbox{nd}}$ derivatives of $ \bs E_s(\bs m) $ w.r.t. $ \bs m $. Since $ \bs E_s(\bs m) $ is a polynomial function of $ \bs m$,  both $\bs E_s^{\prime}$ and $\bs E_s^{\prime\prime} $ are bounded in $ L^\infty$ for $ |\bs m| =m_s$. Then using \eqref{boundonMeps} we get
\begin{align}
& \Delta(\bs M^\eps) \lel K_{15}|\bs M^\eps|+K_{16}|\bs M^\eps|^2 & \mbox{ and } & &  \big\Vert \Delta(\bs M^\eps) \big\Vert^2_{L^2(\Omega)} \lel K_{17} \eps^2. \label{Deltabound}  
\end{align}
Set $ \bs u^\eps=\bs u^o+\bs U^\eps$. Note $\bs \kappa^\eps[\bs u^\eps]=\bs \kappa^\eps[\bs u^o]+\bs \kappa^\eps[\bs U^\eps]$.  Eqn. \eqref{elasticbound2} and Young's inequality gives
\begin{align}
\Big| \mathbb C\big[\Delta(\bs M^\eps)\big]:\bs \kappa^\eps[\bs U^\eps] \Big| \ \le \ \Gamma \, \big|\Delta(\bs M^\eps)\big| \ \big|\bs \kappa^\eps[\bs U^\eps]\big| \ \le \ \frac{4\Gamma^2}{\gamma} \big|\Delta(\bs M^\eps)\big|^2 + \frac{\gamma}{4} \big|\bs \kappa^\eps[\bs U^\eps]\big|^2. \label{yng1}
\end{align}
From \eqref{C-f0Vtest} note that $ \bs \kappa^\eps[\bs u^o] -\bs E_s(\bs m^o) = \bs 0$. Then equations \eqref{elasticbound2}, \eqref{f0definition} and \eqref{yng1} gives
\begin{align}
& \hspace{-9mm} \frac{\mathbb C}{2} \big[\bs \kappa^\eps[\bs u^\eps]-\bs E_s(\bs m^\eps)\big]^2 =\frac{\mathbb C}{2} \big[\bs \kappa^\eps[\bs u^o]+\bs \kappa^\eps[\bs U^\eps]-\bs E_s(\bs m^\eps)\big]^2  \nonumber \\
& \hspace{-10mm} =\frac{\mathbb C}{2} \big[\bs \kappa^\eps[\bs u^o] -\bs E_s(\bs m^\eps)\big]^2 + \frac{\mathbb C}{2} \big[\bs \kappa^\eps[\bs U^\eps] \big]^2 + \mathbb C \big[\bs \kappa^\eps[\bs u^o] -\bs E_s(\bs m^o)\big]:\bs \kappa^\eps[\bs U^\eps\big]+\mathbb C \big[\Delta(\bs M^\eps)\big]:\bs \kappa^\eps[\bs U^\eps\big] \nonumber \\
& \hspace{-10mm} \ge \frac{1}{2}f_0\big(\partial_3 v^o-E_{s_{33}}(\bs m^\eps)\big)+ \frac{\gamma}{2} \big|\bs \kappa^\eps[\bs U^\eps]\big|^2 -\frac{4\Gamma^2}{\gamma} \big|\Delta(\bs M^\eps)\big|^2-\frac{\gamma}{4} \big|\bs \kappa^\eps[\bs U^\eps]\big|^2 \nonumber  \\
&\hspace{-10mm} = \frac{1}{2}f_0\big(\partial_3 v^o-E_{s_{33}}(\bs m^\eps)\big)+ \frac{\gamma}{4} \big|\bs \kappa^\eps[\bs U^\eps]\big|^2 -\frac{4\Gamma^2}{\gamma} \big|\Delta(\bs M^\eps)\big|^2.
\label{mathfrakcbound} \end{align}
Then using equation \eqref{Deltabound}, and above result \eqref{mathfrakcbound}, we improve Lemma \ref{lemma2.2.1} to get
\begin{align}
\mathfrak C^\eps &=\int_\Omega \Big\{  d \ |\partial_3 {\bs m}^\eps|^2  + \varphi(\bs m^\eps) - \bs h_a \cdot \bs m^\eps + \pi| \bs m^\eps_p |^2 + \frac{1}{2} \mathbb C \big[ \bs \kappa^{\eps}[\bs u^\eps] - \bs E_s(\bs m^\eps) \big]^2 \Big\}\bs d \bs x - Q_0  \nonumber \\
&\ge \int_\Omega \Big\{  d \ |\partial_3 {\bs m}^\eps|^2  + \varphi(\bs m^\eps) - \bs h_a \cdot \bs m^\eps + \pi| \bs m^\eps_p |^2 + \frac{1}{2} f_0\big(\partial_3 v^o - \bs E_{s_{33}}(\bs m^\eps)\big) \Big\}\bs d \bs x  \nonumber \\
& \hspace{56mm}  + \int_{\Omega} \Big\{ \, \frac{\gamma}{4} \big|\bs \kappa^\eps[\bs U^\eps]\big|^2 -\frac{4\Gamma^2}{\gamma} |\Delta(\bs M^\eps)|^2 \, \Big\}\bs d \bs x - Q_0 \nonumber \\
&\ge \Big[ \int_\omega \mathcal I^o(\bs m^\eps,v^o)\bs d \bs x_p -Q_0 \Big] + \frac{\gamma}{4} \big\Vert \bs \kappa^\eps[\bs U^\eps]\big\Vert^2_{L^2(\Omega)} -4\Gamma^2 \gamma^{-1} \big\Vert \Delta(\bs M^\eps) \big\Vert_{L^2(\Omega)}^2 \nonumber \\
&\ge \Lambda \Big( \, \big\Vert \partial_3 \bs M^\eps \big\Vert^2_{L^2(\Omega)}+ \big\Vert \bs M^\eps \big\Vert^2_{L^2(\Omega)} \Big) +\frac{\gamma}{4} \big\Vert \bs \kappa^\eps[\bs U^\eps]\big\Vert^2_{L^2(\Omega)}-K_{18} \eps^2
\end{align}
where we have used the strong second variation condition on $\mathcal I^o$ in the last step. Using this we revisit the lower bound equation \eqref{3lb} using Proposition \ref{proposition3.1.8} to give
\begin{align}
\mathcal{I} ^{\eps}(\bs m^\eps &,\bs u^\eps) - Q_0 -\eps Q_1(\bs m^o) = \mathfrak{A}^\eps+\big(\mathfrak{B}^\eps-\eps Q_1(\bs m^o)\big)+\mathfrak{C}^\eps \nonumber \\
&\ge \int_{\Omega} \Big\{ \, \frac{d}{\eps^2} |\nabla _p \bs M^\eps|^2 +\Lambda|\partial_3 \bs M^\eps|^2+\Lambda| \bs M^\eps|^2+\frac{\gamma}{4} \big|\bs \kappa^\eps[\bs U^\eps]\big|^2 \, \Big\}-K_{18} \eps^2 +\big(\mathfrak{B}^\eps-\eps Q_1\big)
\nonumber \\
&\ge \frac{\gamma}{4} \big\Vert \bs \kappa^\eps[\bs U^\eps]\big\Vert^2_{L^2(\Omega)} +\eps^2 Q_2(\bs m^o)-D_{18}\eps^2-K_{18} \eps^2.
\end{align}
The upper bound \eqref{I2ub} and the above equation then gives
\begin{align}
K_{19} \eps^2 \ge \int_{\Omega} \big|\bs \kappa^{\eps}(\bs U^\eps) \big|^2. \label{basicboundelastic} \end{align} 
Set $ \bs w^\eps = ( \bs u_p^\eps - \bs u^o_p, \eps^{-1}\big( u_3^\eps - u^o_3 \big) ) = ( \bs U_p^\eps, \eps^{-1} U_3^\eps) $ and note
\begin{align} \bs \kappa^{\eps}(\bs U^\eps) & =\begin{bmatrix} \frac{1}{\eps}\partial_1  w^\eps_1 &  \frac{1}{2 \eps}( \partial_1 w^\eps_2 + \partial_2  w^\eps_1) & \frac{1}{2}( \partial_1 w^\eps_3 + \partial_3 w^\eps_1) \\ \frac{1}{2 \eps}(\partial_1 w^\eps_2 + \partial_2  w^\eps_1) &  \frac{1}{\eps} \partial_2  w^\eps_2 &  \frac{1}{2} ( \partial_2 w^\eps_3 + \partial_3 w^\eps_2)  \\ \frac{1}{2}( \partial_1 w^\eps_3 + \partial_3 w^\eps_1)  & \frac{1}{2}( \partial_2 w^\eps_3 + \partial_3 w^\eps_2) & \eps \partial_3 w^\eps_3 \end{bmatrix} \nonumber \\
&=: \bs \chi (\bs w^\eps) . \label{chi} \end{align}
Note $ \displaystyle{ \Big| \frac{\bs \chi (\bs w^\eps)}{\eps} \Big| \ge \big| \bs E(\bs w^\eps) \big| }$ where $ \bs E(\bs w^\eps)$ is the elastic strain of field $\bs w^\eps$. Korn's inequality in \eqref{basicboundelastic} gives,
\begin{align}
K_{19} \ge \int_{\Omega} \Big| \frac{\bs \kappa^{\eps}(\bs U^\eps)}{\eps} \Big|^2 \bs d \bs x= \int_{\Omega} \Big| \frac{\bs \chi (\bs w^\eps) }{\eps} \Big|^2 \bs d \bs x & \ge \int_{\Omega} \big| \bs E(\bs w^\eps) \big|^2 \bs d \bs x \ge \alpha \int_{\Omega} \big( \, \big| \nabla \bs w^\eps \big|^2 + | \bs w^\eps |^2 \big) \bs d \bs x \nonumber 
\end{align}
where $ \alpha(\Omega) > 0 $ is the Korn's constant. These results together imply for some unrelabeled subsequence 
\begin{subequations}
\begin{align} &\bs w^\eps \rightharpoonup \bs w^o \mbox{ in } H^1(\Omega;\mathbb R^3)  & \bs w^\eps \rightarrow \bs w^o \mbox{ in }L^2(\Omega;\mathbb R^3) \label{wepsweak1} \\
& \bs E(\bs w^\eps) \rightharpoonup \bs E(\bs w^o) \mbox{ in } L^2(\Omega;\mathbb R^3) &  \frac{\bs \chi^\eps }{\eps} \rightharpoonup \bs \nu^0  \mbox{ in } L^2(\Omega;\mathbb R^3). \label{wepsweak2} 
\end{align}
\end{subequations}
Note from \eqref{chi},
\begin{align}
& \frac{ \chi_{ij}^\eps }{\eps} = \frac{1}{\eps^2} E_{ij}(\bs w^\eps) \mbox{ for } (i,j) \in \{1,2\},
& \frac{ \chi_{i3}^\eps }{\eps} = \frac{1}{\eps} E_{i3}(\bs w^\eps) \mbox{ for } i \in \{1,2\} \nonumber
\end{align}
which together imply after using lower semi-continuity of norm w.r.t weak convergence
\begin{align}
\big\Vert E_{ij}(\bs w^o) \big\Vert_{L^2(\Omega)} \le \liminf_{\eps \rightarrow 0} \ \big\Vert E_{ij}(\bs w^\eps) \big\Vert_{L^2(\Omega)} = 0
\end{align}
when $ i \in \{ 1,2 \} $ and $ j \in \{ 1,2,3 \}$. 
\newtheorem{lem3}{Lemma}[section]
\begin{lem3}\label{lemma5.1}
Let the strain corresponding to a displacement field $ \bs w^o \in H_\sharp^1(\Omega) $, $ \bs E(\bs w^o) $ be such that $ E_{ij}(\bs w^o) = 0 $ for $ i \in \{ 1,2 \} $ and $ j \in \{ 1,2,3 \}$. Then $\exists \gamma^o(x_3) \in H^1_\sharp(0,1)$ so that $ \bs w^o $ is given by,
\begin{align}
&w^o_1(\bs x)=w^o_1(x_3) \in H_{\sharp\sharp}^2(0,1) \ , &w^o_2(\bs x)=w^o_2(x_3) \in H{\sharp\sharp}^2(0,1) \ , \nonumber \\
&w^o_3(\bs x) = - x_1 \ \partial_{3} w^o_1(x_3) - x_2 \ \partial_{3} w^o_2(x_3) + \gamma^o(x_3) \label{w03,3}
\end{align} 
and $ H_{\sharp\sharp}^2(0,1) = \big\{ w \in  H^2(0,1) : w(x_3=0)=\partial_3 w(x_3=0)=0 \ \big\}.$
\end{lem3}
\begin{proof}
Given $ E_{11}(\bs w^o) = \partial_1 w^o_1(\bs x)= 0 $ and $ E_{22}(\bs w^o) = \partial_2 w^o_2(\bs x)= 0 $ together gives $ w^o_1(\bs x) = \alpha_1(x_2,x_3) $ and $ w^o_2(\bs x) = \alpha_2(x_1,x_3) $. $E_{12}(\bs w^o) =0$ gives us,
\begin{align} \partial_2 w^o_1(\bs x) + \partial_1 w^o_2(\bs x)= \partial_2 \alpha_1(x_2,x_3) + \partial_1 \alpha_2(x_1,x_3) =0. \nonumber \end{align}
This implies $ \partial_2 \alpha_1(x_2,x_3) =- \partial_1 \alpha_2(x_1,x_3) =\beta(x_3) $. Thus, $ w^o_1(\bs x) = \gamma_1(x_3) + x_2\beta(x_3) $ and $ w^o_2(\bs x) = \gamma_2(x_3) -x_1\beta(x_3)$. Also given $ E_{13}(\bs w^o) = E_{23}(\bs w^o) =0$, we have
\begin{align} \partial_2 E_{13}(\bs w^o) - \partial_1 E_{23}(\bs w^o) = \frac{1}{2} (\partial_{32} w^o_1 + \partial_{12}w^o_3 -\partial_{31}w^o_2 - \partial_{12}w^o_3 ) =\partial_3 \beta(x_3) =0. \nonumber
\end{align}
This gives us $ \beta(x_3) =K_{20} $ is constant. Using the Dirichlet boundary conditions at the base $ x_3=0$, we have $ w^o_1(x_2,0)= \gamma_1(0) + x_2 K_{20} =0 $ which gives us $ K_{20}=0$. So $ w^o_1(\bs x)= \gamma_1(x_3) $ and $ w^o_2(\bs x)= \gamma_2(x_3) $. We finally have,
\begin{align}
& E_{13}(\bs w^o)=0 \Rightarrow \partial_1 w^o_3(\bs x) = - \partial_3 w^o_1(\bs x) = - \partial_3 \gamma_1(x_3), \nonumber \\
& E_{2,3}(\bs w^o)=0 \Rightarrow \partial_2 w^o_3(\bs x) = - \partial_3 w^o_2(\bs x) = - \partial_3\gamma_2(x_3). \nonumber \end{align}
which gives us for $w^o_3$ on integrating above equations
\begin{align}
& w^o_3(\bs x) = - x_1 \, \partial_3 \gamma_1(x_3) - x_2 \, \partial_3 \gamma_2(x_3) + \gamma^o(x_3) = - x_1 \ \partial_{3} w^o_1(x_3) - x_2 \ \partial_{3} w^o_2(x_3) + \gamma^o(x_3), \nonumber \\
& \partial_3 w^o_3(\bs x) = - x_1 \ \partial_{33} w^o_1(x_3) - x_2 \ \partial_{33} w^o_2(x_3) + \partial_3 \gamma^o(x_3). \nonumber \end{align}
Note also that $ \bs w^o \in H^1(\Omega)$ gives $ \partial_3 w^o_3 \in L^2(\Omega) $. Equation \eqref{w03,3} gives then that $ \partial_{33} w^o_i(x_3) \in L^2(\Omega) $ for $ i=1,2 $ and $ \partial_3 \gamma_3 \in L^2(\Omega)$ and thus $ w^o_i(x_3) \in H^2(\Omega) $ and $ \gamma \in H^1(\Omega)$. Note also that the Dirichlet boundary conditions at the base gives $w^o_3(\bs x_p,0) = - x_1 \ \partial_{3} w^o_1(0) - x_2 \ \partial_{3} w^o_2(0) + \gamma^o(0) \equiv 0$ which means $\partial_{3} w^o_1(0)=\partial_{3} w^o_2(0)=\gamma^o(0)=0$.
\end{proof}
The displacement solution in Lemma \ref{lemma5.1} are well known in literature as the Bernoulli-Navier displacements. See Theorem 4.3 in \cite{trabucho1996mathematical} for more details.
\subsection{ Strong convergence of \texorpdfstring{$ \mathcal I_2^\eps(\bs m^\eps, \bs u^\eps) $}{bound on I2} }
For the third limit variational problem we assume that $\bs m^o=(0,0,m_s)$. This assumption greatly simplifies our final limit problem while essentially describing the underlying physics. Refer to Remark \ref{Rem71} to see more regarding this assumption and the more general case.
From \eqref{firstsgamma}, $\bs M^\eps=\bs m^\eps-\bs m^o $ which gives $|\bs m^\eps|^2-|\bs m^o|^2=|\bs M^\eps|^2+2 \bs m^o \cdot \bs M^\eps=|\bs M^\eps|^2+2 m_3^o M_3^\eps \equiv 0$. Also note $ \kappa_{33}^{\eps}(\bs u^\eps)=\partial_3 u_3^o + \chi_{33}^\eps(\bs w^\eps)=\partial_3 v^o + \chi_{33}^\eps(\bs w^\eps)=E_{s_{33}}(\bs m^o) + \chi_{33}^\eps(\bs w^\eps) $. Then
\begin{align}
|\bs M^\eps|^2&=-2\bs m^o \cdot \bs M^\eps =-2m_sM_3^\eps, \label{m1z}  \\
E_{s_{33}}(\bs m^\eps)&=\frac{3\lambda_{100}}{2m_s^2} \big( \, | m_3^\eps|^2 -\frac{m_s^2}{3} \, \big) =E_{s_{33}}(\bs m^o) - \frac{3\lambda_{100}}{2m_s^2} \ |\bs M_p^\eps|^2, \label{es33m0} \\ 
\kappa_{33}^{\eps}(\bs u^o)-E_{s_{33}}(\bs m^\eps) &= \partial_3 v^o -E_{s_{33}}(\bs m^o)+ \frac{3\lambda_{100}}{2m_s^2} \ |\bs M_p^\eps|^2=\frac{3\lambda_{100}}{2m_s^2} \ |\bs M_p^\eps|^2, \nonumber
\end{align} and using H$\ddot{\mbox{o}}$lder's inequality and \eqref{boundonMeps}
\begin{align}
\hspace{-8mm} \int_\Omega f_0\big( \kappa_{33}^{\eps}( & \bs u^\eps) - E_{s_{33}}(\bs m^\eps)\big)=\int_\Omega f_0\big(\partial_3 v^o \hspace{-1mm}- E_{s_{33}}(\bs m^\eps)\big)+ f_0\big(\chi_{33}^\eps(\bs w^\eps)\big) +\frac{3 Y \lambda_{100}}{2m_s^2} \ |\bs M_p^\eps|^2 \, \chi_{33}^\eps(\bs w^\eps) \nonumber \\
&\ge \int_\Omega f_0\big(\partial_3 v^o \hspace{-1mm}- E_{s_{33}}(\bs m^\eps)\big)+\int_\Omega f_0\big(\chi_{33}^\eps(\bs w^\eps)\big)-Y \frac{3\lambda_{100}}{2m_s^2} \big\Vert \bs M^\eps \big\Vert^2_{L^4(\Omega)} \big\Vert \chi_{33}^\eps(\bs w^\eps) \big\Vert_{L^2(\Omega)} \nonumber \\
&\ge \int_\Omega f_0\big(\partial_3 v^o \hspace{-1mm} - E_{s_{33}}(\bs m^\eps)\big)+\int_\Omega f_0\big(\chi_{33}^\eps(\bs w^\eps)\big)-K_{21} \eps^2 \big\Vert \chi_{33}^\eps(\bs w^\eps) \big\Vert_{L^2(\Omega)}. \label{f0breakup} 
\end{align}
Using \eqref{f0definition} \ $\mathbb C \big[ \bs \kappa^{\eps}[\bs u^\eps] - \bs E_s(\bs m^\eps) \big] \ge f_0 \big( \kappa_{33}^{\eps}(\bs u^\eps) - E_{s_{33}}(\bs m^\eps)\big)$ which gives
\begin{align}
\mathfrak C^\eps & \ge \int_\Omega d \ |\partial_3 {\bs m}^\eps|^2 + \varphi(\bs m^\eps) - \bs h_a \cdot \bs m^\eps + 2| \bs m^\eps_p |^2 + \frac{1}{2} f_0\big(\kappa_{33}^{\eps}(\bs u^\eps) - E_{s_{33}}(\bs m^\eps)\big) -Q_0 \nonumber \\ 
& \ge \int_\Omega d \ |\partial_3 {\bs m}^\eps|^2 + \varphi(\bs m^\eps) - \bs h_a \cdot \bs m^\eps + 2| \bs m^\eps_p |^2 + \frac{1}{2} f_0\big(\partial_3 v^o - E_{s_{33}}(\bs m^\eps)\big) -Q_0 \nonumber \\
& \hspace{60mm} +\int_\Omega f_0(\chi_{33}^\eps(\bs w^\eps))-K_{21}\eps^2 \big\Vert \chi_{33}^\eps(\bs w^\eps)\big\Vert_{L^2(\Omega)}\nonumber \\ 
& = \Big[ \, \int_{\omega} \mathcal I^o(\bs m^\eps,v^o) \bs d \bs x_p -Q_0 \, \Big]+\int_\Omega f_0(\chi_{33}^\eps(\bs w^\eps))-K_{21}\eps^2 \big\Vert \chi_{33}^\eps(\bs w^\eps)\big\Vert_{L^2(\Omega)} \nonumber \\ 
& = \Lambda \Big( \,\big\Vert \partial_3 \bs M^\eps \big\Vert^2_{L^2(\Omega)}+ \big\Vert \bs M^\eps \big\Vert^2_{L^2(\Omega)} \Big)+\int_\Omega f_0(\chi_{33}^\eps(\bs w^\eps))-K_{21}\eps^2 \big\Vert \chi_{33}^\eps(\bs w^\eps)\big\Vert_{L^2(\Omega)} 
\label{7.18} \end{align}
where we have used strong second variation condition on $\mathcal I^o(\bs m^\eps,v^o)$ in the last step.

Define $\mathcal{I}_2^o(w_1(x_3),w_2(x_3),\nu(x_3))$ in function space $ \mathcal A_2 $ as
\begin{align}
\hspace{-2mm} \mathcal{I}_2^o(w_1,w_2,\gamma)&=\frac{1}{2} \int_\Omega \Big[ f_0\big(x_1 \partial_{33} w_1(x_3)\big)+f_0\big(x_2 \partial_{33} w_2(x_3)\big)+f_0\big(\partial_{3} \gamma(x_3)\big) \Big]\bs d \bs x+Q_2(\bs m^o) \label{I0defn.}
\end{align}
and $\mathcal A_2 := \big\{ (w_1(x_3),w_2(x_3),\gamma(x_3)) \in H_{\sharp\sharp}^2(0,1) \times H_{\sharp\sharp}^2(0,1) \times H_\sharp^1(0,1)\big\}$.
\begin{thm1}
\label{theorem5.1}
There exists a subsequence $\bs w^\eps$ not relabeled such that $ \bs w^\eps \rightarrow \bs w^o$ $ \mbox{ strongly in }$ $ H^1(\Omega,\mathbb R^3) $. $\bs w^o $ is given as in Lemma \ref{lemma5.1} and $ (w^o_1(x_3),w^o_2(x_3),\gamma^o(x_3)) $ minimizes $\mathcal{I}^0_2 $ in $ \mathcal A_2$ and $ \lim_{\eps \rightarrow 0} \mathcal I_2^\eps(\bs m^\eps,\bs u^\eps)=\mathcal I_2^o(w^o_1,w^o_2,\gamma^o)$ where $ (w^o_1,w^o_2,\gamma^o) $ minimizes $ \mathcal I^o_2(w_1(x_3),w_2(x_3),\nu(x_3)) $ in $ \mathcal A_2 $.
\end{thm1}
\begin{proof} 
Because $\bs m^o=(0,0,m_s)$ we use the relevant magnetostatic estimate from equation \eqref{finalremark} in the remark \ref{RemA2} \, following Proposition \ref{proposition3.1.8} and equation \eqref{7.18} to give
\begin{align}
\mathcal{I}^{\eps}(& \bs m^\eps,\bs u^\eps) - Q_0 -\eps Q_1(\bs m^o) = \mathfrak{A}^\eps+\big(\mathfrak{B}^\eps-\eps Q_1(\bs m^o) \big)+\mathfrak{C}^\eps \nonumber \\
&\ge \int_{\Omega} \Big\{ \, \frac{d}{\eps^2} |\nabla _p \bs M^\eps|^2 +\Lambda|\partial_3 \bs M^\eps|^2+\Lambda| \bs M^\eps|^2+\frac{1}{2} f_0\big(\chi_{33}^\eps(\bs w^\eps)\big) \, \Big\}- K_{21}\eps^3 \Vert \chi_{33}^\eps(\bs w^\eps)\Vert_{L^2(\Omega)}
\nonumber \\
& \qquad +\big(\mathcal E_d^\eps(\bs m^\eps)-\pi \int_{\Omega}\big| \bs m^\eps_p\big|^2-\eps Q_1(\bs m^o) \big) \nonumber \\
&\ge \int_{\Omega} \frac{1}{2} f_0\big(\chi_{33}^\eps(\bs w^\eps)\big) -K_{21} \eps^3 \Big\Vert \frac{\chi_{33}^\eps(\bs w^\eps)}{\eps}\Big\Vert_{L^2(\Omega)}+\eps^2 Q_2(\bs m^o).
\end{align}
Dividing by $ \eps^2 $, noting $\eps^{-1} \chi_{33}^\eps(\bs w^\eps) = \partial_3 w^\eps_3$ and $\partial_3 w^\eps_3 \rightharpoonup \partial_3 w^o_3$ in $L^2(\Omega)$, we get on taking $\liminf_{\eps \rightarrow 0}$,
\begin{align}
\liminf_{\eps \rightarrow 0} \mathcal{I}_2^{\eps}( \bs m^\eps,\bs u^\eps) & \ge \int_{\Omega} \frac{1}{2} f_0 \big(\partial_3 w_3^o \big)+Q_2(\bs m^o).
\end{align} 
The first term in the R.H.S comes from the fact that \eqref{f0definition} gives, $ \mbox{\fontsize{8}{8}\selectfont $\displaystyle{ \int_\Omega } $ } f_0(\partial_3 w_3^\eps)= Y \big\Vert \partial_3 w_3^\eps \big\Vert^2_{L^2(\Omega)} $ and $\partial_3 w_3^\eps \rightharpoonup \partial_3 w_3^o $ in $ L^2(\Omega) $ which implies $ \big\Vert \partial_3 w_3^o \big\Vert_{L^2(\Omega)} \le \liminf \big\Vert \partial_3 w_3^\eps \big\Vert_{L^2(\Omega)} $. \\
To get the $\limsup$ inequality, we compare energy of $\mathcal I^\eps_2 $ at its minimizer $(\bs m^\eps,\bs u^\eps)$ against a test function $(\bs m^o,\bs U=\bs u^o+(\bs W_p,\eps W_3))$ where 
\begin{align}
&W_1=w_1^o(x_3)-\eps^2 \sigma x_1 \partial_{3} \gamma(x_3) +\eps^2 \frac{\sigma}{2}\big(x_1^2\partial_{33}w_1^o-x_2^2\partial_{33}w_1^o+2x_1x_2\partial_{33}w_2^o\big) \label{W1} \\
&W_2=w_2^o(x_3)-\eps^2 \sigma x_2 \partial_{3} \gamma(x_3) +\eps^2 \frac{\sigma}{2}\big(x_2^2\partial_{33}w_2^o-x_1^2\partial_{33}w_2^o+2x_1x_2\partial_{33}w_1^o \big)\nonumber \\
&W_3=w_3^o(\bs x)=\gamma^o(x_3)-x_1\partial_3 w_1^o-x_2\partial_3 w_2^o 
\nonumber
\end{align}
Then $ \bs \kappa^\eps(\bs U)-\bs E_s(\bs m^o)= \bs \kappa^\eps(\bs u^o)-\bs E_s(\bs m^o)+\bs \chi^\eps(\bs W)=\bs \chi^\eps(\bs W) $ where $ \eps^{-1} \bs \chi^\eps(\bs W) $ converges as:
\begin{align}
& \eps^{-1}\bs \chi_{11}^\eps(\bs W) = \eps^{-1}\bs \chi_{22}^\eps(\bs W) =-\sigma\big(\partial_3\gamma^o(x_3)-x_1\partial_{33} w_1^o-x_2\partial_{33} w_2^o\big),  \qquad  \eps^{-1} \bs \chi_{12}^\eps(\bs W) = 0 \nonumber \\
& \eps^{-1}\bs \chi_{12}^\eps(\bs W) = \frac{\eps}{2} \Big\{ - \sigma x_1 \partial_{33} \gamma^o(x_3) +\frac{\sigma}{2} \big( x_1^2\partial_{333}w_1^o-x_2^2\partial_{333}w_1^o+ 2x_1x_2\partial_{333}w_2^o \big) \Big\} \approx O(\eps) \nonumber  \\
& \eps^{-1}\bs \chi_{13}^\eps(\bs W) = \frac{\eps}{2} \Big\{ - \sigma x_2 \partial_{33} \gamma^o(x_3) +\frac{\sigma}{2} \big( x_2^2\partial_{333}w_2^o-x_1^2\partial_{333}w_2^o+2x_1x_2\partial_{333}w_1^o \big) \Big\} \approx O(\eps)
\nonumber \\
& \eps^{-1}\bs \chi_{33}^\eps(\bs W) = \partial_3 W_3 = \partial_3 \gamma^o(x_3)-x_1\partial_{33} w_1^o-x_2\partial_{33} w_2^o. \nonumber
\end{align}
Its easy to check that $ \ \mathbb C \big[\bs \kappa^\eps(\bs U)-\bs E_s(\bs m^o)\big]^2=f_0 \big( \eps \partial_3 W_3 \big) +O(\eps^3) $ gives
\begin{align}
\mathcal I_2^\eps(\bs m^\eps,\bs u^\eps) & \le \mathcal I^\eps_2(\bs m^o,\bs U) = \frac{\mathcal I^\eps(\bs m^o,\bs U)-\eps Q_1(\bs m^o) -Q_0}{\eps^2} \nonumber \\
&=Q_2(\bs m^o)+\int_{\Omega}f_0 \big( \partial_3 \gamma^o(x_3)-x_1\partial_{33} w_1^o(x_3)-x_2\partial_{33} w_2^o(x_3) \big)+O(\eps). \nonumber
\end{align}
Taking $\limsup$ as $ \eps \rightarrow 0 $ we get our result. We finally need to show that $(w^o_1,w^o_2,\gamma^o)$ minimizes $\mathcal I^o(w_1,w_2,\gamma)$ in $\mathcal A_2$. Again as in Theorem \ref{theorem2.1.1} we start of with smooth $(w_1,w_2,\gamma)$ satisfying the boundary conditions. We set up a displacement $\bs W$ exactly as in eqn. \eqref{W1} with $(w_1,w_2,\gamma)$ replacing $(w^o_1,w^o_2,\gamma^o)$. Comparing energy of $\mathcal I_2^\eps(\bs m^\eps,\bs u^\eps)$ with the test function $\big(\bs m^o,\bs u^o+(\bs W_p^\eps,W^\eps_3) \big)$ we get $\mathcal I_2^\eps(\bs m^\eps,\bs u^\eps) \le \mathcal I_2^\eps\big(\bs m^o,\bs u^o+(\bs W_p^\eps,W^\eps_3) \big)$. We get our result on taking limit as $\eps \rightarrow 0 $ and noting that smooth functions $(w_1,w_2,\gamma)$ satisfying the appropriate boundary conditions are dense in $\mathcal A_2$. 
\end{proof}
\begin{Rem7} \label{Rem71}
The assumption $\bs m^o=(0,0,m_s)$ is not necessary, but it greatly simplifies the form of the third variational limit problem $\mathcal I_2^o$. If $ \bs m^o$ is a more general constant magnetization, then the magnetization $\bs m^o$ will have a non-trivial corrector. In fact the third variational limit problem then will be a more complicated problem involving both elastic and magnetic corrector terms. The elastic part of the problem will however still retain the Euler-Bernoulli type terms and the problem however will simplify to the limit problem of Theorem \ref{theorem5.1} if $ \bs m^o=(0,0,m_s)$. 

We however do not present that result here, as we are more interested in nanowires made of Galfenol. For these wires made of Galfenol, as expressed in \S \, \ref{Sec-example}, the demagnetization term $\pi| \bs m_p |^2 $ is the largest term in the ''effective anisotropy`` $ \varphi(\bs m )+\pi| \bs m_p |^2 - \bs h_a \cdot \bs m$ by an order of magnitude for typical applied fields. Thus the minimizer $ \bs m^o$ of this ''effective anisotropy`` is expected to be at or very close to $ (0,0,m_s) $. 
\end{Rem7}

{\centering \section{Summary and Discussion} \label{Sec-summary} }
\begin{figure}[h]
\begin{center}
\includegraphics[scale=1.0]{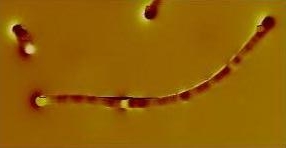}
\caption{Bent wires of Galfenol}
\label{nonlinearbending}
\end{center}
\end{figure}

We have presented in this paper the derivation of simple models for nanometer diameter wires to be used in sensors/devices using the physics of magnetostriction. Though the starting point for these problems is an infinite dimensional variational problem with a non-convex non-linear constraint and variational energy contains terms which are non-local, using the method of variational convergence we have derived much simpler 1-dimensional models which is expected to approximate the actual physics of the starting model. The Theorems \ref{theorem2.1.1} and \ref{theorem5.1} clearly set up these simpler models $\mathcal{I}^o(\bs m,v) $ and $\mathcal I^o_2(w_1,w_2,\nu) $ respectively.

The bending behavior of the nanowires is described by $\mathcal I^o_2(w_1,w_2,\nu) $ if $\bs m^o=(0,0,m_s)$ solves the first variational limit problem $\mathcal{I}^o(\bs m,v) $ as is expected for Galfenol wires. The form of second variational limit and the third variational limit suggest that the magnetization remains strongly stabilized at $\bs m^o$ and higher order theories do not add correctors to $\bs m^o$ within the framework of geometrically linear theory of magnetostriction. The displacement solution $\bs u^o$ corresponding to the first variational problem is however corrected due to the appearance of the bending energy terms in the third variational limit $\mathcal I^o_2(w_1,w_2,\nu)$. 

Although we have not included any external applied force in our analysis, it can be included with very minor changes to our presentation. The galfenol wires in bending behave like purely elastic beams with additional magnetic term which comes thorough the interaction of the positive and negative poles created at the two ends of the wire by the magnetization $ \bs m^o=(0,0,m_s)$. This contribution is a fixed energy at the order at which bending elastic terms appear. 

The strong stabilization of the magnetization is borne out by experiments where nanowires have been bend using an AFM tip. The Figure \ref{nonlinearbending} shows the MFM scan for a galfenol wire in bent shape. The details of the experiment are available from \cite{downey2008thesis}. The MFM scan shows the same bright and dark spots at the two ends of the wire characteristic of axially magnetized wires as seen in Figure \ref{fig12}. The bright spot in the middle was detected to be a topological defect. It is clear that even the large bending is unable to alter the axial magnetization, which can be interpreted as being equal to $\bs m^o$. 

The bending behavior of the nanowires will be more complicated if $\bs m^o \neq (0,0, \pm m_s)$ solves the first variational limit problem $\mathcal{I}^o(\bs m,v) $ as mentioned in Remark \ref{Rem71}. This case is however not very important for Galfenol nanowires with the geometry that we are interested in.

The highly nonlinear deformation of the nanowires in Figure \ref{nonlinearbending} also suggests to start of with a geometrically nonlinear theory for magnetostriction. For geometrically nonlinear deformations however, the problem is significantly harder as the magnetic energies in the starting energy \eqref{mathcale} will be defined on the deformed configuration, while typically in nonlinear elasticity, the free energy is defined over the reference configuration. \\
Recall the energy $\mathcal{I}_2^o(w_1(x_3),w_2(x_3),\nu(x_3))$ was defined in the previous section as 
\begin{align}
\mathcal{I}_2^o(w_1,w_2,\nu)&=\int_\Omega \frac{1}{2} \Big\{ f_0(x_1 \partial_{33} w_1)+f_0(x_2 \partial_{33} w_2)+f_0(\partial_{3} \gamma) \Big\} \bs d \bs x+Q_2(\bs m^o).  \nonumber
\end{align}
Note that the first and second term are exactly the bending energy that appears in classical Euler-Bernoulli theory. To see this note that from the definition of $ f_o $ in \eqref{youngs-modulus} we get,
\begin{align}
\hspace{-5mm} \int_\Omega f_0(x_1 \partial_{33} w_1(x_3)) \bs d \bs x = \int^1_0 \int_{\omega} Y x_1^2 \, \big| \partial_{33} w_1(x_3) \, \big|^2 &= \int^1_0 Y \Bigg\{ \int_{\omega} x_1^2 \bs d \bs x_p \Bigg\} \, \big| \partial_{33} w_1(x_3) \, \big|^2 dx_3 \nonumber \\
&=\int^1_0 Y I_{22} \, \big| \partial_{33} w_1(x_3) \, \big|^2 dx_3 \nonumber
\end{align}
where $ I_{22} $ is the polar moment of inertia.

From the point of view of using Galfenol as a potential material for sensor application, the strong stabilization of magnetization $\bs m^o=(0,0,m_s)$ is not encouraging, as a designer would hope that the magnetization would change drastically from $\bs m^o$ on imposing any bending deformation. Newer proposals for sensor design using Galfenol have been made which replace the wire array of Galfenol with an array where each wire is multi-layered with fine layers of magnetic Galfenol and non-magnetic Copper $ \big( \, \mbox{cf. } \cite{park2010characterization} \big)$. 
\newpage

\appendix
{\centering
\section{Magnetostatic calculations}
\label{Sec-Linear-gamma} }
\subsection{Introduction}
Recall in \eqref{demagstandardbound1} we defined $ \mathcal E^\eps_d(\bs m) = \mbox{\fontsize{9}{8}\selectfont $\displaystyle{ \frac{1}{8 \pi} \int_{\mathbb R^3}} $ } | \bs h^\eps_{\bs m}(\bs x) |^2 \bs d \bs x = \mbox{\fontsize{9}{8}\selectfont $\displaystyle{ \frac{1}{\eps^2} \frac{1}{8 \pi} \int_{\mathbb R^3}} $ } |\widetilde{\bs h}^\eps_{\widetilde{\bs m}}(\bs y) |^2  \bs d \bs y$. In this section we will work in the unrescaled magnetization $\widetilde{\bs m}$ and demag field $\widetilde{\bs h}^\eps_{\widetilde{\bs m}}$. We define
\begin{align}
\widehat{\bs m \,}^\eps(y_3) := \dashint_{\omega_\eps} \widetilde{\bs m \,}^\eps(\bs y_p,y_3) \ \bs d \bs  y_p. \label{unscaledaverage} 
\end{align}
Note on rescaling $\widetilde{\bs m \,}^\eps $ as in \eqref{vectorscale} $\widehat{\bs m \,}^\eps$ also corresponds to the cross-sectional average defined in \eqref{csavg} i.e., $ \  \
\widehat{\bs m \,}^\eps = \mbox{\fontsize{9}{8}\selectfont $\displaystyle{ \dashint_{\omega} } $ } \bs m^\eps(\bs x_p,x_3) \ \bs d \bs  x_p $. Thus proposition \ref{proposition2.1.1}  on $\widehat{\bs m \,}^\eps$ gives 
\begin{align} 
&\big\Vert \widehat{\bs m \,}^\eps \big\Vert^2_{L^2(0,1)} \le |\omega|^{-1} \big\Vert \bs m^\eps \big\Vert^2_{L^2(\Omega)}, \hspace{20mm} \big\Vert \partial^{\bm y}_3\widehat{\bs m \,}^\eps \big\Vert^2_{L^2(0,1)} \le |\omega|^{-1} \big\Vert \partial_3 \bs m^\eps \big\Vert^2_{L^2(\Omega)}, \nonumber \\
& \big\Vert \widehat{\bs m \,}^\eps \big\Vert^2_{H^1(0,1)} = \big\Vert \widehat{\bs m \,}^\eps \big\Vert^2_{L^2(0,1)}+\big\Vert \partial^{\bm y}_3\widehat{\bs m \,}^\eps \big\Vert^2_{L^2(0,1)}  \le \frac{1}{|\omega|} \big( \ \big\Vert \bs m^\eps \big\Vert^2_{L^2(\Omega)}+\big\Vert \partial_3 \bs m^\eps \big\Vert^2_{L^2(\Omega)}\big). \label{widetildemepsmeps1} \end{align}
Let $\widetilde{\bs h}^\eps_{\widehat{\bs m \,}^\eps} $ solve Maxwell equation for $\widehat{\bs m \,}^\eps $ on $\Omega_\eps$. We first prove the following Lemma to estimate the difference in magnetostatic energy between $\widetilde{\bs m \,}^\eps$ and $\widehat{\bs m \,}^\eps $.
\newtheorem{lemA}{Lemma}[section]
\begin{lemA}
\label{lemmaA.1.1}
The following inequality holds:
\beq \frac{1}{8 \pi \eps^2} \Big| \ \big\Vert \widetilde{\bs h}^\eps_{\widetilde{\bs m \,}^\eps} \big\Vert^2_{L^2(\mathbb R^3)} - \big\Vert \widetilde{\bs h}^\eps_{\widehat{\bs m \,}^\eps} \big\Vert^2_{L^2(\mathbb R^3)} \Big| \le 
D_2 \big\Vert \bs m^\eps \big\Vert_{L^2(\Omega)} \ \big\Vert \nabla_p \bs m^\eps \big\Vert_{L^2(\Omega)}.   \nonumber
\eeq
\end{lemA}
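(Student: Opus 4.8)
The plan is to exploit the fact that the magnetostatic energy is a non-negative quadratic form in the magnetization. For a source $\bs n$ supported on $\Omega_\eps$ let $\widetilde{\bs h}^\eps_{\bs n}$ be the solution of Maxwell's equations and set $Q(\bs n) := \frac{1}{8\pi}\int_{\mathbb R^3}|\widetilde{\bs h}^\eps_{\bs n}|^2\,\bs d\bs y$. Since Maxwell's equations are linear in the source, the map $\bs n \mapsto \widetilde{\bs h}^\eps_{\bs n}$ is linear, so $Q$ is a non-negative quadratic form with associated symmetric bilinear form $B(\bs n_1,\bs n_2)=\frac{1}{8\pi}\int_{\mathbb R^3}\widetilde{\bs h}^\eps_{\bs n_1}\cdot\widetilde{\bs h}^\eps_{\bs n_2}$. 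First I would record the bilinear identity $Q(\bs a)-Q(\bs b)=B(\bs a-\bs b,\bs a+\bs b)$ (the cross terms cancel by symmetry) and apply it with $\bs a=\widetilde{\bs m}^\eps$ and $\bs b=\widehat{\bs m}^\eps$, so that the quantity to be estimated equals $|B(\widetilde{\bs m}^\eps-\widehat{\bs m}^\eps,\,\widetilde{\bs m}^\eps+\widehat{\bs m}^\eps)|$.

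Because $Q\ge 0$, the Cauchy--Schwarz inequality for the bilinear form $B$ gives $|B(\bs a,\bs b)|\le \sqrt{Q(\bs a)}\,\sqrt{Q(\bs b)}$, which reduces the problem to bounding the two factors $\sqrt{Q(\widetilde{\bs m}^\eps-\widehat{\bs m}^\eps)}$ and $\sqrt{Q(\widetilde{\bs m}^\eps+\widehat{\bs m}^\eps)}$ separately. For both I would invoke the standard demagnetization estimate \eqref{demagstandardbound}, namely $Q(\bs n)\le \frac{1}{2}\int_{\Omega_\eps}|\bs n|^2$. For the sum factor, the triangle inequality together with the Jensen bound $\|\widehat{\bs m}^\eps\|_{L^2(\Omega_\eps)}\le \|\widetilde{\bs m}^\eps\|_{L^2(\Omega_\eps)}$ (cf. Proposition \ref{proposition2.1.1}) yields $\sqrt{Q(\widetilde{\bs m}^\eps+\widehat{\bs m}^\eps)}\le \sqrt{2}\,\|\widetilde{\bs m}^\eps\|_{L^2(\Omega_\eps)}$.

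The crux is the difference factor, and this is where the gradient and the correct power of $\eps$ enter. Since $\widetilde{\bs m}^\eps(\cdot,y_3)-\widehat{\bs m}^\eps(y_3)$ has zero mean over each cross-section $\omega_\eps$, the Poincar\'e--Wirtinger inequality applies slice by slice on $\omega_\eps=B_\eps(0)$; by scaling from the unit disk its constant is $C_P\eps^2$, so integrating over $y_3$ gives $\int_{\Omega_\eps}|\widetilde{\bs m}^\eps-\widehat{\bs m}^\eps|^2\le C_P\eps^2\int_{\Omega_\eps}|\nabla_p^{\bs y}\widetilde{\bs m}^\eps|^2$. Passing to the rescaled variables via $\bs y_p=\eps\bs x_p$, one has $\int_{\Omega_\eps}|\widetilde{\bs m}^\eps|^2\,\bs d\bs y=\eps^2\|\bs m^\eps\|^2_{L^2(\Omega)}$ while $\int_{\Omega_\eps}|\nabla_p^{\bs y}\widetilde{\bs m}^\eps|^2\,\bs d\bs y=\|\nabla_p\bs m^\eps\|^2_{L^2(\Omega)}$. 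Combining the two factor bounds produces $|Q(\widetilde{\bs m}^\eps)-Q(\widehat{\bs m}^\eps)|\le \sqrt{C_P}\,\eps^2\,\|\bs m^\eps\|_{L^2(\Omega)}\|\nabla_p\bs m^\eps\|_{L^2(\Omega)}$, and dividing by $\eps^2$ gives the claimed inequality with $D_2=\sqrt{C_P}$.

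The main obstacle is the correct bookkeeping of the $\eps$-powers through the cross-sectional Poincar\'e inequality and the change of variables: the $\eps^2$ gained from Poincar\'e on $B_\eps(0)$ is exactly what compensates the $\eps^{-2}$ prefactor, and the reciprocal constants $\tfrac{1}{\sqrt2}$ and $\sqrt2$ from the two demag bounds cancel, leaving a bound of the right homogeneity. A minor point to verify is the linearity step, which relies on $\widetilde{\bs h}^\eps_{\widehat{\bs m}^\eps}$ genuinely solving Maxwell's equations on $\Omega_\eps$ for the averaged source (as assumed in the statement), so that the additivity $\widetilde{\bs h}^\eps_{\bs a}+\widetilde{\bs h}^\eps_{\bs b}=\widetilde{\bs h}^\eps_{\bs a+\bs b}$ and hence the bilinear identity hold.
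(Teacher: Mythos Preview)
Your proof is correct and follows essentially the same approach as the paper: linearity of the magnetostatic map, the standard demag bound \eqref{demagstandardbound}, Poincar\'e--Wirtinger on each cross-section $\omega_\eps$, Jensen for $\|\widehat{\bs m}^\eps\|\le\|\widetilde{\bs m}^\eps\|$, and the rescaling identities. The only cosmetic difference is that the paper factors $\big|\|\widetilde{\bs h}_a\|^2-\|\widetilde{\bs h}_b\|^2\big|=\big|\|\widetilde{\bs h}_a\|-\|\widetilde{\bs h}_b\|\big|\cdot\big(\|\widetilde{\bs h}_a\|+\|\widetilde{\bs h}_b\|\big)$ and bounds the two factors separately, whereas you use the bilinear identity $Q(a)-Q(b)=B(a-b,a+b)$ together with Cauchy--Schwarz for $B$; the ingredients and the $\eps$-bookkeeping are identical.
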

\begin{proof}
First we recall the basic demagnetization energy bound in equation \eqref{demagstandardbound},
\begin{align}
\frac{1}{8 \pi} \big\Vert \widetilde{\bs h}^\eps_{\widetilde{\bs m \,}} \big\Vert^2_{L^2(\mathbb R^3)} \le \frac{1}{2} \big\Vert \widetilde{\bs m \,} \big\Vert^2_{L^2(\Omega_\eps)}, \qquad \forall \ \ \widetilde{\bs m \,} \in L^2(\Omega_\eps,m_s S^2).  \label{demagstandardbound4}
\end{align}
We have $ \ \big\Vert \widetilde{\bs m \,}^\eps -\widehat{\bs m \, }^\eps \big\Vert^2_{L^2(\omega^\eps)} \le D_1 \eps^2 \big\Vert \nabla^{\bm y}_p \widetilde{\bs m \,}^\eps \big\Vert^2_{L^2(\omega^\eps)} $ using Poincar\'e inequality on a cross-section plane $ \omega_\eps(y_3) $, which on integrating on $ y_3 \in (0,1) $ gives 
\begin{align}
\big\Vert \widetilde{\bs m \,}^\eps - \widehat{\bs m \,}^\eps \big\Vert^2_{L^2(\Omega_\eps )} \ < \ \mathit D_1 \, \eps^2 \, \big\Vert \nabla^{\bm y}_p \widetilde{\bs m \,}^\eps(\bs y) \big\Vert^2_{L^2(\Omega_\eps )} \nonumber.
\end{align}
Since $ \widetilde{\bs h}^\eps_{\widehat{\bs m \,}^\eps} $ satisfies Maxwell equation for $\widehat{\bs m \,}^\eps $, by linearity $ ( \, \widetilde{\bs h}^\eps_{\widetilde{\bs m \,}^\eps}-\widetilde{\bs h}^\eps_{\widehat{\bs m \,}^\eps} \, ) $ satisfies Maxwell equation for $ ( \ \widetilde{\bs m \,}^\eps-\widehat{\bs m \,}^\eps \ ).$ Then using basic bound eqn. \eqref{demagstandardbound4} for Maxwell equation we have,
\begin{align} \frac{1}{8 \pi} \big\Vert \widetilde{\bs h}^\eps_{\widetilde{\bs m \,}^\eps}-\widetilde{\bs h}^\eps_{\widehat{\bs m \,}^\eps} \big\Vert^2_{L^2(\mathbb R^3)} \le \frac{1}{2} \big\Vert \widetilde{\bs m \,}^\eps - \widehat{\bs m \,}^\eps \big\Vert^2_{L^2(\Omega_\eps )} < \frac{\mathit D_1}{2} \, \eps^2 \, \big\Vert \nabla^{\bm y}_p \widetilde{\bs m \,}^\eps \big\Vert^2_{L^2(\Omega_\eps )}. \nonumber \end{align}
Using triangle inequality we also have, 
\beq \Big| \ \big\Vert \widetilde{\bs h}^\eps_{\widetilde{\bs m \,}^\eps} \big\Vert_{L^2(\mathbb R^3)} -\big\Vert \widetilde{\bs h}^\eps_{\widehat{\bs m \,}^\eps} \big\Vert_{L^2(\mathbb R^3)} \ \Big| \le \big\Vert \widetilde{\bs h}^\eps_{\widetilde{\bs m \,}^\eps}-\widetilde{\bs h}^\eps_{\widehat{\bs m \,}^\eps} \big\Vert_{L^2(\mathbb R^3)} < \mathit D_2 \, \eps \, \big\Vert \nabla^{\bm y}_p \widetilde{\bs m \,}^\eps \big\Vert_{L^2(\Omega_\eps )} . \nonumber \eeq
Jensen's inequality gives $ |\widehat{\bs m \,}^\eps| \le |\widetilde{\bs m \,}^\eps| $ and using basic demag bound in eqn. \eqref{demagstandardbound4} again for $ \widetilde{\bs m \,}^\eps $ and $ \widehat{\bs m \,}^\eps $ we have,
\begin{align}
\frac{1}{8\pi}\Big| \ \big\Vert \widetilde{\bs h}^\eps_{\widetilde{\bs m \,}^\eps} \big\Vert_{L^2(\mathbb R^3)} + \big\Vert \widetilde{\bs h}^\eps_{\widehat{\bs m \,}^\eps} \big\Vert_{L^2(\mathbb R^3)} \ \Big| \le \frac{1}{2} \Big( \Vert \widetilde{\bs m \,}^\eps \Vert_{L^2(\Omega_\eps )} + \Vert \widehat{\bs m \,}^\eps \Vert_{L^2(\Omega_\eps )} \Big) \le \Vert \widetilde{\bs m \,}^\eps \Vert_{L^2(\Omega_\eps )} . \nonumber
\end{align} 
Then on combining the two and rescaling $\widetilde{\bs m \,}^\eps $ to $\bs m^\eps $ we have 
\begin{align} 
\Big| \, \big\Vert \widetilde{\bs h}^\eps_{\widetilde{\bs m \,}^\eps} \big\Vert^2_{L^2(\mathbb R^3)} - \big\Vert \widetilde{\bs h}^\eps_{\widehat{\bs m \,}^\eps} \big\Vert^2_{L^2(\mathbb R^3)} \Big| &= \Big| \, \big\Vert \widetilde{\bs h}^\eps_{\widetilde{\bs m \,}^\eps} \big\Vert_{L^2} - \big\Vert \widetilde{\bs h}^\eps_{\widehat{\bs m \,}^\eps} \big\Vert_{L^2} \Big| \ . \ \Big| \, \big\Vert \widetilde{\bs h}^\eps_{\widetilde{\bs m \,}^\eps} \big\Vert_{L^2} + \big\Vert \widetilde{\bs h}^\eps_{\widehat{\bs m \,}^\eps} \big\Vert_{L^2} \Big| \nonumber \\
&\le D_2 \, \eps \, \big\Vert \nabla^{\bm y}_p \widetilde{\bs m \,}^\eps \big\Vert_{L^2(\Omega_\eps )} \cdot \ 8 \pi \ \big\Vert \widetilde{\bs m \,}^\eps \big\Vert_{L^2(\Omega_\eps )} \nonumber \\
&= 8 \pi \ D_2 \, \eps^2 \ \big\Vert \bs m^\eps \big\Vert_{L^2(\Omega)} \ \big\Vert \nabla_p \bs m^\eps \big\Vert_{L^2(\Omega)} \, , \nonumber 
\end{align}
by noting that $\Vert \widetilde{\bs m \,}^\eps \Vert_{L^2(\Omega_\eps )} =\eps \Vert \bs m^\eps \Vert_{L^2(\Omega)}$ and $\Vert \nabla^{\bm y}_p \widetilde{\bs m \,}^\eps \Vert_{L^2(\Omega_\eps )} = \Vert \nabla_p \bs m^\eps \Vert_{L^2(\Omega)}$.
\end{proof}
\newtheorem{RemA}[theorem]{Remark}
\begin{RemA} \label{RemA1}
From \eqref{ge1} we know that the exchange energy of magnetization $\bs m^\eps$ is bounded as $K_5 \ge \frac{d}{\eps^2} \Vert \nabla_p \bs m^\eps \Vert_{L^2(\Omega)}$. 
Since $\mathcal E^\eps_d(\bs m) = \frac{1}{8 \pi} \Vert \bs h^\eps_{\bs m} \Vert^2_{L^2(\mathbb R^3)}= \frac{1}{8 \pi \eps^2} \Vert \widetilde{\bs h}^\eps_{\widetilde{\bs m \,}} \Vert_{L^2(\mathbb R^3)}$, we then get from the Lemma
\begin{align} 
\hspace{-5mm} \big| \, \mathcal E^\eps_d(\bs m^\eps) - \mathcal E^\eps_d(\widehat{\bs m \,}^\eps) \, \big| = \frac{1}{8 \pi \eps^2} \Big| \ \big\Vert \widetilde{\bs h}^\eps_{\widetilde{\bs m \,}^\eps} \big\Vert^2_{L^2(\mathbb R^3)} - \big\Vert \widetilde{\bs h}^\eps_{\widehat{\bs m \,}^\eps} \big\Vert^2_{L^2(\mathbb R^3)} \Big| & \le D_2 \Vert \bs m^\eps \Vert_{L^2(\Omega)} \, \Vert \nabla_p \bs m^\eps \Vert_{L^2(\Omega)} \nonumber \\
& \le D_2 \, m_s |\Omega|^{1/2} \, \sqrt{\frac{K_5}{d}}\eps =O(\eps). \nonumber
\end{align}
Thus the difference in magnetostatic energy between $ \mathcal E^\eps_d(\bs m^\eps) $ and $\mathcal E^\eps_d(\widehat{\bs m \,}^\eps) $ is of order $ O(\eps)$. We will see that for the convergence arguments it is enough to estimate $ \mathcal E^\eps_d(\widehat{\bs m \,}^\eps) =\frac{1}{8 \pi \eps^2}\big\Vert \widetilde{\bs h}^\eps_{\widehat{\bs m \,}^\eps} \big\Vert^2_{L^2(\mathbb R^3)}$. 
\end{RemA}
For $\widehat{\bs m \,}(y_3) $ note $ \nabla^{\bm y} \cdot \widehat{\bs m \,}(\bs y) = \partial^{\bm y}_3 \widehat{m \,}_3(y_3) $. It is well know that for magnetization $ \widehat{\bs m \,}^\eps$, the energy $\frac{1}{8 \pi \eps^2}\big\Vert \widehat{\bs h}^\eps_{\widehat{\bs m \,}^\eps} \big\Vert^2_{L^2(\mathbb R^3)} $ can be written as a convolution of fundamental solutions with $ \widehat{\bs m \,}^\eps$,

\begin{align} 
\frac{\eps^{-2}}{8 \pi} \big\Vert \widetilde{\bs h}^\eps_{\widehat{\bs m \,}^\eps} \big\Vert^2_{L^2} 
&= \frac{1}{2 } \int_{\Omega_\eps} \int_{\Omega_\eps} \hspace{-2mm} \frac{\nabla^{\bm y} \cdot \widehat{\bs m \,}^\eps(\bs y) \ \nabla^{\bm y} \cdot \widehat{\bs m \,}^\eps(\bs z) }{ \eps^2|\bs y-\bs z| } + \frac{1}{2} \int_{\partial \Omega_\eps} \int_{\partial \Omega_\eps} \hspace{-3mm}\frac{\widehat{\bs m \,}^\eps(\bs y) \cdot \widetilde{\bs n}(\bs y) \ \widehat{\bs m \,}^\eps(\bs z) \cdot \widetilde{\bs n}(\bs z) }{ \eps^2|\bs y-\bs z| } \nonumber \\
&-\int_{\Omega_\eps} \int_{\partial \Omega_\eps} \frac{\nabla^{\bm y} \cdot \widehat{\bs m \,}^\eps(\bs y) \ \widehat{\bs m \,}^\eps(\bs z) \cdot \widetilde{\bs n}(\bs z) }{ \eps^2|\bs y-\bs z| }  \nonumber \\
&= \frac{1}{2 } \int_{\Omega_\eps} \int_{\Omega_\eps} \hspace{-2mm}\frac{\partial^{\bm y}_3 \widehat{m \,}^\eps_3(y_3) \ \partial^{\bm y}_3 \widehat{m \,}^\eps_3(z_3) }{ \eps^2|\bs y-\bs z| } + \frac{1}{2} \int_{\partial \Omega_\eps} \int_{\partial \Omega_\eps} \hspace{-3mm}\frac{\widehat{\bs m \,}^\eps(y_3) \cdot \widetilde{\bs n}(\bs y) \ \widehat{\bs m \,}^\eps(z_3) \cdot \widetilde{\bs n}(\bs z) }{ \eps^2|\bs y-\bs z| } \nonumber \\
&-\int_{\Omega_\eps} \int_{\partial \Omega_\eps} \frac{\partial^{\bm y}_3 \widehat{m \,}^\eps_3(y_3) \ \widehat{\bs m \,}^\eps(z_3) \cdot \widetilde{\bs n}(\bs z) }{ \eps^2|\bs y-\bs z| }  \nonumber \\
&= \frac{1}{2} J^\eps_1(\widehat{\bs m \,}^\eps)+\frac{1}{2}J^\eps_2(\widehat{\bs m \,}^\eps)+ J^\eps_3(\widehat{\bs m \,}^\eps). \label{J1J2J3}
\end{align}
Note that $ J^\eps_1, J^\eps_2 $ and $ J^\eps_3 $ respectively represent that ``Bulk-Bulk", the ``Boundary-Boundary" and the ``Bulk-Boundary" terms of the magnetostatic energy. The body $\Omega_\eps=\omega_\eps \times (0,1) $ and the boundary $\partial \Omega_\eps$ can be decomposed as as $\partial \Omega_\eps= \big\{ \, \partial \omega_\eps \times (0,1) \, \big\} \bigcup \omega_\eps(y_3=0) \bigcup \omega_\eps(y_3=1)$.
\subsection{Estimates of $J^\eps_1(\widehat{\bs m \,}^\eps) \, ,J^\eps_2(\widehat{\bs m \,}^\eps) \, ,$ and $J^\eps_3(\widehat{\bs m \,}^\eps)$} \label{sec-A2}

The magnetostatic estimates in this section are inspired by similiar estimates in other works like \cite{kohn2005effective} and \cite{carbou2001thin}. 
We use the following integral inequality in this section: for arbitrary $a \neq b \in \mathbb R $ and $ q, L \in \mathbb R$ using the fact that $ q (q^2+L^2)^{-1/2} \le 1$ we have
\begin{align}
\int^b_a \frac{dq}{ \big\{ L^2+q^2 \big\}^{3/2} }=\frac{1}{L^2} \frac{q}{ ( L^2+q^2 )^{1/2} } \bigg|^b_a
= \frac{1}{L^2} \Big( \frac{b}{ ( L^2+b^2 )^{1/2} } - \frac{a}{ ( L^2+a^2 )^{1/2} } \Big)\le \frac{2}{L^2}. \label{sinetan} \end{align} 
We also need an estimate of the following term, where we use the change of variable $ \bs w_p = \bs y_p - \bs z_p $, $\bs d \bs w_p = \bs d \bs y_p$ to get, $ \ \big( \, \mbox{Recall } \omega_\eps \mbox{ is a ball of radius } \eps \mbox{ in 2-d } \Big) $
\begin{align}
\int_{\omega_\eps} \int_{\omega_\eps} \frac{\bs d \bs y_p \ \bs d \bs z_p}{|\bs y_p-\bs z_p|} = \int_{\omega_\eps} \bs d \bs z_p \int_{\omega_\eps - \bs z_p} \frac{\bs d \bs w_p}{|\bs w_p|} & \le \int_{\omega_\eps} \bs d \bs z_p \int_{\omega_{3\eps}} \frac{\bs d \bs w_p}{|\bs w_p|}
\nonumber \\
&=\int_{\omega_\eps} \bs d \bs z_p \int^{2\pi}_0 \int^{3\eps}_0 \frac{ |\bs w_p| \ \bs d( |\bs w_p|) \ d \theta}{|\bs w_p|} \nonumber \\
&= ( \, \pi \eps^2 \,) \  ( \, 2 \pi \, ) \ ( \, 3 \eps \, ) = 6 \pi^2 \eps^3, \label{radialformula} \end{align}
where we have used the fact that $ (\omega_\eps - \bs z_p) \subset \omega_{3 \eps} $ for $ \bs z_p \in \omega_\eps $.
Henceforth we drop the $ \bm y $ superscript on the derivative operator. 

Also note that if $ \widehat{\bs m \,}^\eps \in H^1(0,1)$, Sobolev embedding gives along with \eqref{widetildemepsmeps1}
\begin{align}
\sup_{y_3} \ \big|\widehat{\bs m \,}^\eps(y_3) \, \big| \ \le \ D_3 \, \Vert \widehat{\bs m \,}^\eps \Vert_{H^1(0,1)} \ \le \ \frac{D_3}{|\omega|}\big( \ \Vert \bs m^\eps \Vert_{L^2(\Omega)}+\big\Vert \partial_3 \bs m^\eps \big\Vert_{L^2(\Omega)}\big).
\label{sobolevembed}\end{align} 
\newtheorem{prop1}{Proposition}[section]
\begin{prop1}
\label{proposition3.1.1}
\begin{align} \big| \, J^\eps_1(\widehat{\bs m \,}^\eps) \, \big| \ \le \ D_4 \eps \big( \, \big\Vert \bs m^\eps \big\Vert^2_{L^2(\Omega)}+ \big\Vert \partial_3 \bs m^\eps \big\Vert^2_{L^2(\Omega)} \, \big). \nonumber \end{align}  
\end{prop1}
\begin{proof} 
Recalling definition of $J^\eps_1 $ from Equation \eqref{J1J2J3} and noting $ |\bs y_p-\bs z_p| \le |\bs y-\bs z| $ we have
\begin{align}
\big|\eps^2 J^\eps_1(\widehat{\bs m \,})\, \big| &\le \int_{\Omega_\eps} \int_{\Omega_\eps} \hspace{-2mm}\frac{|\partial^{\bm y}_3 \widehat{m \,}^\eps_3(y_3) \ \partial^{\bm y}_3 \widehat{m \,}^\eps_3( z_3)| }{ |\bs y-\bs z| } \bs d \bs y \bs d \bs z \le \int^1_0 \int^1_0 \int_{\omega_\eps} \int_{\omega_\eps} \hspace{-2mm} \frac{ |\partial^{\bm y}_3 \widehat{m \,}^\eps_3(y_3) \ \partial^{\bm y}_3 \widehat{m \,}^\eps_3(z_3)| }{|\bs y_p-\bs z_p|} 
\nonumber \\ 
&=\int^1_0 \int^1_0 \big| \partial^{\bm y}_3 \widehat{m \,}^\eps_3(y_3) \ \partial^{\bm y}_3 \widehat{m \,}^\eps_3( z_3) \big| \int_{\omega_\eps} \int_{\omega_\eps} \frac{\bs d \bs y_p \bs d \bs z_p}{|\bs y_p-\bs z_p|} \le D_4 \eps^3 \big\Vert \partial^{\bm y}_3 \widehat{m \,}^\eps_3 \big\Vert^2_{L^2(0,1)} \nonumber  
\end{align}
where we have used H$\ddot{\mbox{o}}$lder's inequality on the term $\mbox{\fontsize{9}{8}\selectfont $\displaystyle{ \int^1_0 \int^1_0 } $ } \big| \partial^{\bm y}_3 \widehat{m \,}_3(y_3) \ \partial^{\bm y}_3 \widehat{m \,}_3( z_3) \big| dy_3 dz_3$ and equation \eqref{radialformula} in the last step. Using equation \eqref{widetildemepsmeps1} we get our result.
\end{proof}
\begin{prop1}
\label{proposition3.1.2}
\begin{align} \big| \, J^\eps_3(\widehat{\bs m \,}^\eps) \, \big| \le  D_5 \eps \big( \, \big\Vert \bs m^\eps \big\Vert^2_{L^2(\Omega)}+ \big\Vert \partial_3 \bs m^\eps \big\Vert^2_{L^2(\Omega)} \, \big). \nonumber\end{align}  
\end{prop1}
\begin{proof} Recalling definition of $J^\eps_1 $ from Equation \eqref{J1J2J3}, we split of $ J^\eps_3(\widehat{\bs m \,}^\eps) $ into 2 parts,
\begin{align}
-\eps^2J^\eps_3(\widehat{\bs m \,}) &= \int_{\omega_\eps} \int^1_0 \int_{\partial \omega_\eps} \int^1_0 \frac{ \widehat{\bs m \,}^\eps(z_3) \cdot \widetilde{\bs n}(\bs z_p) }{ |\bs y-\bs z| }\partial^{\bm y}_3 \widehat{m \,}^\eps_3(y_3) + \int_{\omega_\eps} \int^1_0 \partial^{\bm y}_3 \widehat{m \,}^\eps_3(y_3) \ \times  \nonumber \\
& \hspace{10mm} \Big[ \int_{\omega_\eps(0)} \frac{ \widehat{\bs m \,}^\eps(z_3=0) \cdot \widetilde{\bs n}(\bs z) }{ |\bs y-\bs z| } + \int_{\omega_\eps(1)} \frac{ \widehat{\bs m \,}^\eps(z_3=1) \cdot \widetilde{\bs n}(\bs z) }{ |\bs y-\bs z| } \Big] =: \eps^2 J^\eps_{31}+ \eps^2 J^\eps_{32}  \nonumber
\end{align}
with $\eps^2 J^\eps_{31}(\widehat{\bs m \,}^\eps)$ being first term and $\eps^2 J^\eps_{32}(\widehat{\bs m \,}^\eps)$ is the remaining term of the R.H.S. For $ J^\eps_{31} $ using divergence theorem on $ \partial \omega_\eps(z_3) $ gives,
\begin{align} \eps^2 J^\eps_{31}(\widehat{\bs m \,}^\eps) &= \int_{\omega_\eps} \int^1_0 \int_{\partial \omega_\eps} \int^1_0 \frac{ \widehat{\bs m \,}^\eps(z_3) \cdot \widetilde{\bs n}(\bs z_p) }{ |\bs y-\bs z| } \partial^{\bm y}_3 \widehat{m \,}^\eps_3(y_3) \, \bs d \bs y_p \bs d \bs \sigma(\bs z_p) d y_3 d z_3 \nonumber \\
&= \int_{\omega_\eps} \int^1_0 \int^1_0 \partial^{\bm y}_3 \widehat{m \,}^\eps_3(y_3) \ \bs d \bs y_p d y_3 d z_3 \int_{\omega_\eps} \nabla^{\bm z}_p \cdot \Big( \ \frac{ \widehat{\bs m \,}^\eps(z_3) }{ |\bs y-\bs z| } \ \Big) \ \bs d \bs z_p \nonumber \\
&= \int_{\omega_\eps} \int^1_0 \int^1_0 \int_{\omega_\eps} \partial^{\bm y}_3 \widehat{m \,}^\eps_3(y_3) \frac{\widehat{\bs m \,}^\eps(z_3) \cdot (\bs y_p-\bs z_p) }{ \big\{ \, |\bs y_p-\bs z_p|^2+(y_3-z_3)^2 \ \big\}^{3/2} } \bs d \bs y \bs d \bs z. \label{divergence}
\end{align}
Setting $ q =(z_3-y_3) $ and $ dz_3=dq$ gives,
\begin{align}
\big|\eps^2 J^\eps_{31}(\widehat{\bs m \,}^\eps)\big| &\le \sup_{z_3} |\widehat{\bs m \,}^\eps(z_3)| \, \int_{\omega_\eps} \int_{\omega_\eps} \int^1_0 \big| \partial^{\bm y}_3 \widehat{m \,}^\eps_3(y_3) \big| \ \int^{1-y_3}_{-y_3} \hspace{-2mm} \frac{|\, \bs y_p-\bs z_p \,| }{ \big\{ \, |\bs y_p-\bs z_p|^2+q^2 \ \big\}^{3/2} } dq. \nonumber
\end{align}
Using equation \eqref{sinetan} on the inner integral gives
\begin{align}
\big|\eps^2 J^\eps_{31}(\widehat{\bs m \,}^\eps)\big|&\le 2 \sup_{z_3} \big| \widehat{\bs m \,}^\eps(z_3) \big| \ \int_{\omega_\eps} \int_{\omega_\eps} \int^1_0  \frac{ | \partial^{\bm y}_3 \widehat{m \,}^\eps_3(y_3)| }{ \ |\bs y_p-\bs z_p| \ } \nonumber \\
&= 2 \sup_{z_3} \big| \widehat{\bs m \,}^\eps(z_3) \big| \ \Bigg\{ \, \int^1_0 | \partial^{\bm y}_3 \widehat{m \,}^\eps_3(y_3) | \ d y_3 \Bigg\} \ \Bigg\{ \ \int_{\omega_\eps} \int_{\omega_\eps} \frac{ \bs d \bs y_p \bs d \bs z_p }{ | \bs y_p-\bs z_p |} \ \Bigg\} \nonumber \\
&\le D_6 \eps^3 \ \sup_{z_3} \big| \widehat{\bs m \,}^\eps(z_3) \big| \ \big\Vert \partial^{\bm y}_3 \widehat{\bs m \,}^\eps \big\Vert_{L^2(0,1)} \le D_7 \eps^3 \big( \big\Vert \bs m^\eps \big\Vert^2_{L^2(\Omega)}+ \big\Vert \partial_3 \bs m^\eps \big\Vert^2_{L^2(\Omega)}\big), \nonumber \end{align} 
using equations \eqref{radialformula}, \eqref{sobolevembed} and \eqref{widetildemepsmeps1}. Also we estimate $J^\eps_{32}(\widehat{\bs m\,}^\eps)$ as
\begin{align} 
\eps^2 \big|J^\eps_{32}\big| &\le \big( |\widehat{m \,}^\eps_3(0)|+|\widehat{m \,}^\eps_3(1)| \big) \int_{\omega_\eps} \int^1_0 \int_{\omega_\eps} \Bigg[ \frac{ | \partial^{\bm y}_3 \widehat{m \,}^\eps_3(y_3) | }{ \sqrt{ |\bs y_p - \bs z_p|^2 + y_3 ^2 }} + \frac{ | \partial^{\bm y}_3 \widehat{m \,}^\eps_3(y_3) | }{ \sqrt{ |\bs y_p - \bs z_p|^2 + (1-y_3) ^2 }} \Bigg]  \nonumber \\
& \le 4 \ \sup_{z_3} \big| \widehat{\bs m \,}^\eps(z_3) \big| \ \Bigg\{ \ \int^1_0 | \partial^{\bm y}_3 \widehat{m \,}^\eps_3(y_3) | d y_3 \Bigg\} \ \Bigg\{ \ \int_{\omega_\eps}  \int_{\omega_\eps} \frac{\bs d \bs y_p \bs d \bs z_p }{ |\bs y_p - \bs z_p|} \ \Bigg\} \nonumber \\
&\le D_8 \eps^3 \sup_{z_3} \big| \widehat{\bs m \,}^\eps(z_3) \big|  \ \big\Vert \partial^{\bm y}_3 \widehat{\bs m \,}^\eps \big\Vert_{L^2(0,1)} \le D_9 \eps^3 \big( \big\Vert \bs m^\eps \big\Vert^2_{L^2(\Omega)}+ \big\Vert \partial_3 \bs m^\eps \big\Vert^2_{L^2(\Omega)}\big) \nonumber
\end{align} 
again using equations \eqref{radialformula}, \eqref{sobolevembed} and \eqref{widetildemepsmeps1}. Combining estimates for $ J^\eps_{31}(\widehat{\bs m \,}^\eps) $ and $ J^\eps_{32}(\widehat{\bs m \,}^\eps) $ we get our result.
\end{proof}
Recalling $ J^\eps_2(\widehat{\bs m \,}^\eps)$ from eqn. \eqref{J1J2J3} we write $ J^\eps_2 = J^\eps_{21} + J^\eps_{22} + J^\eps_{23} +J^\eps_{24} $ where,
\begin{align} \eps^2J^\eps_{21} &= \int_{\partial \omega_\eps} \int^1_0 \int_{\partial \omega_\eps} \int^1_0 \frac{\widehat{\bs m \,}^\eps(y_3) \cdot \widetilde{\bs n}(\bs y) \ \widehat{\bs m \,}^\eps(z_3) \cdot \widetilde{\bs n}(\bs z) }{ \big\{ \ | \bs y_p-\bs z_p |^2 + (y_3-z_3)^2 \ \big\}^{1/2} }, \nonumber \\
\eps^2J^\eps_{22} &= \int_{\omega_\eps(0)} \int_{\omega_\eps(0)} \hspace{-5mm}\frac{\widehat{\bs m \,}^\eps(0) \cdot \widetilde{\bs n}(\bs y_p) \ \widehat{\bs m \,}^\eps(0) \cdot \widetilde{\bs n}(\bs z_p) }{ | \bs y_p-\bs z_p | } + \int_{\omega_\eps(1)} \int_{\omega_\eps(1)} \hspace{-5mm} \frac{\widehat{\bs m \,}^\eps(1) \cdot \widetilde{\bs n}(\bs y_p) \ \widehat{\bs m \,}^\eps(1) \cdot \widetilde{\bs n}(\bs z_p) }{ | \bs y_p-\bs z_p |}, \nonumber \\
\eps^2J^\eps_{23} &= 2\int_{\omega_\eps(0)} \int_{\omega_\eps(1)} \frac{\widehat{\bs m \,}^\eps(0) \cdot \widetilde{\bs n}(\bs y_p) \ \widehat{\bs m \,}^\eps(1) \cdot \widetilde{\bs n}(\bs z_p) }{ \big\{ \ | \bs y_p-\bs z_p |^2+1 \ \big\}^{1/2} } \mbox{ \ \ and, }  \nonumber \\
\frac{\eps^2J^\eps_{24}}{2} &= \int_{\partial \omega_\eps} \int^1_0 \Bigg[ \int_{\omega_\eps(0)} \hspace{-5mm}\frac{\widehat{\bs m \,}^\eps(y_3) \cdot \widetilde{\bs n}(\bs y) \ \widehat{\bs m \,}^\eps(0) \cdot \widetilde{\bs n}(\bs z_p) }{ \big\{ \ | \bs y_p-\bs z_p |^2 + y_3^2 \ \big\}^{1/2} } + \int_{\omega_\eps(1)} \hspace{-1mm} \frac{\widehat{\bs m \,}^\eps(y_3) \cdot \widetilde{\bs n}(\bs y) \ \widehat{\bs m \,}^\eps(1) \cdot \widetilde{\bs n}(\bs z_p) }{ \big\{ \, | \bs y_p-\bs z_p |^2 + (1-y_3)^2 \, \big\}^{1/2} } \Bigg] \nonumber.
\end{align}
Noting that $ |\widehat{\bs m \,}^\eps(t) \cdot \widetilde{\bs n}(\bs z)|=|\widehat{m \,}_3^\eps(t)| \le \sup_{z_3} |\widehat{\bs m \,}^\eps(z_3)|$ for $t=0 $ and $t=1 $ we have using equations \eqref{radialformula} and \eqref{sobolevembed},
\begin{align} \hspace{-5mm}
\eps^2J^\eps_{22}(\widehat{\bs m \,}^\eps) &= \sup_{z_3} \big|\widehat{\bs m \,}^\eps(z_3) \big|^2 \int_{\omega_\eps} \int_{\omega_\eps} \hspace{-1mm}\frac{\bs d \bs y_p \ \bs d \bs z_p}{ |\bs y_p - \bs z_p |} \le D_{10}\eps^3 \big( \big\Vert \bs m^\eps \big\Vert^2_{L^2(\Omega)}+\big\Vert \partial_3 \bs m^\eps \big\Vert^2_{L^2(\Omega)} \big). \label{J22form} \end{align} \begin{align}
\eps^2J^\eps_{24}(\widehat{\bs m \,}^\eps) &= 2\int_{\partial \omega_\eps} \int^1_0 \int_{\omega_\eps} \frac{-\widehat{m \,}^\eps_3(0) \ \widehat{\bs m \,}^\eps(y_3) \cdot \widetilde{\bs n}(\bs y) }{ \big\{ \ | \bs y_p-\bs z_p |^2 + y_3^2 \ \big\}^{1/2} } + \frac{\widehat{m \,}^\eps_3(1) \ \widehat{\bs m \,}^\eps(y_3) \cdot \widetilde{\bs n}(\bs y) }{ \big\{ \, | \bs y_p-\bs z_p |^2 + (1-y_3)^2 \, \big\}^{1/2} } . \label{J24}
\end{align}
Note $ \big( \ | \bs y_p-\bs z_p |^2 + 1 \ \big)^{-\frac{1}{2}} \le 1$. Then eqn. \eqref{sobolevembed} gives
\begin{align} \Big| \eps^2\frac{J^\eps_{23}}{2}(\widehat{\bs m \,}^\eps) \Big| \ & \le \ \big| \, \widehat{m \,}^\eps_3(0) \widehat{m \,}^\eps_3(1) \, \big| \ \Bigg\{ \ \int_{\omega_\eps} \int_{\omega_\eps} \bs d \bs y_p \bs d \bs z_p \Bigg\} = \pi^2 \eps^4 \sup_{z_3} \big|\widehat{\bs m \,}^\eps(z_3) \big|^2\nonumber \\
& \le \ D_{11} \eps^4 \big( \ \big\Vert \bs m^\eps \big\Vert^2_{L^2(\Omega)}+\big\Vert \partial_3 \bs m^\eps \big\Vert^2_{L^2(\Omega)} \big). \label{J23form}
\end{align}   
\begin{prop1}
\label{proposition3.1.3}
\begin{align} \big| \, J^\eps_{24}(\widehat{\bs m \,}^\eps) \, \big| \le D_{12} \eps \big( \, \big\Vert \bs m^\eps \big\Vert^2_{L^2(\Omega)}+\big\Vert \partial_3 \bs m^\eps \big\Vert^2_{L^2(\Omega)} \, \big). \nonumber \end{align}
\end{prop1}
\begin{proof} As for term the $J^\eps_{31}$ in Proposition \ref{proposition3.1.2}, first using divergence theorem in $J^\eps_{24}$ from \eqref{J24} on $ \partial \omega_\eps(y_3) $ we get
\begin{align}
 \eps^2J^\eps_{24}(\widehat{\bs m \,}^\eps) &= 2 \int_{\omega_\eps} \int_{\omega_\eps} \int^1_0 \Bigg\{ \ \frac{\widehat{\bs m \,}^\eps(y_3) \cdot (\bs y_p-\bs z_p) \ \widehat{m \,}^\eps_3(0) }{ \big\{ \ | \bs y_p-\bs z_p |^2 + y_3^2 \big\}^{3/2} \ } - \frac{\widehat{\bs m \,}^\eps(y_3) \cdot (\bs y_p-\bs z_p) \ \widehat{m \,}^\eps_3(1) }{ \big\{ \ | \bs y_p-\bs z_p |^2 + (1-y_3)^2 \big\}^{3/2} \ } \Bigg\}  \nonumber.
\end{align}
Then using $ \widehat{\bs m \,}^\eps(y_3) \cdot (\bs y_p-\bs z_p) \le |\bs y_p-\bs z_p| \ \sup_{y_3} |\widehat{\bs m \,}^\eps(y_3)|$ and \eqref{sinetan} we get,
\begin{align}
\big| \eps^2 J^\eps_{24} \big| & \le 2 \sup_{y_3} |\widehat{\bs m \,}^\eps(y_3)|^2 \ \int_{\omega_\eps} \int_{\omega_\eps} \Bigg| \int^1_0 \hspace{-2mm} \frac{|\bs y_p-\bs z_p|\ d y_3}{ \big\{| \bs y_p-\bs z_p |^2 + y_3^2 \big\}^{\frac{3}{2}} } - \int^1_0 \hspace{-2mm} \frac{ |\bs y_p-\bs z_p| \ d (1-y_3) }{ \big\{| \bs y_p-\bs z_p |^2 + (1-y_3)^2 \big\}^{\frac{3}{2}} } \Bigg| \nonumber \\
&\le 8 \sup_{y_3} |\widehat{\bs m \,}^\eps(y_3)|^2 \ \int_{\omega_\eps} \int_{\omega_\eps} \frac{\bs d \bs y_p \bs d \bs z_p}{ |\bs y_p-\bs z_p| } = D_{12} \eps^3 \big( \big\Vert \bs m^\eps \big\Vert^2_{L^2(\Omega)}+\big\Vert \partial_3 \bs m^\eps \big\Vert^2_{L^2(\Omega)} \big).  \nonumber
\end{align}
and eqns. \eqref{sobolevembed} and \eqref{radialformula} in the last step.
\end{proof}
We will now show that $ J^\eps_{21}(\widehat{\bs m \,}^\eps)$ is the largest term in the magnetostatic terms. It contributes energy of $ O(1) $ which appears in the first limit problem $ \mathcal I_0 $. We split $ J^\eps_{21}(\widehat{\bs m \,}^\eps) $ as follows:
\begin{align}
 J^\eps_{21}(\widehat{\bs m \,}^\eps) &= \int_{\partial \omega_\eps} \int^1_0 \int_{\partial \omega_\eps} \int^1_0 \frac{\widehat{\bs m \,}^\eps(y_3) \cdot \widetilde{\bs n}(\bs y_p) \ \widehat{\bs m \,}^\eps(z_3) \cdot \widetilde{\bs n}(\bs z_p) }{\eps^2|\bs y-\bs z|} \nonumber \\
&= \int_{\partial \omega_\eps} \int^1_0 \int_{\partial \omega_\eps} \int^1_0 \frac{\widehat{\bs m \,}^\eps(y_3) \cdot \widetilde{\bs n}(\bs y_p)\widehat{\bs m \,}^\eps(y_3) \cdot \widetilde{\bs n}(\bs z_p)}{\eps^2|\bs y-\bs z|} \nonumber \\
& \hspace{14mm} -\int_{\partial \omega_\eps} \int^1_0 \int_{\partial \omega_\eps} \int^1_0 \frac{\widehat{\bs m \,}^\eps(y_3) \cdot \widetilde{\bs n}(\bs y_p) \ ( \widehat{\bs m \,}^\eps(y_3) - \widehat{\bs m \,}^\eps(z_3) ) \cdot \widetilde{\bs n}(\bs z_p) }{\eps^2 |\bs y-\bs z|}\nonumber \\
&= J^\eps_{211}(\widehat{\bs m \,}^\eps)+J^\eps_{212}(\widehat{\bs m \,}^\eps). \nonumber
\end{align}
Next we show the following proposition. 
\begin{prop1}
\label{proposition3.1.4} 
\begin{align} \big| \, J^\eps_{212}(\widehat{\bs m \,}^\eps) \, \big| \le D_{13} \eps^{3/4} \ \big( \ \big\Vert \bs m^\eps \big\Vert^2_{L^2(\Omega)}+\big\Vert \partial_3 \bs m^\eps \big\Vert^2_{L^2(\Omega)} \big). \nonumber \end{align}  
\end{prop1}
\begin{proof}
Using Divergence theorem in $\bs y_p$ variable as in \eqref{divergence} and Fubini's theorem we get,
\begin{align} \eps^2 J^\eps_{212} &=\int_{\partial \omega_\eps} \int^1_0 \int^1_0 \int_{\partial \omega_\eps} \frac{ \widehat{\bs m \,}^\eps(y_3) \cdot \widetilde{\bs n}(\bs y_p)  }{\sqrt{ |\bs y_p -\bs z_p|^2 + (y_3-z_3)^2}} \ ( \widehat{\bs m \,}^\eps(z_3) - \widehat{\bs m \,}^\eps(y_3))  \cdot \widetilde{\bs n}( \bs z_p) \bs d \bs \sigma(\bs y_p)  \nonumber \\
&= \int_{\partial \omega_\eps} \int^1_0 \int^1_0 \int_{\omega_\eps} \frac{\widehat{\bs m \,}^\eps(y_3) \cdot (\bs z_p-\bs y_p ) }{ \big\{ \, |\bs y_p -\bs z_p|^2 + (y_3-z_3)^2 \ \big\}^{3/2} } \ ( \widehat{\bs m \,}^\eps(z_3) - \widehat{\bs m \,}^\eps(y_3)) \cdot \widetilde{\bs n}( \bs z_p)  \, \bs d \bs y_p \nonumber \\
&= \int_{\partial \omega_\eps} \int_{\omega_\eps} \widetilde{\bs n}( \bs z_p) \cdot \int^1_0 \widehat{\bs m \,}^\eps(y_3) \cdot ( \bs z_p-\bs y_p ) \int^1_0  \frac{ \widehat{\bs m \,}^\eps(z_3) - \widehat{\bs m \,}^\eps(y_3) }{ \big\{ \, |\bs y_p -\bs z_p|^2 + (y_3-z_3)^2 \ \big\}^{3/2} } \, d z_3. \nonumber 
\end{align}
Now note that $\frac{|\bs y_p - \bs z_p|}{\big\{ \, |\bs y_p -\bs z_p|^2 + (y_3-z_3)^2 \ \big\}^{1/2}} \le 1 $ and $ |\widetilde{\bs n}( \bs z_p)|=1$ which gives
\begin{align}
\big|\eps^2 J^\eps_{212}(\widehat{\bs m \,}^\eps(y_3)) \big| & \le \sup_{y_3} \big|\widehat{\bs m \,}^\eps(y_3)\big| \ \ \Bigg\{ \int_{\partial \omega_\eps} \int_{\omega_\eps} \int^1_0 \int^1_0  \frac{ |\widehat{\bs m \,}^\eps(z_3) - \widehat{\bs m \,}^\eps(y_3)| }{ \big\{ \, |\bs y_p -\bs z_p|^2 + (y_3-z_3)^2 \ \big\} } \, d z_3  \Bigg\} . \label{sobolevpartial}
\end{align}
Note that,
\begin{align} \frac{1}{ \big\{ \, |\bs y_p -\bs z_p|^2 + (y_3-z_3)^2 \ \big\} } \le \frac{1}{|\bs y_p -\bs z_p|^{1/4} } \frac{1}{|y_3-z_3|^{7/4} } 
\end{align}
Then 
\begin{align}
\int^1_0 \int^1_0  \frac{ |\widehat{\bs m \,}^\eps(z_3) - \widehat{\bs m \,}^\eps(y_3)| d z_3 d y_3}{ \big\{ \, |\bs y_p -\bs z_p|^2 + (y_3-z_3)^2 \ \big\} } &\le \frac{1}{ |\bs y_p -\bs z_p|^{1/4} }  \int^1_0 \int^1_0  \frac{ |\widehat{\bs m \,}^\eps(z_3) - \widehat{\bs m \,}^\eps(y_3)| }{|y_3-z_3|^{7/4} } \, d z_3 d y_3 \nonumber \\
&\le \frac{1}{ |\bs y_p -\bs z_p|^{1/4} } \big\Vert \partial^{\bm y}_3 \widehat{\bs m \,}^\eps(y_3) \big\Vert_{L^1(0,1)} \label{sobolevpartial1}
\end{align}
because of the fact that $\mbox{\fontsize{9}{8}\selectfont $\displaystyle{ \int^1_0 \int^1_0 } $ } \frac{ |\widehat{\bs m \,}^\eps(z_3) - \widehat{\bs m \,}^\eps(y_3)| }{|y_3-z_3|^{7/4} } \, d z_3 d y_3 $ denotes the  seminorm in the fractional Sobolev space $W^{\frac{3}{4},1}(0,1)$ and by the continuous embedding of $W^{1,1}(0,1) \subset W^{\frac{3}{4},1}(0,1)$. The integral above cannot be bounded by norm in $W^{1,1}$ alone unless $\widehat{\bs m \,}^\eps$ is a constant, which is shown by the surprising result Proposition 1 in \cite{brézis2002recognize}. Also note using H$\ddot{\mbox{o}}$lder inequality
\begin{align}
\big\Vert \partial^{\bm y}_3 \widehat{\bs m \,}^\eps\big\Vert_{L^1(0,1)} = \int^1_0 \big| \partial^{\bm y}_3 \widehat{\bs m \,}^\eps \big| \chi_{(0,1)} dy_3 \le \big\Vert \chi_{(0,1)} \big\Vert_{L^2(0,1)} \big\Vert \partial^{\bm y}_3 \widehat{\bs m \,}^\eps\big\Vert_{L^2(0,1)} =\big\Vert \partial^{\bm y}_3 \widehat{\bs m \,}^\eps\big\Vert_{L^2(0,1)}. \nonumber
\end{align} 
Then using \eqref{sobolevpartial1} in eqn. \eqref{sobolevpartial} along with the above result we get,
\begin{align}
\big|\eps^2 J^\eps_{212}(\widehat{\bs m \,}^\eps(y_3)) & \le  \sup_{y_3} \big| \widehat{\bs m \,}^\eps(y_3) \big| \ \ \Bigg\{ \int_{\partial \omega_\eps} \int_{\omega_\eps} \int^1_0 \int^1_0  \frac{ |\widehat{\bs m \,}^\eps(z_3) - \widehat{\bs m \,}^\eps(y_3)| }{ \big\{ \, |\bs y_p -\bs z_p|^2 + (y_3-z_3)^2 \ \big\} } \, d z_3 \Bigg\} \nonumber \\
& \le \sup_{y_3} \big| \widehat{\bs m \,}^\eps(y_3) \big|  \ . \ \big\Vert \partial^{\bm y}_3 \widehat{\bs m \,}^\eps\big\Vert_{L^2(0,1)} \ \ \Bigg\{ \int_{\partial \omega_\eps} \int_{\omega_\eps}\frac{\bs d \bs y_p \bs d \bs \sigma(\bs z_p)}{ |\bs y_p -\bs z_p|^{1/4} } \Bigg\}  \nonumber \\
&= D_{14} \ \eps^2 \ \eps^{3/4} \ \, \big( \ \big\Vert \bs m^\eps \big\Vert^2_{L^2(\Omega)}+\big\Vert \partial_3 \bs m^\eps \big\Vert^2_{L^2(\Omega)} \big)\nonumber
\end{align}
using calculation like in eqn. \eqref{radialformula} to get the $\eps^2 \eps^{3/4}$ term and eqns. \eqref{sobolevembed} and \eqref{widetildemepsmeps1}. 
\end{proof}
In 2-dimensional micromagnetics on a domain $ \Psi \in \mathbb R^2 $ for a constant magnetization $ \bs m \in H^1(\Psi,m_sS^2),$ the demagnetization field is given by,
\begin{align}
\bs h_{\bs m}(\bs x) = \int_{\partial \Psi} \frac{\bs x - \bs y}{|\bs x - \bs y|^2} \bs m \cdot \bs n(\bs y) \, \bs d \bs y \label{2Dfield}
\end{align}   
and magnetostatic energy is given by,
\begin{align}
\mathcal E_{2d} = \int_{\Psi} \int_{\partial \Psi}  \ \bs m \cdot \frac{\bs x - \bs y}{|\bs x - \bs y|^2} \  \ \bs m \cdot \bs n(\bs y) \, \bs d \bs y \label{2Denergy}.
\end{align}   
\begin{prop1}
\label{proposition3.1.5} 
\begin{align} \Big| \ J^\eps_{211}(\widehat{\bs m \,}^\eps) -2\pi |\omega_\eps| \int^1_0 \big|\widehat{\bs m \,}^\eps_p(y_3) \big|^2 dy_3 \ \Big| \le D_{15} \ \eps \big( \big\Vert \bs m^\eps \big\Vert^2_{L^2(\Omega)}+\big\Vert \partial_3 \bs m^\eps \big\Vert^2_{L^2(\Omega)} \big). \nonumber \end{align}
\end{prop1}
\begin{proof}
Using the Divergence theorem on $ \bs z_p $ as in \eqref{divergence} and a subsequent change of variables $ q(z_3)=z_3-y_3 $, followed by \eqref{sinetan} (as in Proposition \ref{proposition3.1.2}) we get
\begin{align} \hspace{-10mm} \int_{\partial \omega_\eps} \int^1_0 \int^1_0 & \int_{\partial \omega_\eps} \hspace{-3mm} \frac{\widehat{\bs m \,}^\eps(y_3) \cdot \widetilde{\bs n}(\bs y_p)\widehat{\bs m \,}^\eps(y_3) \cdot \widetilde{\bs n}(\bs z_p)}{\sqrt{ |\bs y_p-\bs z_p|^2 + (y_3-z_3)^2}}=\int_{\partial \omega_\eps} \int^1_0 \int^1_0 \int_{\omega_\eps} \hspace{-2mm} \frac{\widehat{\bs m \,}^\eps(y_3) \cdot \widetilde{\bs n}(\bs y_p)\widehat{\bs m \,}^\eps(y_3) \cdot (\bs y_p-\bs z_p)}{\big\{ \ |\bs y_p-\bs z_p|^2 + (y_3-z_3)^2 \ \big\}^{3/2} } \nonumber \\
&= \int_{\partial \omega_\eps} \int^1_0 \int_{\omega_\eps} \widehat{\bs m \,}^\eps(y_3) \cdot \widetilde{\bs n}(\bs y_p) \int^{1-y_3}_{-y_3} \frac{ \widehat{\bs m \,}^\eps(y_3) \cdot (\bs y_p-\bs z_p) }{\big\{ \ |\bs y_p-\bs z_p|^2 + q^2 \ \big\}^{3/2}  } d q \nonumber \\
&= \int_{\partial \omega_\eps} \int^1_0 \int_{\omega_\eps} \frac{ \widehat{\bs m \,}^\eps(y_3) \cdot (\bs y_p-\bs z_p) }{ |\ \bs y_p-\bs z_p \ |^2 } \ \Bigg\{ \ \frac{ y_3 \ \widehat{\bs m \,}^\eps(y_3) \cdot \widetilde{\bs n}(\bs y_p) }{\sqrt{ y_3^2+|\bs y_p-\bs z_p|^2}} + \frac{ (1-y_3) \ \widehat{\bs m \,}^\eps(y_3) \cdot \widetilde{\bs n}(\bs y_p) }{\sqrt{(1-y_3)^2+|\bs y_p-\bs z_p|^2}} \ \Bigg\} \nonumber \\
& =: \eps^2 \big( \ J^\eps_{2111}(\widehat{\bs m \,}^\eps)+J^\eps_{2112}(\widehat{\bs m \,}^\eps) \ \big) \nonumber
\end{align}
where $\eps^2 J^\eps_{2111}(\widehat{\bs m \,}^\eps)=\mbox{\fontsize{9}{8}\selectfont $\displaystyle{ \int_{\partial \omega_\eps} \int^1_0 \int_{\omega_\eps} } $ } \frac{ \widehat{\bs m \,}^\eps(y_3) \cdot (\bs y_p-\bs z_p) }{ |\ \bs y_p-\bs z_p \ |^2 } \ \ \frac{ y_3 \ \widehat{\bs m \,}^\eps(y_3) \cdot \widetilde{\bs n}(\bs y_p) }{\sqrt{ y_3^2+|\bs y_p-\bs z_p|^2}} $ and $\eps^2 J^\eps_{2111}(\widehat{\bs m \,}^\eps)$ the remaining term. Set $ J^\eps_0(\widehat{\bs m \,}^\eps)$ as
\begin{align}
J^\eps_0(\widehat{\bs m \,}^\eps) &:= \int^1_0 \int_{\partial \omega_\eps} \int_{\omega_\eps}  \widehat{\bs m \,}^\eps(y_3) \cdot \widetilde{\bs n}(\bs y_p) \frac{ \widehat{\bs m \,}^\eps(y_3) \cdot (\bs y_p-\bs z_p) }{ |\bs y_p-\bs z_p|^2  }  \bs d \bs \sigma(\bs y_p) \bs d \bs z_p d y_3, \nonumber 
\end{align} 
Let $ \ R :=\max \ 2 \eps^{-1} \, |\bs x_p - \bs y_p|, \  \big( \, \bs z_p \in \omega_\eps, \bs y_p \in \partial \omega_\eps \, \big)$, and 
note 
\begin{align}
1-\frac{y_3}{ \sqrt{ \, y_3^2+|\bs y_p-\bs z_p|^2}} \ \le \ \  \begin{cases} \frac{|\bs y_p-\bs z_p|^2}{2y_3^2}, \quad &\mbox{for $y_3 \ge R \eps$ }, \\
1 & \mbox{for $y_3 \le R \eps$ }. \end{cases}
\end{align}
Noting that $|\widetilde{\bs n}|=1$, $\ |\widehat{\bs m \,}^\eps(y_3) \cdot \widetilde{\bs n}(\bs y_p)| \le |\widehat{\bs m \,}^\eps(y_3)|$ and $|\widehat{\bs m \,}^\eps(y_3) \cdot (\bs y_p-\bs z_p)| \le |\widehat{\bs m \,}^\eps(y_3)| \, |\bs y_p-\bs z_p|$,
\begin{align}
\int^1_0 & \frac{ \widehat{\bs m \,}^\eps(y_3) \cdot (\bs y_p-\bs z_p) }{ |\ \bs y_p-\bs z_p \ |^2 } \widehat{\bs m \,}^\eps(y_3) \cdot \widetilde{\bs n}(\bs y_p) \Big( 1-\frac{ y_3 }{\sqrt{ y_3^2+|\bs y_p-\bs z_p|^2}}\Big) \, d y_3 \nonumber \\
&\le \int^{R \eps}_0  \frac{ |\widehat{\bs m \,}^\eps(y_3)|^2 }{ |\ \bs y_p-\bs z_p \ | } \Big| 1-\frac{ y_3 }{\sqrt{ y_3^2+|\bs y_p-\bs z_p|^2}}\Big| + \int^1_{R \eps} \frac{ |\widehat{\bs m \,}^\eps(y_3)|^2 }{ |\ \bs y_p-\bs z_p \ | } \Big| 1-\frac{ y_3 }{\sqrt{ y_3^2+|\bs y_p-\bs z_p|^2}}\Big| \nonumber \\
&\le \int^{R \eps}_0 \frac{ |\widehat{\bs m \,}^\eps(y_3)|^2 }{ |\ \bs y_p-\bs z_p \ | } d y_3 + \int^1_{R \eps} \widehat{\bs m \,}^\eps(y_3)|^2 \ \Bigg( \ \frac{ |\bs y_p-\bs z_p|  }{2y_3^2} \ \Bigg) d y_3\nonumber \\
& \le\sup_{y_3 \in (0,1)} |\widehat{\bs m \,}^\eps(y_3)|^2 \ \ \Bigg\{ \ \int^{R \eps}_0 \frac{ d y_3}{ |\ \bs y_p-\bs z_p \ | } + \int^1_{R \eps} \frac{|\bs y_p-\bs z_p|}{2y_3^2} d y_3 \ \Bigg\}. \nonumber
\end{align}
Using above result and noting that $ \partial_3 \big( y_3^{-1} \big) = -y_3^{-2}$ we get
\begin{align}
\eps^2 |J^\eps_0-J^\eps_{2111}| &\le \Bigg|\int_{\partial \omega_\eps}  \int_{\omega_\eps} \int^1_0  \frac{ \widehat{\bs m \,}^\eps(y_3) \cdot (\bs y_p-\bs z_p) }{ |\ \bs y_p-\bs z_p \ |^2 } \widehat{\bs m \,}^\eps(y_3) \cdot \widetilde{\bs n}(\bs y_p) \Big( 1-\frac{ y_3 }{\sqrt{ y_3^2+|\bs y_p-\bs z_p|^2}}\Big) \Bigg| \nonumber \\
&\le \sup_{y_3} |\widehat{\bs m \,}^\eps(y_3)|^2 \int_{\partial \omega_\eps}  \int_{\omega_\eps} \Bigg\{ \int^{R \eps}_0 \frac{d y_3}{ |\ \bs y_p-\bs z_p \ | } + \int^1_{R \eps} \frac{|\bs y_p-\bs z_p|}{2y_3^2} d y_3 \Bigg\} \nonumber \\
&\le \sup_{y_3} |\widehat{\bs m \,}^\eps(y_3)|^2 \int_{\partial \omega_\eps}  \int_{\omega_\eps} \Bigg\{ \frac{R\eps }{|\bs y_p-\bs z_p|} - \frac{|\bs y_p-\bs z_p|}{2} \, \frac{1}{y_3}  \Bigg|^1_{R\eps} \ \Bigg\} \nonumber \\ 
&\le \sup_{y_3} |\widehat{\bs m \,}^\eps(y_3)|^2 \int_{\partial \omega_\eps}  \int_{\omega_\eps} \Big\{ \frac{R\eps }{|\bs y_p-\bs z_p|} - \frac{|\bs y_p-\bs z_p|}{2} +\frac{|\bs y_p-\bs z_p|}{2R\eps} \ \Big\}\bs d \bs \sigma(\bs y_p) \bs d \bs z_p \nonumber
\end{align}
Note from equation \eqref{radialformula}, the term $ \mbox{\fontsize{8}{8}\selectfont $\displaystyle{ \int_{\omega_\eps} \frac{1}{|\bs y_p-\bs z_p|} } $ } \bs d \bs z_p = D_{15}\eps$. So the first integral above is $ R\eps \mbox{\fontsize{8}{8}\selectfont $\displaystyle{ \int_{\partial \omega_\eps} \int_{\omega_\eps} \frac{1 }{|\bs y_p-\bs z_p|} \bs d \bs z_p } $ } \approx D_{16} \eps^3$. The second integrand is $O(\eps)$ and so its integral is of $O(\eps^4) $. The third integrand is bounded by $ 1$, since by definition $R\eps \ge |\bs y_p -\bs z_p|$. So the third integral $ \mbox{\fontsize{8}{8}\selectfont $\displaystyle{ \int_{\partial \omega_\eps} \int_{\omega_\eps} \frac{|\bs y_p-\bs z_p|}{R \eps} \bs d \bs z_p } $ } \approx D_{17} \eps^3$. So $ |J^\eps_0-J^\eps_{2111}| \le D_{18} \eps \ \sup_{y_3} |\widehat{\bs m \,}^\eps(y_3)|^2 $. $J^\eps_{2112}$ can be treated the same way to give the result on using eqn \eqref{widetildemepsmeps1}
\begin{align} \big| \ J^\eps_{211}(\widehat{\bs m \,}^\eps)-2J^\eps_0(\widehat{\bs m \,}^\eps) \ \big| \le D_{18} \eps \ \sup_{y_3} |\widehat{\bs m \,}^\eps(y_3)|^2 \le D_{19} \eps \big( \big\Vert \bs m^\eps \big\Vert^2_{L^2(\Omega)}+\big\Vert \partial_3 \bs m^\eps \big\Vert^2_{L^2(\Omega)} \big). \nonumber 
\end{align} 
We get our result noting that $J^\eps_0(\widehat{\bs m \,})$ is exactly the 2-D magnetostatic energy $\mathcal E_{2d} $ defined in \eqref{2Denergy} and for a circular cross-section $\omega_\eps$ it is well know that
\begin{align}
\mathcal E_{2d}(\widehat{\bs m \,}^\eps) = J^\eps_0(\widehat{\bs m \,}^\eps)= \pi |\omega_\eps| \int^1_0 \big|\widehat{\bs m \,}^\eps_p(y_3) \big|^2 dy_3 = \eps^2 \pi |\omega| \int^1_0  \big| \widehat{\bs m \,}^\eps_p(x_3) \big|^2 dx_3. \nonumber
\end{align}
 \end{proof}
\subsection{Final Estimate for $\mathcal E^\eps_d(\bs m^\eps)$}
The excange energy of $\bs m^\eps$ is bounded by equation \eqref{ge1}, $ \frac{K_5}{d} \ > \ \eps^{-2} \big\Vert {\nabla_p \bs m^\eps} \big\Vert^2_{L^2(\Omega)} + \big\Vert \partial_3 { \bs m^\eps} \big\Vert^2_{L^2(\Omega)} $.
Using Remark \ref{RemA1} we get first,
\begin{align}
\mathcal E^\eps_d(\bs m^\eps) - \mathcal E^\eps_d(\widehat{\bs m \,}^\eps) = O(\eps). \nonumber \end{align}
Combining Propositions \ref{proposition3.1.1} , \ref{proposition3.1.2} , \ref{proposition3.1.3} , \ref{proposition3.1.4} , \ref{proposition3.1.5} and and equations \eqref{J22form} and \eqref{J23form} we get,
\begin{align}
\mathcal E^\eps_d(\widehat{\bs m \,}^\eps) - \pi |\omega| \int^1_0 \big|\widehat{\bs m \,}^\eps_p(y_3) \big|^2 dy_3
&= \big( O(\eps) + O(\eps^{3/4}) \big) \Big( \big\Vert \bs m^\eps \big\Vert^2_{L^2(\Omega)}+\big\Vert \partial_3 \bs m^\eps \big\Vert^2_{L^2(\Omega)}\Big) \nonumber. \nonumber
\end{align} 
Combining the two we get 
\begin{align}
\mathcal E^\eps_d(\bs m^\eps)- \pi |\omega| \int^1_0 \big|\widehat{\bs m \,}^\eps_p(y_3) \big|^2 dy_3=O(\eps) +O(\eps^{3/4}). \label{gammafirstdemag1}
\end{align}
\begin{RemA} \label{RemA3}
The above result \eqref{gammafirstdemag1} is true for any magnetization $ \bs m$ as long as the magnetization satisfies the exchange bound $ K_5 \ge \eps^{-2} \big\Vert \nabla_p \bs m \big\Vert^2_{L^2(\Omega)} + \big\Vert \partial_3 \bs m \big\Vert^2_{L^2(\Omega)}$. 
\end{RemA}
Let $\widetilde{\bs m \,}^o$ be a constant vector in $m_sS^2$. If $\bs m^o$ is the rescaled version of $\widetilde{\bs m \,}^o$, recall the result in equation \eqref{magnetostaticM} gives,
\begin{align}
\mathcal E^\eps_d(\bs m^o) =\eps^{-2} E_{demag}&=\pi^2 |\bs m^o_p|^2 - \eps \frac{8\pi}{3} \Big( \ |\bs m^o_p|^2 -2 |m^o_3|^2 \Big) + \pi^2 \eps^2 \Big( \frac{|\bs m^o_p|^2}{2} - |m^o_3|^2 \Big) \nonumber \\ 
&=\pi^2 |\bs m^o_p|^2+ \eps Q_1 + \eps^2 Q_2, \nonumber
\end{align}
where we define $Q_1$ and $Q_2$ as in equation \eqref{magnetostaticM}. 

\begin{prop1}
\label{proposition3.1.8}
Let $ \widetilde{\bs m \,}^o$ be a constant on $m_sS^2$ and $ H^1(\Omega_\eps;m_sS^2) \ni \widetilde{\bs m \,}^\eps=\widetilde{\bs m \,}^o + \widetilde{\bs M}^\eps $. Then the following holds in terms of the rescaled magnetizations $ \big( \bs m^\eps, \bs m^o, \bs M^\eps\big)$, 
\begin{align} \frac{d}{\eps^2} \big\Vert \nabla_p & \bs m^\eps \big\Vert_{L^2(\Omega)}+\Lambda \big( \, \big\Vert \bs M^\eps \big\Vert^2_{L^2(\Omega)}+\big\Vert \partial_3 \bs M^\eps \big\Vert^2_{L^2(\Omega)} \, \big)+\mathcal E^\eps_d(\bs m^\eps) -\pi \int_{\Omega} \big| \bs m^\eps_p \big|^2 \nonumber \\
&\ge \frac{d}{2\eps^2} \big\Vert \nabla_p \bs M^\eps \big\Vert_{L^2(\Omega)}+\frac{\Lambda}{2} \big( \, \big\Vert \bs M^\eps \big\Vert^2_{L^2(\Omega)}+\big\Vert \partial_3 \bs M^\eps \big\Vert^2_{L^2(\Omega)} \, \big)+ \eps Q_1 + \eps^2 Q_2-D_{18} \eps^2. \nonumber\end{align} 
\end{prop1}
\begin{proof} First note $\widetilde{\bs h}^\eps_{\widetilde{\bs m \,}^o}$ as given in \eqref{magnetostaticM} on rescaling to $\bs h^\eps_{\bs m^o}$ gives
\begin{align}
\bs h^\eps_{\bs m^o} = -2 \pi \begin{bmatrix} \bs m^o_p \\ 0 \end{bmatrix} + \frac{16\eps}{3} \begin{bmatrix} \bs m^o_p \\ -2m^o_3 \end{bmatrix}+ \pi \eps^2 \begin{bmatrix} \bs m^o_p \\-2 m^o_3 \end{bmatrix}. \nonumber
\end{align}
Note $\nabla_p \bs m^\eps = \nabla_p (\bs m^o + \bs M^\eps)=\nabla_p \bs M^\eps$ as $\bs m^o $ is constant. Lemma \ref{lemmaA.1.1} gives along with Young's inequality gives
\begin{align}
\frac{d}{\eps^2} \big\Vert \nabla_p \bs m^\eps \big\Vert^2_{L^2(\Omega)} + \mathcal E^\eps_d(\bs M^\eps) -\mathcal E^\eps_d(\widehat{\bs M}^\eps) &\ge \frac{d}{\eps^2} \big\Vert \nabla_p \bs M^\eps \big\Vert^2_{L^2(\Omega)} -D_0 \big\Vert \bs M^\eps \big\Vert_{L^2(\Omega)} \ \big\Vert \nabla_p \bs M^\eps \big\Vert_{L^2(\Omega)} \nonumber \\
&\ge \frac{d}{2\eps^2} \big\Vert \nabla_p \bs M^\eps \big\Vert^2_{L^2(\Omega)} - \frac{D_0\eps^2}{2d} \big\Vert \bs M^\eps \big\Vert^2_{L^2(\Omega)}.  \nonumber \end{align}
Using propositions \ref{proposition3.1.1} , \ref{proposition3.1.2} , \ref{proposition3.1.3} , \ref{proposition3.1.4} , \ref{proposition3.1.5} and equations \eqref{J22form} and \eqref{J23form} we get
\begin{align} \hspace{-6mm}
\mathcal E^\eps_d(\widehat{\bs M}^\eps)-\pi \int_{\Omega} \big| \bs M^\eps_p \big|^2 &\ge \mathcal E^\eps_d(\widehat{\bs M}^\eps)-\pi \int_{\Omega} \big| \widehat{ \bs M \,}^\eps_p \big|^2 \ge -\big( D_{19}\eps+D_{20} \eps^{3/4} \big) \Big( \big\Vert \bs M^\eps \big\Vert^2_{L^2}+\big\Vert \partial_3 \bs M^\eps \big\Vert^2_{L^2} \Big)  \nonumber \\
&\ge -D_{20} \eps^{3/4} \Big( \big\Vert \bs M^\eps \big\Vert^2_{L^2(\Omega)}+\big\Vert \partial_3 \bs M^\eps \big\Vert^2_{L^2(\Omega)} \Big).  \nonumber
\end{align}
Adding the two together we get,
\begin{align}
\frac{d}{\eps^2}\big\Vert \nabla_p \bs m^\eps \big\Vert^2_{L^2(\Omega)} & + \mathcal E^\eps_d(\bs M^\eps) -\pi \int_{\Omega} \big| \bs M^\eps_p \big|^2 \nonumber \\
&\ge \frac{d}{2\eps^2}\big\Vert \nabla_p \bs M^\eps \big\Vert^2_{L^2(\Omega)} -D_{20} \eps^{3/4} \Big( \big\Vert \bs M^\eps \big\Vert^2_{L^2(\Omega)}+\big\Vert \partial_3 \bs M^\eps \big\Vert^2_{L^2(\Omega)} \Big). \label{A19} 
\end{align}
Note by the linearity of Maxwell's equation
\begin{align}
\frac{1}{8\pi} \int_{\mathbb R^3} \big| \widetilde{\bs h}^\eps_{\widetilde{\bs m \,}^\eps} \big|^2 \bs d \bs y = \frac{1}{8\pi} \int_{\mathbb R^3} \big| \widetilde{\bs h}^\eps_{\widetilde{\bs m \,}^o} \big|^2 \bs d \bs y+\frac{1}{8\pi}\int_{\mathbb R^3} \big| \widetilde{\bs h}^\eps_{\widetilde{\bs M}^\eps} \big|^2 \bs d \bs y-\int_{\Omega_\eps} \widetilde{\bs h}^\eps_{\widetilde{\bs m \,}^o} \cdot \widetilde{\bs M}^\eps \bs d \bs y.
\nonumber \end{align}
Dividing by $\eps^2,$ noting that $\mathcal E^\eps_d(\bs m^\eps)=\frac{1}{\eps^2}\frac{1}{8\pi} \int_{\mathbb R^3} \big| \widetilde{\bs h}^\eps_{\widetilde{\bs m \,}^\eps} \big|^2 \bs d \bs y $, and rescaling $\widetilde{\bs m \,}^\eps $ \& $ \widetilde{\bs M \,}^\eps$, 
\begin{align} \hspace{-5mm}
\mathcal E^\eps_d(\bs m^\eps) & = \mathcal E^\eps_d(\bs m^o)+\mathcal E^\eps_d(\bs M^\eps)-\int_{\Omega} \bs h^\eps_{\bs m^o} \cdot \bs M^\eps \bs d \bs x \nonumber \\
&= \mathcal E^\eps_d(\bs m^o)+\mathcal E^\eps_d(\bs M^\eps)+2\pi \int_{\Omega}\bs m^o_p \cdot \bs M_p^\eps \bs d \bs x +\frac{8\eps}{3} \int_{\Omega} \Big( \, \bs m^o_p \cdot \bs M^\eps_p -2m^o_3 M^\eps_3 \, \Big) \bs d \bs x \nonumber \\
& \qquad \qquad -\eps^2 \pi \int_{\Omega} \big( \bs m^o_p \cdot \bs M^\eps_p -2 m^o_3 M^\eps_3 \big) \bs d \bs x \label{mepsMeps} \\
&\ge \mathcal E^\eps_d(\bs m^o)+\mathcal E^\eps_d(\bs M^\eps)+2\pi \int_{\Omega}\bs m^o_p \cdot \bs M_p^\eps \bs d \bs x -D_{21} \eps \big\Vert \bs m_p^o \cdot \bs M_p^\eps \big\Vert_{L^1} -D_{22}\eps \big\Vert m_3^o M_3^\eps \big\Vert_{L^1} \nonumber \\
&\ge \mathcal E^\eps_d(\bs m^o)+\mathcal E^\eps_d(\bs M^\eps)+2\pi \int_{\Omega}\bs m^o_p \cdot \bs M_p^\eps \bs d \bs x -\frac{\Lambda}{4} \big\Vert \bs M^\eps \big\Vert^2_{L^2(\Omega)} -D_{23} \eps^2 \big\Vert \bs m^o \big\Vert^2_{L^2(\Omega)}, \nonumber
\end{align}
where we have used Young's inequality to bound the last two terms in the final step i.e. $D_{21} \eps \big| \bs m_p^o \cdot \bs M_p^\eps \big| \le \big( \, D_{21} \eps \frac{1}{\sqrt \Lambda} \big| \bs m_p^o \big| \, \big) . \ \big( \sqrt \Lambda \ \big| \bs M_p^\eps \big| \, \big) \le \frac{D^2_{21} \eps^2}{\Lambda} \big| \bs m_p^o \big|^2+\frac{\Lambda}{2}\big|\bs M_p^\eps \big|^2$ and similarly for the $m^o_3 M^\eps_3$ term. Also $\ \pi \mbox{\fontsize{9}{8}\selectfont $\displaystyle{ \int_{\Omega}}$} \big| \bs m^\eps_p \big|^2=\pi \mbox{\fontsize{9}{8}\selectfont $\displaystyle{ \int_{\Omega}}$} \big|\bs m^o_p \big|^2+\pi \mbox{\fontsize{9}{8}\selectfont $\displaystyle{ \int_{\Omega}}$} \big| \bs M^\eps_p \big|^2+2\pi \mbox{\fontsize{9}{8}\selectfont $\displaystyle{ \int_{\Omega}}$} \bs m^o_p \cdot \bs M_p^\eps$. Subtracting the two and noting $\big\Vert \bs m^o \big\Vert^2_{L^2(\Omega)}=m_s^2|\Omega| $ since $|\bs m^o|=m_s$, we get
\begin{align} \hspace{-7mm}
\mathcal E^\eps_d(\bs m^\eps) -\pi \int_{\Omega} \big| \bs m^\eps_p \big|^2 & \ge \eps Q_1 +\eps^2 Q_2 +\mathcal E^\eps_d(\bs M^\eps)-\pi \int_{\Omega} \big| \bs M^\eps_p \big|^2-\frac{\Lambda}{4} \big\Vert \bs M^\eps \big\Vert^2_{L^2(\Omega)} -D_{23} \eps^2 m_s^2 |\Omega|. \nonumber \end{align}
Using this and eqn. \eqref{A19} we get $\big($ note $ \nabla_p \bs m^o = \nabla_p \bs M^\eps$ $\big)$
\begin{align}
\frac{d}{\eps^2} & \big\Vert \nabla_p \bs m^\eps \big\Vert_{L^2(\Omega)}+\Lambda \big( \, \big\Vert \bs M^\eps \big\Vert_{L^2(\Omega)}+\big\Vert \partial_3 \bs M^\eps \big\Vert_{L^2(\Omega)} \, \big)+\mathcal E^\eps_d(\bs m^\eps) -\pi \int_{\Omega} \big| \bs m^\eps_p \big|^2 \nonumber \\
& \ge \frac{d}{2\eps^2}\big\Vert \nabla_p \bs M^\eps \big\Vert^2_{L^2(\Omega)} +\Lambda \big( \, \big\Vert \bs M^\eps \big\Vert^2_{L^2(\Omega)}+\big\Vert \partial_3 \bs M^\eps \big\Vert^2_{L^2(\Omega)} \, \big)+\eps Q_1 +\eps^2 Q_2 \nonumber \\
& \qquad -D_{20} \eps^{3/4} \Big( \big\Vert \bs M^\eps \big\Vert^2_{L^2(\Omega)}+\big\Vert \partial_3 \bs M^\eps \big\Vert^2_{L^2(\Omega)} \Big)-\frac{\Lambda}{4} \big\Vert \bs M^\eps \big\Vert^2_{L^2(\Omega)} -D_{23} \eps^2 m_s^2 |\Omega|\nonumber \\
& \ge \eps Q_1 +\eps^2 Q_2 +\frac{d}{2\eps^2}\big\Vert \nabla_p \bs M^\eps \big\Vert^2_{L^2(\Omega)}+\frac{\Lambda}{2} \big( \, \big\Vert \bs M^\eps \big\Vert^2_{L^2(\Omega)}+\big\Vert \partial_3 \bs M^\eps \big\Vert^2_{L^2(\Omega)} \, \big)-D_{18} \eps^2 \nonumber \end{align}
for $ \eps $ small enough.
\end{proof}
\begin{RemA} \label{RemA2}
If $ \bs m^o=(0,0,m_s)$, we get a simpler estimate than in above Proposition \ref{proposition3.1.8}  . Note $ \bs m^\eps=\bs m^o+\bs M^\eps$ gives $ |\bs m^\eps|^2=m_s^2=|\bs m^o|^2+|\bs M^\eps|^2+2 \bs m^o \cdot \bs M^\eps=m_s^2+|\bs M^\eps|^2+2 m_3^\eps  M_3^\eps$ which means $ -2 m_3^o  M_3^\eps= |\bs M^\eps|^2$ and $ \bs m_p^\eps \cdot \bs M_p^\eps = \bs 0 $. Substituting these in \eqref{mepsMeps} we get
\begin{align} \hspace{-5mm}
\mathcal E^\eps_d(\bs m^\eps) &= \mathcal E^\eps_d(\bs m^o)+\mathcal E^\eps_d(\bs M^\eps)+2\pi \int_{\Omega}\bs m^o_p \cdot \bs M_p^\eps \bs d \bs x +\frac{8\eps}{3} \int_{\Omega} \Big( \, \bs m^o_p \cdot \bs M^\eps_p -2m^o_3 M^\eps_3 \, \Big) \bs d \bs x \nonumber \\
& \qquad \qquad -\eps^2 \pi \int_{\Omega} \big( \bs m^o_p \cdot \bs M^\eps_p -2 m^o_3 M^\eps_3 \big) \bs d \bs x \nonumber \\
& =\mathcal E^\eps_d(\bs m^o)+\mathcal E^\eps_d(\bs M^\eps)+2\pi \int_{\Omega}\bs m^o_p \cdot \bs M_p^\eps \bs d \bs x + \big( \frac{8\eps}{3} -\eps^2 \pi \big) \int_{\Omega} \big| \bs M^\eps \big|^2 \bs d \bs x. \nonumber
\end{align}
Using the above and proceeding with the remaining part of the estimate in Proposition \ref{proposition3.1.8} we get the following result,
\begin{align}
\frac{d}{\eps^2} & \big\Vert \nabla_p \bs m^\eps \big\Vert_{L^2(\Omega)}+\Lambda \big( \, \big\Vert \bs M^\eps \big\Vert_{L^2(\Omega)}+\big\Vert \partial_3 \bs M^\eps \big\Vert_{L^2(\Omega)} \, \big)+\mathcal E^\eps_d(\bs m^\eps) -\pi \int_{\Omega} \big| \bs m^\eps_p \big|^2 \nonumber \\
& \ge \eps Q_1 +\eps^2 Q_2 +\frac{d}{2\eps^2}\big\Vert \nabla_p \bs M^\eps \big\Vert^2_{L^2(\Omega)}+\frac{\Lambda}{2} \big( \, \big\Vert \bs M^\eps \big\Vert^2_{L^2(\Omega)}+\big\Vert \partial_3 \bs M^\eps \big\Vert^2_{L^2(\Omega)} \, \big) \label{finalremark} \end{align}
for $ \eps $ small enough.
\end{RemA}

\bibliography{refs}
\bibliographystyle{jqt1999}
\end{document}